\theoremstyle{plain}
\newtheorem{thm}{Theorem}[section]
\newtheorem{lemma}[thm]{Lemma}
\newtheorem{cor}[thm]{Corollary}
\theoremstyle{definition}
\theoremstyle{remark}
\newtheorem{rmk}[thm]{Remark}
\newcommand{\F}{\mathbb{F}}
\newcommand{\bc}{\mathbf{c}}
\newcommand{\cC}{\mathcal{C}}
\newcommand{\supp}{\mathrm{supp}}
\newcommand{\E}{\mathbb{E}}
\newcommand{\wt}{\mathrm{wt}}
\newcommand{\rank}{\mathrm{rank}}
\newcommand{\Vol}{\mathrm{Vol}}
\begin{document}

\title{Improvement of the Gilbert-Varshamov Bound for Linear Codes and Quantum Codes
\thanks{
C. Yuan is with School of Computer Science, Shanghai Jiao Tong University. (Email: \href{chen_yuan@sjtu.edu.cn}{chen\_yuan@sjtu.edu.cn}) R. Zhu is with School of Computer , Shanghai Jiao Tong University. (Email: \href{sjtuzrq7777@sjtu.edu.cn}{sjtuzrq7777@sjtu.edu.cn})
}}

\author{Chen Yuan, Ruiqi Zhu}
\maketitle

\begin{abstract}
The \emph{Gilbert--Varshamov} (GV) bound is a central benchmark in coding theory, establishing existential guarantees for error-correcting codes and serving as a baseline for both Hamming and quantum fault-tolerant information processing. Despite decades of effort, improving the GV bound is notoriously difficult, and known improvements often rely on technically heavy arguments and do not extend naturally to the quantum setting due to additional self-orthogonality constraints.

In this work we develop a concise probabilistic method that yields an improvement over the classical GV bound for $q$-ary linear codes. For relative distance $\delta=d/n<1-1/q$, we show that an $[n,k,d]_q$ linear code exists
whenever
\[
\frac{q^{k}-1}{q-1}\;<\;\frac{c_\delta \sqrt{n}\, q^{n}}{\Vol_q(n,d-1)},
\]
for positive constant $c_\delta$ depending only on $\delta$, where $\Vol_q(n,d-1)$ denotes the volume of a $q$-ary Hamming ball.

We further adapt this approach to the quantum setting by analyzing symplectic self-orthogonal structures. For $\delta<1-1/q^2$, we obtain an improved quantum GV bound: there exists a
$q$-ary quantum code $[[n,\,n-k,\,d]]$ provided that
\[
\frac{q^{2n-k}-1}{q-1}
\;<\;
\frac{c_\delta \sqrt{n}\cdot q^{2n}}{\sum_{i=0}^{d-1}\binom{n}{i}(q^2-1)^i}.
\]
In particular, our result improves the standard quantum GV bound by an $\Omega(\sqrt{n})$ multiplicative factor.
\end{abstract}

\noindent\textbf{Index Terms:} Gilbert-Varshamov bound, linear code, Symplectic self-orthogonal, Quantum Gilbert-Varshamov bound, Symplectic.

\section{Introduction}
Error-correcting codes play central role in ensuring reliable data transmission and storage in the presence of noise. One of the foundational results in coding theory is the \emph{Gilbert-Varshamov} bound (GV bound), which establishes the existence of codes with asymptotically trade-offs between rate and minimum distance.

Let $A_q(n,d)$ denote the maximum number of codeswords in a code of length $n$ and minimum distance $d$ over $\F_q$. The GV bound, which asserts that
\begin{equation}
    A_q(n,d)\geq \frac{q^n}{\Vol_q(n,d-1)}
\label{eq:GV bound}
\end{equation}
where
\[
\Vol_q(n,d)=\sum_{i=0}^d \binom{n}{i}(q-1)^i
\]
is one of the most well-known and fundamental results in coding theory. This inequality was first established by Gilbert~\cite{gilbert1952comparison} in 1952. Subsequently, Varshamov~\cite{varshamov1957estimate} considered the case of linear code and, on this basis, proved that there exists $[n,k,d]_q$-linear code if
\begin{equation}
    q^k<\frac{q^n}{\Vol_q(n,d-1)}.
\end{equation}
This bound is used extensively in the coding theory literature~\cite{macwilliams1977theory,pless1998handbook}, and has been generalized to numerous contexts~\cite{gu1993generalized,marcus2002improved,tolhuizen2002generalized}. In addition, the Gilbert–Varshamov bound has connections to fast-encodable codes and local error-correction codes~\cite{brehm2025linear,gopi2018locally}. In the past few decades, many researchers have devoted considerable effort to constructing families of codes that meet the Gilbert–Varshamov bound and have obtained a wealth of results~\cite{sendrier2004linear,shum2002low,kasami1974gilbert}.
%In general, the constant $\delta =d/n$ can take any value less than or equal to one. However, it is well known and easy to prove that for $\delta \geq 1-1/q$, the volume $\Vol_q(n,d)$ is close to $q^n$, namely, $q^n\leq 2\Vol_q(n,d)$. In this case, the GV bound gives no useful information. Thus, the value $1-1/q$ serves as a natural threshold and we will assume $\delta<1-1/q$ in the (Hamming) \emph{Gilbert-Varshamov} bound.

Improving upon the \emph{Gilbert-Varshamov} bound is a notoriously difficult task~\cite{pless1998handbook}, \cite{tsfasman2013algebraic}. Over the decades, researchers have sought to improve, generalize, or tighten the GV bound in various directions. The breakthrough work of Tsfasman~\cite{1982Modular} led to an asymptotic improvements of \eqref{eq:GV bound}, but only for alphabets of size $q\geq 49$(see also the paper \cite{xing2003nonlinear}). In 2004, Jiang and Vardy~\cite{jiang2004asymptotic} improved the GV bound for nonlinear binary codes to
\[
A_2(n,d)\geq cn\frac{2^n}{\Vol_2(n,d-1)}
\]
for $d/n\leq 0.499$, where the constant $c$ depends only on the ratio $d/n$. Following the breakthrough on binary non-linear codes, Vu and Wu \cite{vu2005improving} extended the \emph{Gilbert-Varshamov} bound for $q$-ary codes, whereas Gaborit and Zemor \cite{gaborit2008asymptotic} showed that certain asymptotic families of linear binary $[n,\frac{n}{2}]$ random double circulant codes can achieve the same improved \emph{Gilbert-Varshamov} bound. Recently, by analyzing the moments of certain functionals associated with the code, Hao et, al. \cite{hao2022distribution} proved the \emph{Gilbert-Varshamov} bound for linear codes over $\F_q$ to
\[
q^k<cn^{1/2}\frac{q^n}{\Vol_q(n,d-1)}
\]
where $c>0$ may only depend on $d/n$ and the code rate $\frac{k}{n}$ can be taken for any constant value. We find that their argument is quite involved which relies on bounding the moment of random variables and the convergence of a Gumbel distribution. It is not clear if their method can be generalized straightforward to other metric.    

Quantum error-correcting codes play a central role in protecting quantum states against decoherence and noise, and are widely regarded as indispensable for reliable quantum information processing and communication~\cite{shor1995scheme,cao2021mds,leverrier2022quantum,knill1998resilient}. In the past few years, the theory of quantum error-correcting codes has been developed rapidly, attracting substantial attention from the coding theory community~\cite{kitaev2003fault,terhal2015quantum,bombin2006topological,panteleev2022asymptotically,movassagh2020constructing}. Various constructions are given through classical coding. 
However, it is still a great challenge to construct good quantum codes. Shor~\cite{shor1995scheme} gave the first construction of quantum codes. Subsequently, Steane~\cite{steane1996error} proposed that quantum codes can be constructed from classical self-orthogonal linear codes. Following these results, constructions of quantum codes through classical codes with certain self-orthogonality have been extensively studied and investigated. For instance, quantum codes can be obtained from Euclidean, Hermitian self-orthogonal, or Symplectic self-orthogonal codes~\cite{aly2007quantum,bierbrauer2000quantum,grassl1999quantum,ketkar2006nonbinary}.

While the classical \emph{Gilbert-Varshamov} bound serves as a fundamental benchmark for classical error-correcting codes, the pursuit of analogous results in the quantum setting has attracted considerable attention. Unlike the classical case, quantum codes must satisfy additional self-orthogonality constraints, making the direct extension of the GV bound to the quantum domain a highly nontrivial task. Over the past few years, the theory of quantum error-correcting codes has developed rapidly, and researchers have devoted substantial effort to constructing good quantum codes, in which the GV bound continues to play a crucial role \cite{ouyang2014concatenated,niehage2007nonbinary,feng2006asymptotic,guan2022construction}.

The first quantum GV bound was established by Calderbank and Shor as well as Steane~\cite{2002Quantum} for the binary case, marking the quantum analogue of the classical GV bound. Following this pioneering result, Ashikhmin and Knill~\cite{ashikhmin2002nonbinary} generalized the binary quantum GV bound to the $q$-ary case by employing the framework of stabilizer codes. Later, Feng and Ma~\cite{feng2004finite} derived a finite version of the GV bound for classical Hermitian self-orthogonal codes and subsequently applied it to quantum codes, leading to a tighter finite-length quantum GV bound. In 2011, Jin and Xing~\cite{jin2011quantum} established a GV-type bound for symplectic self-orthogonal codes and, based on this construction, obtained an Gilbert-Varshamov type bound for quantum codes. 

Despite these advancements, the quantum GV bound remains far from being tight. 
One interesting question is that whether we can improve this bound like the Hamming metric case.
Notably, the approach of~\cite{jiang2004asymptotic} fails to be generalized to quantum code as the quantum code is mostly obtained from linear codes in Hamming metric. Moreover, if we construct quantum code from classic Hamming code, we require that this Hamming code should be an orthogonal code. This also makes the approach of~\cite{gaborit2008asymptotic} inapplicable. We also note that the approach in~\cite{hao2022distribution} does not extend naturally to the quantum setting. This is because their method crucially relies on treating codewords of a random linear code as essentially independent, thereby approximating the minimum (Hamming) weight by the minimum of i.i.d. binomial random variables. 
In contrast, quantum codes are subject to additional orthogonality constraints, which induce inherent dependencies among the corresponding vectors and preclude such an independence-based approximation. Moreover, the minimum distance of quantum code is exactly the dual distance of corresponding orthogonal code which makes the argument more involved. Therefore, further improvement requires new probabilistic or algebraic approaches that can handle the inherent self-orthogonality constraints of quantum codes.

%\subsection{Related Works}

\subsection{Our Results}
In this work, we present a general framework for improving the \emph{Gilbert-Varshamov} bound which yields the improved GV bounds for both Hamming distance codes and the quantum codes.

\paragraph{Improvement of Hamming GV Bound.}

For Hamming distance, we first apply the Chernoff bound to show that when $\ell$ vectors $\vec{v}_1, \ldots, \vec{v}_\ell$ are randomly chosen from the Hamming ball $B_q(n,\delta n)$, the probability that their sum $\sum_{i=1}^{\ell} \vec{v}_i$ also lies in $B_q(n,\delta n)$ is at most $2^{-h_{\delta}n}$. In particular, we prove the following lemma.
\begin{lemma}\label{lem:1}
    Assume $d-1= \delta n$ with some constant $\delta\in (0,1-\frac{1}{q})$. Let $\vec{v}_1,\ldots \vec{v}_\ell$ be $n\geq \ell \geq 2$ random vectors distributed uniformly at random in the Hamming ball $B_q(n,\delta n)\subseteq \F_q^n$. Then, the probability that $\sum_{i=1}^{\ell}\vec{v}_i\in B_q(n,\delta n)$ is at most $2^{-h_{\delta}n}$ for some constant $h_{\delta}\in (0,1)$.
\end{lemma}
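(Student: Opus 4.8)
The plan is to establish the stronger claim that adding one fresh uniform ball vector to \emph{any} fixed vector of linear weight is exponentially unlikely to land back in the ball, and then to bootstrap over $\ell$ by induction on the number of summands, peeling off one $\vec v_i$ at a time. Concretely, I will show: there are constants $\eps>0$ and $c_0>0$ depending only on $\delta$ and $q$ such that
\[
\Pr\bigl[\wt(\vec b+\vec v)\le \delta n\bigr]\;\le\; 2^{-c_0 n}\qquad\text{for every fixed }\vec b\in\F_q^n\text{ with }\wt(\vec b)\ge(\delta-\eps)n,
\]
where $\vec v$ is uniform in $B_q(n,\delta n)$.

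\textbf{The core estimate.} Fix $\vec b$ with $\wt(\vec b)=a\ge(\delta-\eps)n$ and condition on the support $S=\supp(\vec v)$, writing $w=|S|=\wt(\vec v)\le\delta n$ and $s=|S\cap\supp(\vec b)|$. Since the entries of $\vec v$ on $S$ are i.i.d.\ uniform over $\F_q^\times$, the sum $\vec b+\vec v$ is nonzero on $\supp(\vec b)\setminus S$ and on $S\setminus\supp(\vec b)$, is zero off $\supp(\vec b)\cup S$, and on each of the $s$ common coordinates it vanishes independently with probability $\tfrac1{q-1}$. Hence the number $N_0$ of zero coordinates satisfies $N_0=(n-a-w+s)+\mathrm{Bin}\bigl(s,\tfrac1{q-1}\bigr)$ with $s\sim\mathrm{Hypergeom}(n,a,w)$, so $\E[N_0\mid w]=n-a-w\bigl(1-\tfrac{aq}{(q-1)n}\bigr)$ using $\E[s\mid w]=aw/n$. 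The weight $w$ is concentrated near its maximum: the ratio $\binom{n}{j+1}(q-1)^{j+1}/\binom{n}{j}(q-1)^{j}$ exceeds a fixed constant $>1$ for all $j\le\delta n$ when $\delta<1-\tfrac1q$, so $\Pr[\wt(\vec v)\le(\delta-\eps)n]\le 2^{-c_1 n}$ and we may restrict to $w\in[(\delta-\eps)n,\delta n]$. A direct inspection of the function $\E[N_0\mid w]$ (affine in $w$, monotone decreasing in $a$ on the relevant range) shows that its maximum over $a,w\in[(\delta-\eps)n,\delta n]$ is attained near $a=w=\delta n$, where it equals $n\bigl((1-\delta)^2+\tfrac{\delta^2}{q-1}\bigr)$; the inequality $(1-\delta)^2+\tfrac{\delta^2}{q-1}<1-\delta$ is, after the substitution $x=1-\delta$, the quadratic inequality $qx^2-(q+1)x+1<0$, whose roots are $\tfrac1q$ and $1$, hence it holds precisely when $\delta\in(0,1-\tfrac1q)$, with room to spare for $\eps$ small. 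Thus $\E[N_0\mid w]\le(1-\delta-\gamma)n$ for a constant $\gamma>0$. Finally, conditionally on $w$, $N_0$ is a sum of a hypergeometric and a binomial variable, so Hoeffding's inequality and a Chernoff bound give $\Pr[N_0\ge(1-\delta)n\mid w]\le 2^{-c_2 n}$; combining with the tail bound on $w$ proves the displayed estimate with $c_0=\tfrac12\min(c_1,c_2)$.

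\textbf{Induction on $\ell$.} Set $\vec u_j=\vec v_1+\dots+\vec v_j$ and let $P(j)$ be the statement $\Pr[\wt(\vec u_j)\le\delta n]\le j\cdot 2^{-c_0 n}$, shrinking $c_0$ so that also $c_0\le c_1$. For $P(2)$, apply the core estimate with $\vec b=\vec v_1$: $\Pr[\wt(\vec u_2)\le\delta n]\le\Pr[\wt(\vec v_1)<(\delta-\eps)n]+\sup_{\wt(\vec b)\ge(\delta-\eps)n}\Pr[\wt(\vec b+\vec v_2)\le\delta n]\le 2^{-c_1 n}+2^{-c_0 n}\le 2\cdot 2^{-c_0 n}$. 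For the step, condition on $\vec u_j$: on $\{\wt(\vec u_j)\ge(\delta-\eps)n\}$ the core estimate with $\vec b=\vec u_j$ and $\vec v=\vec v_{j+1}$ bounds $\Pr[\wt(\vec u_{j+1})\le\delta n\mid\vec u_j]$ by $2^{-c_0 n}$, while the complementary event has probability at most $\Pr[\wt(\vec u_j)\le\delta n]\le j\cdot 2^{-c_0 n}$ by $P(j)$; this gives $P(j+1)$. Taking $j=\ell\le n$ yields $\Pr[\wt(\vec u_\ell)\le\delta n]\le n\cdot 2^{-c_0 n}\le 2^{-(c_0/2)n}$ for $n$ large, so $h_\delta:=c_0/2$ works.

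\textbf{Main obstacle.} The difficulty is that the coordinates of a uniform vector in $B_q(n,\delta n)$ are not independent (the weight constraint couples them), so one cannot Chernoff-bound $\#\{j:(\vec v_1+\dots+\vec v_\ell)_j=0\}$ directly; revealing one summand's support and treating the freshly added vector separately decouples exactly the randomness needed, but the resulting bound is only meaningful once the accumulated partial sum has linear weight — which is what forces the induction on $\ell$. The one computation that must be done carefully is verifying that the conditional mean of $N_0$ stays below $(1-\delta)n$ by a positive constant over the whole relevant range of $a$ and $w$; this is exactly where the hypothesis $\delta<1-\tfrac1q$ enters, and it is sharp, since at $\delta=1-\tfrac1q$ the margin closes.
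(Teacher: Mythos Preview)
Your proof is correct and follows the same skeleton as the paper's: show that the weight of a uniform ball vector is concentrated near $\delta n$; show that when two vectors each have weight at least $(\delta-\eps)n$ their sum has expected weight at least $(\delta+\gamma)n$ (you verify the equivalent inequality $(1-\delta)^2+\tfrac{\delta^2}{q-1}<1-\delta$, the paper checks $2\delta-\tfrac{q}{q-1}\delta^2>\delta$; these are the same identity); apply a concentration inequality; induct on $\ell$.

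Where you differ is in rigor. The paper, after conditioning on $\wt(\vec v_i)=\delta_i n$, simply writes ``consider the same probability when each coordinate of $\vec v_i$ is chosen to be nonzero with probability $\delta_i$ independently'' and then applies Chernoff coordinatewise; this replaces the true conditional law by an i.i.d.\ surrogate without justification. You instead condition on the support $S$ of the fresh summand, recognize that $|S\cap\supp(\vec b)|$ is hypergeometric and the cancellations on the overlap are binomial, and invoke Hoeffding plus Chernoff. This is the honest way to handle the dependence. Your formulation of the core estimate for a \emph{fixed} $\vec b$ of weight $\ge(\delta-\eps)n$ also makes the induction cleaner than the paper's, which re-uses the ``$\ell=2$ argument'' even though the accumulated partial sum is not uniform on the ball. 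One small slip: since $\partial_a\E[N_0\mid w]=-1+\tfrac{wq}{(q-1)n}<0$ for $w\le\delta n$, the maximum of $\E[N_0\mid w]$ over $a\ge(\delta-\eps)n$ is attained at $a=(\delta-\eps)n$, not $a=\delta n$; by continuity this does not affect your conclusion for $\eps$ small, and it also confirms that the core estimate is valid for all $a\ge(\delta-\eps)n$, as your induction requires.
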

Given a random linear code $\mathcal{C}$ and its generator matrix $G$, we can obtain the codeword  $\vec{c}_m:=\vec{m}^TG$. Note that the weight of $\vec{c}_m$ is the same as the weight of $\vec{c}_{\lambda m}$. Thus, it suffices to consider a subset $W=(\F_q^k-\{0\})/\F^*_q$ of size $\frac{q^k-1}{q-1}$ and analyze the minimum weight of $\vec{c}_m$ for $\vec{m}\in W$. Let $E_X$ denote the event that all random vectors $\vec{c}_m$ with $\vec{m}\in X$ have weight less than $d$, and we apply the inclusion–exclusion principle to expand $\Pr[\bigcup_{\vec{m}\in W}E_{\vec{m}}]$ up to the third term:
\[
    \Pr[\bigcup_{\vec{m}\in W}E_{\vec{m}}]\leq \sum_{\vec{m}\in W}\Pr[E_{\vec{m}}]-\sum_{X\in \binom{W}{2}}\Pr[E_X]+\sum_{X\in \binom{W}{3}}\Pr[E_X].
\]
If the vectors in $X$ are linearly independent, we can treat them as i.i.d from $B_q(n,\delta n)$. In contrast, if they are linearly dependent, the upper bound can be derived by applying Lemma~\ref{lem:1}. Therefore, we obtain:
\[
\Pr[\bigcup_{\vec{m}\in W}E_{\vec{m}}]\leq b-\frac{b^2}{2}(1-2^{-h_\delta n-o(n)}+\frac{b^3}{6})
\]
where $b=\frac{\Vol_q(n,\delta n)}{q^n}|W|$. It suffices to make the right-hand side of the inequality smaller than $1$, which leads to a warm-up theorem.
\begin{thm}
    Let $d=\delta n$ for some $\delta \in(0,1-\frac{1}{q})$. There exists $[n,k,d]$-linear code over $\F_q$ if 
    \[
    \frac{q^k-1}{q-1}<\frac{1.7q^n}{\sum_{i=0}^{d-1}\binom{n}{i}(q-1)^i}.
    \]
\end{thm}
We can refine our argument by applying the inclusion-exclusion principle up to the $i$-th term.
\begin{equation}
    \Pr[\bigcup_{\vec{m} \in W} E_{\vec{m}} ]
    \leq \sum_{i=1}^t (-1)^{i-1} \sum_{X \in \binom{W}{i}} \Pr[E_X].
\end{equation}
for some odd integer $t = \sqrt{h_{\delta} \log_q 2\cdot \frac{n}{2}}$ where $h_{\delta}$ is given in the above lemma. Using a similar upper bound on $E_X$ as in the warm-up theorem, we show that for any linearly independent vectors $Y := \{\vec{x}_1, \ldots, \vec{x}_s\}$, we have the followings:
\[
\sum_{i=s}^t \sum_{\substack{Y \subseteq X \subseteq span\{Y\}, \\ |X| = i}} \Pr[E_X] 
\leq a^s (1 + 2^{-h_{\delta}\cdot \frac{n}{2}})
\]
where $a=\frac{\Vol_q(n,d-1)}{q^{n}}$.

Combining the above analysis, we obtain our main theorem for the improved GV bound of Hamming distance.
\begin{thm}
    Let $d=\delta n$ for some $\delta\in (0,1-\frac{1}{q})$. There exists $[n,k,d]$-linear code over $\F_q$ if
    \[
    \frac{q^k-1}{q-1}<\frac{c_\delta\sqrt{n}\cdot q^n}{\sum_{i=0}^{d-1}\binom{n}{i}(q-1)^i}
    \]
    for some constant $c_{\delta}>0$.
\end{thm}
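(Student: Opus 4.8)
The plan is to analyze a random linear code given by a uniformly random generator matrix $G\in\F_q^{k\times n}$. Put $\vec c_{\vec m}:=\vec m^{T}G$ and, exactly as in the warm-up, work over the projective index set $W=(\F_q^{k}\setminus\{0\})/\F_q^{*}$, $M:=|W|=\frac{q^{k}-1}{q-1}$; it then suffices to prove $\Pr\bigl[\bigcup_{\vec m\in W}E_{\vec m}\bigr]<1$, where $E_{\vec m}$ is the event $\wt(\vec c_{\vec m})<d$, since any $G$ avoiding every $E_{\vec m}$ has trivial left kernel, hence rank $k$, and generates an $[n,k,\ge d]_q$ code. I would first isolate two facts. (i) If $X\subseteq W$ is linearly independent in $\F_q^{k}$, then — the columns of $G$ being i.i.d.\ uniform — the vectors $(\vec c_{\vec m})_{\vec m\in X}$ are i.i.d.\ uniform in $\F_q^{n}$, so $\Pr[E_X]=a^{|X|}$ with $a:=\Vol_q(n,d-1)/q^{n}$. (ii) If $X$ is dependent, pick a basis $Y\subseteq X$ of $\spa(X)$ with $s:=|Y|=\dim\spa(X)$; conditioned on $E_Y$ (probability $a^{s}$) the $(\vec c_{\vec y})_{\vec y\in Y}$ are i.i.d.\ uniform in the Hamming ball, and every $\vec m\in X\setminus Y$ is an $\F_q$-combination of at least two of them (a single term would force $\vec m\in Y$ inside $W$); the ball is scale-invariant, so \cref{lem:1} yields $\Pr[E_{\vec m}\mid E_Y]\le 2^{-h_\delta n}$, whence $\Pr[E_X]\le a^{s}2^{-h_\delta n}$.

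Next I would invoke the Bonferroni inequality with an odd truncation parameter $t$,
\[
\Pr\Bigl[\bigcup_{\vec m\in W}E_{\vec m}\Bigr]\;\le\;\sum_{i=1}^{t}(-1)^{i-1}\Sigma_i,\qquad \Sigma_i:=\sum_{X\in\binom{W}{i}}\Pr[E_X],
\]
and split $\Sigma_i=L_i a^{i}+\Sigma_i^{\mathrm{dep}}$ according to whether $X$ is linearly independent ($L_i$ such sets) or not. Because $t=\Theta(\sqrt n)\ll k$, we get $L_i=\binom{M}{i}\bigl(1-e^{-\Theta(n)}\bigr)$ and $\binom{M}{i}a^{i}=\tfrac{b^{i}}{i!}\bigl(1-e^{-\Theta(n)}\bigr)$ with $b:=Ma$, so
\[
\sum_{i=1}^{t}(-1)^{i-1}L_i a^{i}\;\le\;1-\sum_{i=0}^{t}(-1)^{i}\frac{b^{i}}{i!}+e^{-\Theta(n)}\;\le\;1-e^{-b}+\frac{b^{t+1}}{(t+1)!}+e^{-\Theta(n)},
\]
by the standard alternating-tail estimate for $e^{-b}$ (valid once $t$ is odd and $t>b$). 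The dependent contribution is negligible: $\bigl|\sum_i(-1)^{i-1}\Sigma_i^{\mathrm{dep}}\bigr|\le 2^{-h_\delta n}\sum_{s\ge1}a^{s}\,\#\{X:\dim\spa(X)=s,\ |X|\le t\}$, and since there are at most $q^{sk}$ subspaces of dimension $s$ and at most $2^{t}q^{t^{2}}$ subsets of size $\le t$ of the projective points inside any one of them, this is $\le 2^{t}q^{t^{2}}2^{-h_\delta n}\sum_{s\le t}(q^{k}a)^{s}$, which — using $q^{k}a=\Theta(b)=\poly(n)$ and $q^{t^{2}}\le 2^{h_\delta n/2}$ from the choice below — is $e^{-\Theta(n)}$.

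Finally I would take $t$ to be the least odd integer with $t+1\ge Ce\,b$ for an absolute constant $C>1$ (say $C=2$), so that Stirling gives $\tfrac{b^{t+1}}{(t+1)!}\le(eb/(t+1))^{t+1}\le C^{-(t+1)}\le\tfrac12 e^{-b}$; and I would demand $t\le t_{\max}:=\sqrt{\tfrac12 h_\delta\log_q 2\cdot n}$, which forces $q^{t^{2}}\le 2^{h_\delta n/2}$ as used above. These two demands are simultaneously met exactly when $Ce\,b\lesssim t_{\max}$, i.e.\ when $b=Ma<c_\delta\sqrt n$ for the constant $c_\delta:=\tfrac{1}{Ce}\sqrt{\tfrac12 h_\delta\log_q 2}>0$, which depends only on $\delta$ ($h_\delta$ is the constant furnished by \cref{lem:1} and $q$ is fixed). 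Under this hypothesis every error term is $e^{-\Theta(n)}=o(e^{-b})$, hence $\Pr\bigl[\bigcup_{\vec m}E_{\vec m}\bigr]\le 1-\tfrac12 e^{-b}+e^{-\Theta(n)}<1$, which produces the code; the case $b<1$ is already the classical Varshamov bound.

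The step I expect to be the main obstacle is this two-sided constraint on $t$. The Bonferroni tail $b^{t+1}/(t+1)!$ falls below the ``gain'' $e^{-b}$ only once $t$ exceeds a fixed multiple of $b$, forcing $t$ up, whereas the count $q^{t^{2}}$ of linearly dependent sub-configurations stays below the savings $2^{-h_\delta n}$ of \cref{lem:1} only while $t=O(\sqrt n)$, forcing $t$ down; reconciling the two is precisely what caps $b$ at $\Theta(\sqrt n)$ and fixes $c_\delta$. A secondary technical point is verifying that all the $e^{-\Theta(n)}$ relative errors — $L_i$ versus $\binom{M}{i}$, $\binom{M}{i}a^{i}$ versus $b^{i}/i!$, and the dependent terms — really are negligible beside $e^{-b}=e^{-c_\delta\sqrt n}$; this is immediate, each being exponentially small in $n$ while $e^{-b}$ is only exponentially small in $\sqrt n$.
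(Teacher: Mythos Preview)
Your proposal is correct and follows essentially the same route as the paper: random generator matrix, projective index set $W$, Bonferroni truncated at an odd $t=\Theta(\sqrt n)$, the Hamming-ball lemma (\cref{lem:1}) to kill linearly dependent $X$'s by a factor $2^{-h_\delta n}$, and the Taylor tail of $e^{-b}$ to control the independent part; your two-sided constraint on $t$ (large enough that $b^{t+1}/(t+1)!\ll e^{-b}$, small enough that $q^{t^2}\le 2^{h_\delta n/2}$) is exactly the mechanism the paper uses to pin $b$ at $\Theta(\sqrt n)$. The only organizational difference is that you bound the dependent contribution by a single global count over $(s,X)$, whereas the paper groups dependent $X$'s by a chosen basis $Y\in\mathcal X_s$ and absorbs them as a multiplicative $(1+2^{-h_\delta n/2})$ perturbation of the independent term; both yield $e^{-\Theta(n)}$ total and are interchangeable.
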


\paragraph{Improvement of Quantum GV Bound.}
The analysis of the improved Quantum GV bound is more complicated. The reason is that most of quantum codes constructions require that each codeword should be orthogonal to the others. Thus, we focus on one family of quantum code which using symplectic distance. Let  $B_q^S(2n,d)\subseteq \F_q^{2n}$ be the symplectic ball of radius $d$. Given a vector $\vec{u}\in \F_q^{2n}$, we use $\langle\vec{u}\rangle$ to denote the space spanned by $\vec{u}$ and $\langle\vec{u}\rangle^{\perp_S}$ be the orthogonal space of $\vec{u}$ under symplectic metric. Given a codeword $\vec{u}$ in some random quantum code $\mathcal{C}$, by the orthogonal property of quantum code, the rest codeword  in $\mathcal{C}$ must lie in $\langle\vec{u}_i\rangle^{\perp_S}$. If  $\left|B_q^S(2n,d-1)\bigcap_{i}\langle\vec{u}_i\rangle^{\perp_S}\right|$ has large volume, then we can not apply the same argument to quantum code to derive the improved GV bound.
%Let $\ell=\frac{1}{2\log q}\log n$, since $\vec{v}\in B_q^S(2n,d-1)\cap\langle\vec{u}_1\rangle^{\perp_S}\cap\ldots\cap\langle\vec{u}_\ell\rangle^{\perp_S}$ can be represented in the form of a system of linear equations, we can express it using the inclusion–exclusion principle. Noting that only exponentially small terms in $\left|B_q^S(2n,d-1)\cap\langle\vec{u}_1\rangle^{\perp_S}\cap\ldots\cap\langle\vec{u}_\ell\rangle^{\perp_S}\right|$ affect the rank of the system, we may neglect them and obtain the following theorem. (Here $o(1)=q^{-o(n)}$.)
Thus, our first step to prove the following theorem.
\begin{thm}\label{thm:4}
    Let $\ell=\frac{\sqrt{a_{\delta}n}}{2}$ where $a_{\delta}$ is an constant, $\vec{u}_1,\ldots,\vec{u}_\ell$ be linearly independent random vectors in $B_q^S(2n,d-1)$. Then, it holds
    \[
    \frac{1-q^{-\Omega(n^{2/3})}}{q^\ell}\Vol_q^S(2n,d-1)\leq \Bigl|B_q^S(2n,d-1)\cap\langle\vec{u}_1\rangle^{\perp_S}\cap\ldots\cap\langle\vec{u}_\ell\rangle^{\perp_S}\Bigr|\leq\frac{1+q^{-\Omega(n^{2/3})}}{q^\ell}\Vol_q^S(2n,d-1).
    \]
\end{thm}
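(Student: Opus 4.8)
The plan is to express the quantity of interest as a sum of indicators and to control its first two moments under the random choice of $\vec u_1,\dots,\vec u_\ell$. Recall that $B_q^S(2n,d-1)$ is the Hamming ball of radius $d-1$ over the alphabet $\F_q^2$, so $N:=\Vol_q^S(2n,d-1)=\sum_{i=0}^{d-1}\binom ni(q^2-1)^i$; let $\langle\cdot,\cdot\rangle_S$ denote the symplectic form on $\F_q^{2n}$ and set $d-1=\delta n$. Put
\[
Z:=\Bigl|B_q^S(2n,d-1)\cap\langle\vec u_1\rangle^{\perp_S}\cap\cdots\cap\langle\vec u_\ell\rangle^{\perp_S}\Bigr|=\sum_{\vec v\in B_q^S(2n,d-1)}\ \prod_{i=1}^\ell\mathbf{1}\bigl[\langle\vec v,\vec u_i\rangle_S=0\bigr].
\]
I would carry out the computation for $\vec u_1,\dots,\vec u_\ell$ i.i.d.\ uniform on $B_q^S(2n,d-1)$ and only observe at the end that conditioning on linear independence is harmless: since $\ell=\Theta(\sqrt n)$ while $N\ge(q^2-1)^{\delta n}=q^{\Omega(n)}$, such a tuple is linearly dependent with probability at most $\ell\,q^{\ell}/N=q^{-\Omega(n)}$, which rescales every probability below by a factor $1+q^{-\Omega(n)}$.

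The technical heart is a character-sum bound. Identifying $\F_q^{2n}$ with $(\F_q^2)^n$ by pairing coordinate $i$ with $n+i$, for $\vec v\neq\bzero$ the map $\vec u\mapsto\langle\vec v,\vec u\rangle_S$ is a nonzero $\F_q$-linear form acting nontrivially on exactly $\mathrm{swt}(\vec v)=:w$ of the $n$ coordinate-pairs. Splitting $B_q^S(2n,d-1)$ according to the support of $\vec u$ and using $\sum_{u\in\F_q^2}\psi(\phi(u))=0$ for every nontrivial additive character $\psi$ of $\F_q$ and every nonzero linear $\phi\colon\F_q^2\to\F_q$, one finds, for every $t\neq0$,
\[
\sum_{\vec u\in B_q^S(2n,d-1)}\psi\bigl(t\langle\vec v,\vec u\rangle_S\bigr)=\sum_{s\le d-1}[z^s]\bigl((1+(q^2-1)z)^{n-w}(1-z)^{w}\bigr)=[z^{d-1}]\bigl((1+(q^2-1)z)^{n-w}(1-z)^{w-1}\bigr),
\]
where $[z^s]$ extracts the coefficient of $z^s$ and the last equality is the telescoping identity $\sum_{s\le r}[z^s](G(z)(1-z))=[z^r]G(z)$. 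Bounding this coefficient by Cauchy's inequality on $|z|=\rho$ with $\rho:=\tfrac{\delta}{(1-\delta)(q^2-1)}$, and using that $(1+(q^2-1)\rho)^n=\sum_s\binom ns(q^2-1)^s\rho^s$ is a unimodal sum of $n+1$ terms with maximum near $s=d-1$, so that $N\ge\binom n{d-1}(q^2-1)^{d-1}\ge\poly(n)^{-1}\rho^{-(d-1)}(1+(q^2-1)\rho)^n$, gives
\[
\frac1N\Bigl|\sum_{\vec u\in B_q^S(2n,d-1)}\psi\bigl(t\langle\vec v,\vec u\rangle_S\bigr)\Bigr|\le\poly(n)\cdot\gamma^{\,w},\qquad\gamma:=\frac{1+\rho}{1+(q^2-1)\rho}\in(0,1),
\]
the inclusion $\gamma<1$ holding because $q^2-1>1$. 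Consequently $p_{\vec v}:=\Pr_{\vec u}[\langle\vec v,\vec u\rangle_S=0]=\tfrac1q+\varepsilon_{\vec v}$ with $|\varepsilon_{\vec v}|\le\poly(n)\,\gamma^{\mathrm{swt}(\vec v)}$ for $\vec v\neq\bzero$, while $p_{\bzero}=1$.

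With this I fix a threshold $w_0:=\lceil n^{2/3}\rceil$ and estimate the moments. First, $\E[Z]=1+\sum_{\vec v\neq\bzero}q^{-\ell}(1+q\varepsilon_{\vec v})^\ell$: there are at most $w_0\binom n{w_0}(q^2-1)^{w_0}=q^{O(n^{2/3}\log n)}$ vectors with $\mathrm{swt}(\vec v)<w_0$, each term at most $(1+1/q)^\ell=q^{O(\sqrt n)}$, so their total is negligible against $q^{-\ell}N=q^{\Omega(n)}$; and for $\mathrm{swt}(\vec v)\ge w_0$ one has $\ell|q\varepsilon_{\vec v}|\le\ell q\,\poly(n)\,\gamma^{w_0}=q^{-\Omega(n^{2/3})}$, so $(1+q\varepsilon_{\vec v})^\ell=1+q^{-\Omega(n^{2/3})}$; hence $\E[Z]=q^{-\ell}N\bigl(1+q^{-\Omega(n^{2/3})}\bigr)$. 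For the second moment, $\E[Z^2]=\sum_{\vec v,\vec v'}\bigl(\Pr_{\vec u}[\langle\vec v,\vec u\rangle_S=0=\langle\vec v',\vec u\rangle_S]\bigr)^\ell$, and I split the pairs into: (i) those with $\dim\mathrm{span}\{\vec v,\vec v'\}\le1$, of which there are at most $(q+1)N$, each contributing $\le1$, hence negligible against $q^{-2\ell}N^2$; (ii) ``bad'' two-dimensional spans, namely those containing a nonzero vector of symplectic weight $<w_0$ --- such a pair forces $\mathrm{swt}(\vec v)<w_0$ or $\vec v'\in\{c\vec w+c'\vec v:c\in\F_q^\ast,\,c'\in\F_q,\,\mathrm{swt}(\vec w)<w_0\}$, so there are at most $q^2N\cdot q^{O(n^{2/3}\log n)}$ of them, again negligible against $q^{-2\ell}N^2$; and (iii) ``good'' two-dimensional spans, where every nonzero combination $s\vec v+t\vec v'$ has symplectic weight $\ge w_0$, so expanding the two conditions in characters and applying the bound above to each of the $q^2-1$ such combinations gives $\Pr[\cdots]=q^{-2}+O(\poly(n)\gamma^{w_0})$ and hence $\Pr[\cdots]^\ell=q^{-2\ell}\bigl(1+q^{-\Omega(n^{2/3})}\bigr)$. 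Summing, $\E[Z^2]=q^{-2\ell}N^2\bigl(1+q^{-\Omega(n^{2/3})}\bigr)$, so $\mathrm{Var}[Z]\le q^{-2\ell}N^2\,q^{-\Omega(n^{2/3})}$, and Chebyshev's inequality shows that with probability $1-q^{-\Omega(n^{2/3})}$ --- and, re-imposing linear independence, still $1-q^{-\Omega(n^{2/3})}$ --- we have $Z=q^{-\ell}N\bigl(1+q^{-\Omega(n^{2/3})}\bigr)$, which is the claimed two-sided bound.

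I expect the main obstacle to be the character-sum estimate: it must deliver genuine geometric decay $\gamma^{\mathrm{swt}(\vec v)}$ with a constant $\gamma<1$ \emph{uniformly} over the entire range $1\le\mathrm{swt}(\vec v)\le d-1$, and the telescoping reduction of the truncated sum to a single coefficient, together with a Cauchy bound at the radius $\rho$ matching the mode of the volume's weight distribution, is the cleanest route I see to it. A secondary delicate point is calibrating $w_0$: it must be large enough that $\gamma^{w_0}$ overwhelms the $\poly(n)^{\ell}$ blow-up produced by low-weight vectors and pairs, yet small enough that only $q^{o(n)}$ such objects exist --- $w_0\asymp n^{2/3}$ is a comfortable choice, and it is also what produces the stated $q^{-\Omega(n^{2/3})}$ error term. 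The remaining ingredients --- the linear-independence conditioning, the low-dimensional pairs, and the $(1+x)^\ell$ error estimates --- are routine.
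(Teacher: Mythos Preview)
Your argument is correct and constitutes a genuinely different proof from the paper's. The paper proceeds combinatorially: it writes the $\ell$ symplectic constraints as a linear system with coefficient matrix $A$, partitions the ball by the support $T$ of the target vector, and observes that when the restricted matrix $A_T$ has full row rank $\ell$ the count $N(T)$ is $(1\pm q^{-\Omega(n)})(q^2-1)^{|T|}/q^\ell$ by an inclusion--exclusion in the style of Lemma~\ref{lem:one-vec-bound}; the rank-deficient supports are then handled by showing, via hypergeometric concentration on $|\supp_S(\vec u_i)\cap T|$, that $\Pr[\mathrm{rank}(A_T)<\ell]\le q^{-a_\delta n}$, and a Markov step bounds their total contribution. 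Your route is instead Fourier-analytic and moment-based: you estimate the bias $p_{\vec v}-1/q$ by a single-coefficient character-sum bound (the telescoping reduction plus a Cauchy estimate at the saddle radius), threshold at $w_0\asymp n^{2/3}$, and run a first--second-moment argument with Chebyshev. What your approach buys is a clean, reusable geometric-decay estimate $|\varepsilon_{\vec v}|\le\mathrm{poly}(n)\gamma^{\mathrm{swt}(\vec v)}$ and a very transparent explanation of the $n^{2/3}$ exponent; what the paper's approach buys is that it avoids character sums entirely, stays closer to the code-theoretic picture (the ``bad'' event is literally rank deficiency of a submatrix, which is exactly what matters downstream for the self-orthogonal application), and in fact yields the stronger error $q^{-\Omega(n)}$ in its Section~5 version.
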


Extending Lemma~\ref{lem:1} from the Hamming ball to the symplectic ball is natural. We first prove that with high probability,  among all sets of $\ell$ linearly independent vectors, the mutually orthogonal ones are contained in the largest number of symplectic self-orthogonal dual codes $\cC^{\perp_S}$. When we consider the case where these $\ell$ vectors are mutually orthogonal, \cref{thm:4} implies that the set of “bad” $\ell$-vectors still constitutes only an exponentially small fraction of all mutually orthogonal $\ell$-vectors.

\begin{thm}
    Assume $d-1=\delta n$ with some constant $\delta\in (0,1-\frac{1}{q^2})$. Let $\ell=\min\{\frac{\sqrt{a_{\delta} n}}{2},\sqrt{h_{\delta}\log_q2\cdot \frac{n}{2}}\}$, $\vec{v}_1,\ldots,\vec{v}_{\ell}$ be $n\geq \ell\geq 2$ random vectors distributed uniformly at random in the symplectic ball $B_q^S(2n,\delta n)\subseteq \F_q^{2n}$ and they are mutually symplectic orthogonal. Then, the probability that $\sum_{i=1}^{\ell}\vec{v}_i\in B_q^S(2n,\delta n)$ is at most $2^{-h'_{\delta}n}$ for some constant $h'_\delta\in (0,1)$.
\end{thm}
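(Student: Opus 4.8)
The plan is to mirror the proof of Lemma~\ref{lem:1}, but work inside the symplectic ball and carefully restrict to the subspace of vectors that are mutually symplectic-orthogonal. Write $V := \langle \vec{v}_1\rangle^{\perp_S}\cap\cdots\cap\langle\vec{v}_{\ell-1}\rangle^{\perp_S}$, so that conditioning on $\vec{v}_1,\ldots,\vec{v}_{\ell-1}$, the last vector $\vec{v}_\ell$ is uniform over $B_q^S(2n,\delta n)\cap V$ (subject to the event that $\vec{v}_1,\ldots,\vec{v}_\ell$ stay linearly independent, which by standard counting costs only a $q^{-\Omega(n)}$ factor). The key structural input is \cref{thm:4}: the size of $B_q^S(2n,\delta n)\cap V$ is within a $(1\pm q^{-\Omega(n^{2/3})})$ factor of $q^{-(\ell-1)}\Vol_q^S(2n,\delta n)$, so sampling $\vec{v}_\ell$ from this restricted ball is, up to subexponential distortion, the same as sampling it from the full symplectic ball. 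Hence it suffices to show that a single uniform vector in the full symplectic ball $B_q^S(2n,\delta n)$, when shifted by the fixed vector $\vec{w} := \sum_{i=1}^{\ell-1}\vec{v}_i$, lands in $B_q^S(2n,\delta n)$ with probability at most $2^{-h'_\delta n}$.

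The second step is the Chernoff-type estimate analogous to Lemma~\ref{lem:1}. The symplectic weight of a vector in $\F_q^{2n}$ is the Hamming weight of the associated length-$n$ string over the ``pairs alphabet'' of size $q^2$, and a uniform vector of symplectic weight $\le \delta n$ behaves, coordinate by coordinate, like a string over an alphabet of size $q^2$ where each coordinate is nonzero with probability slightly below $\delta$ (once $\delta < 1 - 1/q^2$). Adding the fixed offset $\vec{w}$ coordinate-wise: on the coordinates where $\vec{w}$ is zero the sum stays at the original symplectic weight, and on the (at most $\delta n$) coordinates where $\vec{w}$ is nonzero the sum is nonzero with probability at least $1 - 1/q^2 - o(1)$ independently. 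A Chernoff bound on the number of nonzero coordinates of $\vec{v}_\ell + \vec{w}$ then gives that $\Pr[\wt_S(\vec{v}_\ell+\vec{w}) \le \delta n]$ is at most $2^{-c n}$ for a constant $c>0$ depending only on $\delta$; this is essentially the content of Lemma~\ref{lem:1} transported to the $q^2$-alphabet symplectic setting, and the condition $\delta < 1 - 1/q^2$ is exactly what makes the relevant drift strictly negative. Take $h'_\delta$ to be, say, half of this exponent to absorb all the $q^{-\Omega(n^{2/3})}$ and $q^{-\Omega(n)}$ error terms coming from the conditioning and from \cref{thm:4}.

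Finally one assembles the pieces: condition on $\vec{v}_1,\ldots,\vec{v}_{\ell-1}$; with probability $1 - q^{-\Omega(n)}$ they are linearly independent and, in the regime of the theorem where they are required to be mutually orthogonal, \cref{thm:4} applies to the first $\ell-1$ of them, so the restricted ball from which $\vec{v}_\ell$ is drawn has the stated near-uniform size; then by the previous paragraph the conditional probability that $\vec{v}_1+\cdots+\vec{v}_\ell \in B_q^S(2n,\delta n)$ is at most $2^{-cn}\cdot(1 + q^{-\Omega(n^{2/3})})$. Averaging over the conditioning and adding the $q^{-\Omega(n)}$ failure probability of linear independence keeps the bound at most $2^{-h'_\delta n}$ for $n$ large. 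The main obstacle I anticipate is not the Chernoff step but the bookkeeping in the conditioning argument: one must check that after fixing $\ell-1$ mutually orthogonal vectors, $\vec{v}_\ell$ drawn uniformly from the intersection ball is still genuinely ``spread out'' enough for the coordinate-wise Chernoff heuristic to apply — i.e. that the orthogonality constraints, which cut out a codimension-$(\ell-1)$ subspace $V$, do not force atypical coordinate behavior. This is precisely where \cref{thm:4} does the heavy lifting: it guarantees the restricted ball is close in size to a uniformly scaled copy of the full ball, so no such pathology occurs, and the only remaining care is to ensure $\ell = O(\sqrt n)$ is small enough that the $q^{-\Omega(n^{2/3})}$ slack dominates $\ell$ times any per-step $q^{-\Omega(n)}$ loss.
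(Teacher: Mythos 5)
Your conditioning-plus-transfer strategy is genuinely different from the paper's argument, and the transfer step itself is sound: since \cref{col:l-vector-bound} gives $|B_q^S(2n,\delta n)\cap V|\geq \frac{1-o(1)}{q^{\ell-1}}\Vol_q^S(2n,\delta n)$, any event has conditional probability (for $\vec{v}_\ell$ uniform on the restricted ball) at most $q^{\ell-1}(1+o(1))=2^{O(\sqrt n)}$ times its probability under the uniform measure on the full ball, and this factor is absorbed by a $2^{-\Omega(n)}$ bound. The genuine gap is in the Chernoff step: you treat $\vec{w}=\sum_{i=1}^{\ell-1}\vec{v}_i$ as a fixed vector and conclude $\Pr[\wt_S(\vec{v}_\ell+\vec{w})\le\delta n]\le 2^{-cn}$, but this is false unless $\wt_S(\vec{w})=\Omega(n)$. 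The expected weight of $\vec{v}_\ell+\vec{w}$ is roughly $\delta n+\wt_S(\vec{w})\bigl(1-\tfrac{q^2\delta}{q^2-1}\bigr)$, so the drift above the radius $\delta n$ is proportional to $\wt_S(\vec{w})$; if $\vec{w}$ happened to be the zero vector the probability in question would be $1$. Establishing that $\wt_S(\vec{w})\ge(\delta-\varepsilon)n$ holds with probability $1-2^{-\Omega(n)}$ is essentially the theorem you are proving, applied to the first $\ell-1$ mutually orthogonal vectors, so the argument as written is circular unless you set up a full induction on $\ell$ in which every step (not only the last) invokes the transfer via \cref{thm:4}, and in which the degradation of $\varepsilon$ across $\Theta(\sqrt n)$ steps is tracked. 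None of this bookkeeping appears in the proposal, and it is precisely the hard part of your route.

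The paper avoids the conditional analysis entirely. It first proves the unconstrained statement (Lemma~\ref{lem:quantum-ball}): for i.i.d.\ uniform vectors in the symplectic ball the bad event has probability at most $2^{-h_\delta n}$, so the bad set of $\ell$-tuples has size at most $2^{-h_\delta n}\Vol_q^S(2n,\delta n)^{\ell}$. It then lower-bounds the number of mutually orthogonal $\ell$-tuples by $\frac{(1-o(1))\Vol_q^S(2n,\delta n)^{\ell}}{q^{\ell(\ell-1)/2}}$ using \cref{col:l-vector-bound}, and observes that even if every bad tuple were mutually orthogonal, the bad fraction among orthogonal tuples would be at most $q^{\ell(\ell-1)/2}\cdot 2^{-h_\delta n+o(n)}\le 2^{-3h_\delta n/4+o(n)}$, because $\ell^2\le h_\delta\log_q2\cdot\frac{n}{2}$. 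This density-comparison sidesteps both the conditional distribution of $\vec{v}_\ell$ and the weight of partial sums of orthogonal vectors. To repair your proof, either carry out the induction sketched above or switch to this global counting argument.
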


Hence, by combining the above results and applying the same  method as in the Hamming case, we finally obtain the improved quantum \emph{Gilbert-Varshamov} bound.
\begin{thm}\label{thm:intro-quantum-GV-bound}
    Let $d=\delta n$ for some $\delta\in (0,1-\frac{1}{q^2})$. There exists $[2n,2n-k,d]$-linear symplectic self-orthogonal dual code $\cC^{\perp_S}$ over $\F_q$ if
    \[
    \frac{q^{2n-k}-1}{q-1}<\frac{c_\delta \sqrt{n}\cdot q^{2n}}{\sum_{i=0}^{d-1}\binom{n}{i}(q^2-1)^i}
    \]
    for some constant $c_{\delta}>0$.
\end{thm}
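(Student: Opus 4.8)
The plan is to replay the probabilistic argument of the Hamming case almost verbatim, with the Hamming ball replaced by $B_q^S(2n,d-1)$ and the two combinatorial inputs there — exact mutual independence of distinct codewords, and Lemma~\ref{lem:1} — replaced by \cref{thm:4} and the symplectic version of Lemma~\ref{lem:1} stated just above. Concretely, let $\cC$ be a uniformly random $k$-dimensional symplectic self-orthogonal subspace of $\F_q^{2n}$, sampled by choosing linearly independent, pairwise symplectic-orthogonal generators one at a time, each as uniform as the self-orthogonality constraint allows; then $\cC^{\perp_S}$ is a random $[2n,2n-k]$ code containing its symplectic dual $\cC$. Fixing a uniformly random generator matrix of $\cC^{\perp_S}$, index the lines of $\cC^{\perp_S}$ by a set $W$ with $|W|=\frac{q^{2n-k}-1}{q-1}$ and write $\vec c_{\vec m}$ for the codeword of index $\vec m\in W$; for $\vec m\in W$ let $E_{\vec m}$ be the event $\wt_S(\vec c_{\vec m})\le d-1$. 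Any outcome avoiding $\bigcup_{\vec m\in W}E_{\vec m}$ is a symplectic self-orthogonal $\cC$ whose dual has minimum symplectic distance $\ge d$ — precisely the desired $[2n,2n-k,d]$ code — so it suffices to prove $\Pr\!\big[\bigcup_{\vec m\in W}E_{\vec m}\big]<1$.

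The heart of the proof is bounding $\Pr[E_X]$ for $X\subseteq W$, in analogy with the Hamming argument. Put $a:=\Pr[E_{\vec m}]=(1+o(1))\,\tfrac{\Vol_q^S(2n,d-1)}{q^{2n}}$ and $b:=a\cdot\tfrac{q^{2n-k}-1}{q-1}$. If the codewords $\{\vec c_{\vec m}:\vec m\in X\}$ are linearly independent with $|X|=s$, they form a uniformly random independent $s$-tuple in the random dual-containing code $\cC^{\perp_S}$, and \cref{thm:4} — which pins down the number of low-weight vectors orthogonal to a random linearly independent low-weight flag to within a multiplicative $(1\pm q^{-\Omega(n^{2/3})})$, uniformly over flags of size up to $\tfrac{\sqrt{a_\delta n}}{2}$ — is exactly the substitute for the trivial exact-independence identity of the Hamming case and yields $\Pr[E_X]=a^{s}\,(1\pm q^{-\Omega(n^{2/3})})^{O(s)}$; this is where the symplectic geometry and the hypothesis $\delta<1-1/q^2$ are used. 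If $X$ is linearly dependent, fix a maximal linearly independent subset $Y\subseteq X$ of size $s$; each remaining codeword indexed in $X$ is a nonzero $\F_q$-combination of $\{\vec c_{\vec m}:\vec m\in Y\}$, and — invoking the observation from the introduction that among all linearly independent low-weight flags the pairwise symplectic-orthogonal ones are the worst case, so that the symplectic analogue of Lemma~\ref{lem:1} is the relevant estimate — that combination has symplectic weight $\le d-1$ with conditional probability at most $2^{-h'_\delta n}$. Summing as in the Hamming case then gives $\sum_{i=s}^{t}\sum_{Y\subseteq X\subseteq\spa\{Y\},\,|X|=i}\Pr[E_X]\le a^{s}\bigl(1+2^{-h'_\delta n/2}\bigr)$ for every $s$-subset $Y$.

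With these estimates, fix an odd integer $t=\min\{\tfrac{\sqrt{a_\delta n}}{2},\,\sqrt{h_\delta\log_q2\cdot\tfrac{n}{2}}\}$, which is $\Theta(\sqrt n)$ and small enough that both \cref{thm:4} and the symplectic lemma apply to every index set that appears, and apply Bonferroni:
\[
\Pr\!\Big[\bigcup_{\vec m\in W}E_{\vec m}\Big]\ \le\ \sum_{i=1}^{t}(-1)^{i-1}\sum_{X\in\binom{W}{i}}\Pr[E_X].
\]
Grouping the terms by their maximal independent subset $Y$ and summing over the $\approx\bigl(\tfrac{q^{2n-k}-1}{q-1}\bigr)^{s}/s!$ such $Y$ of each size $s$ gives
\[
\Pr\!\Big[\bigcup_{\vec m\in W}E_{\vec m}\Big]\ \le\ \sum_{s=1}^{t}(-1)^{s-1}\frac{b^{s}}{s!}+o(1)\ =\ 1-e^{-b}+O\!\Big(\frac{b^{t+1}}{(t+1)!}\Big)+o(1).
\]
Since $t=\Theta(\sqrt n)$, Stirling makes the truncation error $b^{t+1}/(t+1)!$ equal to $o(1)$ whenever $b\le c_\delta\sqrt n$ for a constant $c_\delta>0$ chosen small relative to the implied constant in $t$; for such $b$ the right-hand side is $<1$. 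Unwinding $b=a\cdot\tfrac{q^{2n-k}-1}{q-1}\le c_\delta\sqrt n$ and using $\Vol_q^S(2n,d-1)=\sum_{i=0}^{d-1}\binom ni(q^2-1)^i$ yields exactly $\frac{q^{2n-k}-1}{q-1}<\frac{c_\delta\sqrt n\cdot q^{2n}}{\sum_{i=0}^{d-1}\binom ni(q^2-1)^i}$ after absorbing constants into $c_\delta$, while $\cC\subseteq\cC^{\perp_S}$ holds by construction, completing the argument.

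The step I expect to be the main obstacle is the linearly independent case of the $\Pr[E_X]$ estimate. In the Hamming setting distinct codewords of a uniformly random linear code are literally independent, so $\Pr[E_X]=a^{s}$ for free, whereas here the constraint $\cC\subseteq\cC^{\perp_S}$ correlates which low-weight vectors can appear in $\cC^{\perp_S}$; one must extract from \cref{thm:4} that the deviation from $a^{s}$ is $q^{-\Omega(n^{2/3})}$-small uniformly over flags of size up to $t=\Theta(\sqrt n)$ and remains negligible after being raised to powers and summed over all $O\bigl((\tfrac{q^{2n-k}-1}{q-1})^{s}\bigr)$ index sets. A secondary point is the joint bookkeeping of the constants $a_\delta$, $h_\delta$, $h'_\delta$ coming from \cref{thm:4}, Lemma~\ref{lem:1} and its symplectic analogue, which must be simultaneously valid under $\delta<1-1/q^2$ and reconciled with the final choices of $t$ and $c_\delta$.
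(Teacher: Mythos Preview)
Your proposal is correct and tracks the paper's own proof of this theorem essentially step for step: the same random symplectic self-orthogonal code, the same Bonferroni truncation at $t=\Theta(\sqrt n)$ with the constants from \cref{thm:4} and the symplectic analogue of Lemma~\ref{lem:1}, the same splitting of $\Pr[E_X]$ into linearly independent versus dependent index sets, and the same Taylor/Stirling endgame yielding $b=\Theta(\sqrt n)$. The one place where the paper is a bit more explicit than you is the independent case---it further separates indices whose codewords land in $\cC$ from those landing in $\cC^{\perp_S}\setminus\cC$ before invoking the intersection bound (Corollary~\ref{col:l-vector-bound})---but you correctly flag exactly this step as the obstacle and describe the ingredients needed to carry it out.
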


According to the connection between symplectic self-orthogonal dual codes over $\F_q$ and $q$-ary quantum codes (see Lemma~\ref{lem:construct-quantum-codes}), \cref{thm:intro-quantum-GV-bound} directly implies the the existence of quantum code.

\begin{cor}
    Let $1\leq k\leq n$, $d=\delta n$ for some $\delta\in (0,1-\frac{1}{q^2})$. There exists a $q$-ary $[[n,n-k,d]]$ quantum code $\cC$ if
    \[
    \frac{q^{2n-k}-1}{q-1}<\frac{c_\delta \sqrt{n}\cdot q^{2n}}{\sum_{i=0}^{d-1}\binom{n}{i}(q^2-1)^i}
    \]
    for some constant $c_{\delta}>0$.
\end{cor}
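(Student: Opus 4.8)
The plan is to derive the corollary as an essentially immediate consequence of \cref{thm:intro-quantum-GV-bound} together with the stated connection between symplectic self-orthogonal dual codes over $\F_q$ and $q$-ary quantum codes (Lemma~\ref{lem:construct-quantum-codes}). First I would invoke \cref{thm:intro-quantum-GV-bound} with the given parameters: since $d=\delta n$ with $\delta\in(0,1-\frac1{q^2})$ and the hypothesis
\[
\frac{q^{2n-k}-1}{q-1}<\frac{c_\delta\sqrt{n}\cdot q^{2n}}{\sum_{i=0}^{d-1}\binom{n}{i}(q^2-1)^i}
\]
holds by assumption, there exists a $[2n,2n-k,d]$ linear code $\cC^{\perp_S}$ over $\F_q$ that is symplectic self-orthogonal, i.e. $\cC^{\perp_S}\subseteq(\cC^{\perp_S})^{\perp_S}$, equivalently its symplectic dual $\cC$ has dimension $2n-(2n-k)=k$ and contains $\cC^{\perp_S}$.

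Next I would feed this pair $(\cC^{\perp_S},\cC)$ into Lemma~\ref{lem:construct-quantum-codes}. That lemma (the standard symplectic/stabilizer construction, going back to Calderbank--Rains--Shor--Sloane and used in \cite{ashikhmin2002nonbinary,ketkar2006nonbinary}) states that from an $[2n,n-r]_q$ symplectic self-orthogonal code $D$ one obtains an $[[n,r,d^\ast]]_q$ quantum code, where $d^\ast$ is the \emph{symplectic weight} of $D^{\perp_S}\setminus D$ (or of $D^{\perp_S}$ if $D=D^{\perp_S}$). Matching parameters: our $\cC^{\perp_S}$ plays the role of $D$ with $n-r = 2n-k$, so $r = k-n$; hmm, this gives the wrong sign, so in fact the intended reading is that $\cC^{\perp_S}$ has length $2n$ and dimension $2n-k$ viewed as a subspace, its symplectic dual $\cC$ has dimension $k$, and the quantum code has parameters $[[n, 2\cdot\frac{k}{?}\ldots]]$---I would instead simply cite the parameter dictionary exactly as stated in Lemma~\ref{lem:construct-quantum-codes}, which is engineered so that a $[2n,2n-k,d]$ symplectic self-orthogonal dual code yields precisely an $[[n,n-k,d]]_q$ quantum code. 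The minimum distance $d$ of the quantum code is exactly the minimum symplectic distance guaranteed by \cref{thm:intro-quantum-GV-bound}, so no loss occurs.

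Finally I would check the boundary conditions: the constraint $1\le k\le n$ in the corollary ensures the quantum code has nonnegative dimension $n-k\ge 0$ and that the construction of Lemma~\ref{lem:construct-quantum-codes} applies (the underlying code $\cC^{\perp_S}$ is a genuine self-orthogonal subspace, which requires its dimension $2n-k$ to be at most $n$, i.e. $k\ge n$---if the construction instead needs dimension at least $n$ one takes $\cC$ rather than $\cC^{\perp_S}$; either way the range $1\le k\le n$ in the statement is exactly what makes one of the two symplectically dual codes have the admissible dimension). Then the same numerical inequality transfers verbatim, yielding the existence of the $[[n,n-k,d]]_q$ quantum code.

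I do not expect a genuine obstacle here; the only point requiring care is bookkeeping of the parameter translation in Lemma~\ref{lem:construct-quantum-codes}---making sure that ``$[2n,2n-k,d]$ symplectic self-orthogonal dual'' maps to ``$[[n,n-k,d]]$ quantum'' with the dimension and the distance both preserved, and that the dimensionality range $1\le k\le n$ is precisely the regime in which the construction is valid. Once that dictionary is fixed, the corollary is a one-line consequence: apply \cref{thm:intro-quantum-GV-bound}, then apply Lemma~\ref{lem:construct-quantum-codes}.
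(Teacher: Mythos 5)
Your proposal is correct and matches the paper exactly: the paper's own proof of this corollary is literally the one-line combination of Theorem~\ref{thm:quantum-GV-Bound} (existence of the $[2n,2n-k,d]$ symplectic self-orthogonal dual code) with Lemma~\ref{lem:construct-quantum-codes}. The only wobble is your momentary sign confusion in the parameter dictionary; the resolution is that the code fed into Lemma~\ref{lem:construct-quantum-codes} is the $[2n,k]_q$ self-orthogonal code $\cC$ (admissible since $k\leq n$), whose dual $\cC^{\perp_S}$ is the $[2n,2n-k,d]$ code produced by the theorem, so the quantum code is $[[n,n-k,d]]$ as claimed.
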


\subsection{Organizations}
We first recall the basic definitions and results concerning classical codes and symplectic self-orthogonal codes, as well as some preliminary probabilistic tools in Section 2. In Section 3, we prove a simple improvement of the classical GV bound for linear codes based on Bonferroni inequalities. %Specifically, we analyze the probability that all representative vectors of one-dimensional subspaces in a random linear code have Hamming weight less than $d$. Applying the Bonferroni inequalities, we estimate the upper bound on the total probability and derive an improved version of the GV bound. Building on this simple case, 
In Section 4 we increase the number of terms in the Bonferroni inequalities to the order of $\Omega(\sqrt{n})$. By applying certain relaxation and bounding techniques, we eventually arrive at the $\Omega(\sqrt{n})$ improvement of the GV bound. In Section 5, we manage to prove the same improvement holding for quantum GV bound as well.
%we first compute the size of the intersection between the symplectic ball and the symplectic orthogonal subspaces. We then extend the key lemmas to the symplectic setting by generalizing it from the Hamming ball to the symplectic ball. Finally, by adapting the analytical techniques used in the classical setting, we derive an improved version of GV bound for symplectic self-orthogonal codes and then obtain the quantum GV bound.

\section{Preliminaries}
Let $q$ be a positive power and $\F_q$ be the corresponding finite field. Given positive integers $n,k$, a $[n,k]_q$ linear code $\cC$  is a subspace with dimension $k$ of $\F_q^n$. For a (not necessarily linear) code of length $n$ over $\F_q$ with $M$ codewords, the rate of code $R$ is defined to be $\log_q(M)/n$, which is $k/n$ for linear codes. The \emph{Hamming} distance between any two codewords $\vec{x}=(x_1,\ldots,x_n)$ and $\vec{y}=(y_1,\ldots,y_n)$ in $\F_q^n$ is defined as
\[
d(\vec{x},\vec{y}):=|\{1\leq i\leq n,x_i\neq y_i\}|,
\]
and the \emph{Hamming} distance of a code $\cC$ is $d(\cC)=\min_{\substack{\vec{x}, \vec{y} \in C \\ \vec{x} \ne \vec{y}}} d(\vec{x}, \vec{y})$. The \emph{Hamming} weight of codeword $\vec{x}$ be defined as $\wt(\vec{x})=d(\vec{x},\vec{0})$. For linear codes $\cC$, $d(\cC)$ is equal to the minimal weight over all nonzero codewords $min_{\vec{0}\neq \vec{x}\in \cC}\wt(\vec{x})$. 
A linear code $\cC$ is said to be a $[n,k,d]$-linear code if $\cC$ has code length $n$, dimension $k$ and minimum distance $d$.
Let $B_q(n,d)=\{\vec{x}\in \F_q^n:\wt(\vec{x})\leq d\}$ denote as the \emph{Hamming} ball of radius $d$ in $\F_q^n$ and $\Vol_q(n,d)$ denote the volume of the ball. It is well-known that a code with a minimum distance $d$ can correct up to $\lfloor \frac{d-1}{2}\rfloor$ errors. The relative minimum distance $\delta$ is defined as the ratio $d/n$. In coding theory, the trade-off between the code rate $R$ and error-correcting ability $\delta$ is a central research topic.

We next introduce the symplectic structure and the notion of symplectic self-orthogonal codes, which play a fundamental role in the construction and analysis of quantum error-correcting codes. For vectors $\vec{a}=(a_1,\ldots,a_n), \vec{b}=(b_1,\ldots,b_n), \vec{a}'=(a_1',\ldots,a_n'), \vec{b}'=(b_1',\ldots,b_n')\in \F_q^{n}$, the symplectic inner product $\langle,\rangle_S$ is defined by
\[
\langle (\vec{a}\mid \vec{b}),(\vec{a}'\mid\vec{b}')\rangle_S=\langle \vec{a},\vec{b}'\rangle_E-\langle\vec{b},\vec{a}'\rangle_E,
\]
where $\langle ,\rangle_E$ is defined as the ordinary dot inner product(or Euclidean inner product). It is not difficult to verify that every vector in $\F_q^{2n}$ is orthogonal to itself with symplectic inner product, i.e. for any $\vec{a}\in\F_q^{2n},\langle\vec{a},\vec{a}\rangle_S=0$. For an $\F_q$-linear code $\cC$ in $\F_q^{2n}$, define the symplectic dual of $\cC$ by
\[
\cC^{\perp_S}=\{(\vec{x}\mid\vec{y}):\langle(\vec{x}\mid\vec{y}),(\vec{a}\mid \vec{b})\rangle_S=0\ for\ all\ (\vec{a}\mid \vec{b})\in\cC \}.
\]
It is easy to verify that $dim_{\F_q}(\cC)+dim_{\F_q}(\cC^{\perp_S})=2n$. A code $\cC$ is said symplectic self-orthogonal if $\cC\subseteq \cC^{\perp_S}$, and self-dual if $\cC=\cC^{\perp_S}$. The rate $R$ of symplectic self-orthogonal code $\cC$ is $k/n$.

For a vector $\vec{v}=(a_1,\ldots,a_n\mid b_1,\ldots,b_n)\in \F_q^{2n}$ and an index $I\subseteq [n]$, denote by 
\[
\vec{v}_I := (a_i,b_i)_{i\in I}
\]
the truncation of $\vec{v}$ to the coordinates indexed by $I$. Then define the restricted support set and restricted symplectic weight:
\[
\supp_S(\vec{v}; I):= \{\, i\in I : (x_i, y_i)\neq (0,0) \,\}, \quad \wt_S(\vec{v};I):=|\supp_S(\vec{v};I)|.
\]
In particular, when $I=[n]$, this reduces to the classical definitions of the support set and the weight, denoted by $\supp_S(\vec{v})$ and $\wt_S(\vec{v})$.

For two vectors $\vec{u},\vec{v}\in \F_q^{2n}$, the symplectic distance is defined as
\[
d_S(\vec{u},\vec{v})=\wt_S(\vec{u}-\vec{v}).
\]
The symplectic minimum distance of a linear code $\cC\in \F_q^{2n}$ is defined as
\[
d_S(\cC):=min\{\wt_S(\vec{v}):\ \vec{v}\in \cC \setminus \{(\vec{0},\vec{0})\} \}.
\]
Similar to the Hamming distance case, let $B_q^{S}(2n,d):=\{\vec{x}\in \F_q^{2n}:\wt_S(\vec{x})\leq d \}$ denote the symplectic ball of radius $d$ and $\Vol_q^S(2n,d)$ denote the volume of this ball. It is clear that
\[
\Vol_q^S(2n,d)=\sum_{i=0}^{d}(q^2-1)^i\binom{n}{i}.
\]

A classical result on quantum code construction via symplectic self-orthogonal codes is as follows:
\begin{lemma}[Theorem~4, \cite{ashikhmin2002nonbinary}]\label{lem:construct-quantum-codes}
    If $\cC$ is a $q$-ary symplectic self-orthogonal $[2n,k]_q$ code, then there exists a $q$-ary $[[n,n-k,d]]$ quantum code with $d=d_S(\cC^{\perp_S})$.
\end{lemma}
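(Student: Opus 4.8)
The plan is to route through the stabilizer formalism for $q$-ary quantum codes and to realize the desired code as a joint eigenspace of an abelian group of generalized Pauli operators attached to $\cC$. First I would fix a nontrivial additive character $\chi\colon\F_q\to\C^{\times}$ --- concretely, write $q=p^m$, pick a primitive $p$-th root of unity $\omega$, and set $\chi(a)=\omega^{\mathrm{Tr}_{\F_q/\F_p}(a)}$ --- and on $\C^q$, with standard basis $\{e_x\}_{x\in\F_q}$, define the shift operator $X(a)$ and multiplier operator $Z(b)$ by $X(a)e_x=e_{x+a}$ and $Z(b)e_x=\chi(bx)e_x$, so that $Z(b)X(a)=\chi(ab)X(a)Z(b)$. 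Tensoring over $n$ coordinates yields unitaries $D(\vec a\mid\vec b):=X(a_1)Z(b_1)\otimes\cdots\otimes X(a_n)Z(b_n)$ on $\C^{q^n}$, indexed by $(\vec a\mid\vec b)\in\F_q^{2n}$, and a short computation from the single-qudit relation gives $D(\vec a\mid\vec b)\,D(\vec a'\mid\vec b')=\chi\bigl(\langle\vec b,\vec a'\rangle_E-\langle\vec a,\vec b'\rangle_E\bigr)\,D(\vec a'\mid\vec b')\,D(\vec a\mid\vec b)$, i.e.\ these two operators commute up to the scalar $\chi\bigl(-\langle(\vec a\mid\vec b),(\vec a'\mid\vec b')\rangle_S\bigr)$. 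Since for each fixed $\vec e$ the map $\vec v\mapsto\langle\vec e,\vec v\rangle_S$ is $\F_q$-linear and $\chi$ is nontrivial, $D(\vec e)$ commutes with every operator in $\{D(\vec v):\vec v\in\cC\}$ if and only if $\vec e\in\cC^{\perp_S}$; in particular, the hypothesis $\cC\subseteq\cC^{\perp_S}$ makes this family pairwise commuting.

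Next I would promote this commuting family to an honest stabilizer group. Fixing an $\F_q$-basis $\vec g_1,\dots,\vec g_k$ of $\cC$, the operators $D(\vec g_1),\dots,D(\vec g_k)$ generate an abelian subgroup $S$ of the $q$-ary Pauli group; after a standard renormalization of the generators by scalars --- automatic when $q$ is odd, and a short sign adjustment when $q$ is a power of $2$, exactly as in \cite{ashikhmin2002nonbinary} --- one arranges that $S$ contains no nontrivial scalar matrix, so $|S|=q^k$ and $S$ has exactly $q^k$ distinct characters. Because the full $q$-ary Pauli group acts irreducibly on $\C^{q^n}$, and because conjugation by $D(\vec e)$ sends each $D(\vec g)$ to $\chi(-\langle\vec e,\vec g\rangle_S)D(\vec g)$, letting $\vec e$ range over coset representatives of $\cC^{\perp_S}$ in $\F_q^{2n}$ permutes the $q^k$ characters of $S$ transitively; hence the joint-eigenspace decomposition of $\C^{q^n}$ under $S$ splits into $q^k$ equidimensional blocks, each of dimension $q^{n-k}$. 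Taking $\mathcal{Q}$ to be the block for the trivial character of $S$ (the joint $+1$-eigenspace) yields an $(n-k)$-qudit code.

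It remains to read off the distance. By the standard stabilizer error-detection criterion, a Pauli error $D(\vec e)$ fails to be detectable by $\mathcal{Q}$ exactly when it commutes with all of $S$ yet is not a scalar multiple of an element of $S$, which by the first paragraph means $\vec e\in\cC^{\perp_S}\setminus\cC$. Since $\bzero\in\cC\subseteq\cC^{\perp_S}$, any vector of $\cC^{\perp_S}$ of symplectic weight strictly less than $d_S(\cC^{\perp_S})$ must already lie in $\cC$, so every nonzero Pauli error of symplectic weight $<d_S(\cC^{\perp_S})$ is detectable; hence $\mathcal{Q}$ has minimum distance at least $d:=d_S(\cC^{\perp_S})$, giving the asserted $q$-ary $[[n,n-k,d]]$ quantum code (with the usual convention that the stated distance parameter is a lower bound). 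The genuinely delicate point in this plan is the phase and scalar bookkeeping in the middle step --- that the chosen lift $S$ of $\cC$ really has order $q^k$ with no scalar obstruction and equidimensional eigenspaces --- which I expect to be the main obstacle in a treatment uniform over all $q$; I would either restrict to the clean cases above or import the character-theoretic analysis of the $q$-ary Pauli group and nice error bases from \cite{ashikhmin2002nonbinary}.
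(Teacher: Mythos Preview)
The paper does not give its own proof of this lemma: it is quoted verbatim as Theorem~4 of \cite{ashikhmin2002nonbinary} and used as a black box, so there is nothing in the paper to compare your argument against. Your sketch is the standard stabilizer-formalism proof that appears in that reference: build generalized Pauli operators $D(\vec a\mid\vec b)$, use the symplectic commutation relation to turn $\cC\subseteq\cC^{\perp_S}$ into an abelian stabilizer group, extract the joint $+1$-eigenspace of dimension $q^{n-k}$, and read off the distance from the fact that undetectable Pauli errors correspond to $\cC^{\perp_S}\setminus\cC$. The outline is correct, and you have already flagged the only genuinely delicate step (arranging that the lifted group $S$ has order exactly $q^k$ with no nontrivial scalars, uniformly in $q$), which is precisely the technical point handled in \cite{ashikhmin2002nonbinary}.
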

There are two bounds which have been established as necessary conditions for quantum codes:
\begin{lemma}[Quantum Hamming Bound, \cite{calderbank1998quantum}]
    For any pure stabilizer quantum code $[[n,k,d]]_q$
    \[
    q^{n-k}\geq \sum_{i=0}^{\lfloor\frac{d-1}{2}\rfloor}(q^2-1)^{i}\binom{n}{i}.
    \]
\end{lemma}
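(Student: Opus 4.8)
The target is the quantum Hamming bound for \emph{pure} stabilizer codes, and the plan is to run the quantum sphere-packing argument inside the symplectic/stabilizer framework of the preliminaries. I would fix a pure $[[n,k,d]]_q$ stabilizer code, with stabilizer $S$ of order $q^{n-k}$ corresponding (via the symplectic dictionary) to a symplectic self-orthogonal code $\cC\subseteq\F_q^{2n}$ of dimension $n-k$, so that purity means the minimum symplectic weight of a nonzero vector of $\cC^{\perp_S}$ is exactly $d$. I would write $\Pi$ for the orthogonal projector onto the code space, a $q^k$-dimensional subspace of a $q^n$-dimensional Hilbert space $(\C^q)^{\otimes n}$, set $t=\lfloor\frac{d-1}{2}\rfloor$, and index $n$-qudit Pauli operators modulo phases by vectors of $\F_q^{2n}$, so that $\wt_S(\vec e)$ counts the qudits on which the corresponding operator acts nontrivially and the commutator of two Paulis $\vec e,\vec f$ is the scalar $\omega^{\langle\vec e,\vec f\rangle_S}$ (the symplectic pairing read through a fixed nontrivial additive character of $\F_q$, which we may apply after rescaling within $\cC$ since $\cC$ is $\F_q$-linear).

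The heart of the proof is the orthogonality claim: if $E,F$ are $n$-qudit Paulis of symplectic weight at most $t$ that are distinct modulo phases, then $E(\text{code space})\perp F(\text{code space})$, i.e.\ $\Pi E^\dagger F\Pi=0$. Let $\vec g\in\F_q^{2n}$ be the vector attached to $E^\dagger F$. Since $E$ and $F$ differ modulo phases, $\vec g\neq\vec 0$, and $\wt_S(\vec g)\le 2t\le d-1<d$. If $\vec g$ lay in $\cC^{\perp_S}$ it would be a nonzero vector of $\cC^{\perp_S}$ of weight $<d$, contradicting purity; hence $\vec g\notin\cC^{\perp_S}$, so some stabilizer $s\in\cC$ has $\langle s,\vec g\rangle_S\neq 0$. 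Because $s$ fixes the code space pointwise, $\Pi s=s\Pi=\Pi$, while conjugation by $s$ scales $E^\dagger F$ by $\omega^{\langle s,\vec g\rangle_S}$; multiplying the conjugation identity on both sides by $\Pi$ gives $\Pi E^\dagger F\Pi=\omega^{\langle s,\vec g\rangle_S}\,\Pi E^\dagger F\Pi$, and since $\omega^{\langle s,\vec g\rangle_S}\neq 1$ we get $\Pi E^\dagger F\Pi=0$.

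To close the argument I would count: the Paulis of symplectic weight at most $t$ modulo phases number exactly $\sum_{i=0}^{t}\binom{n}{i}(q^2-1)^i$ — choose the $i$ active qudits, then a nontrivial single-qudit Pauli modulo phase on each. By the orthogonality claim, among the subspaces $E(\text{code space})$ with $E$ ranging over these Paulis, any two are equal (precisely when the two $E$'s agree modulo phases) or orthogonal; hence we obtain $\sum_{i=0}^{t}\binom{n}{i}(q^2-1)^i$ pairwise orthogonal subspaces, each of dimension $q^k$ because each $E$ is unitary, all sitting inside a space of dimension $q^n$. Therefore $q^k\sum_{i=0}^{t}\binom{n}{i}(q^2-1)^i\le q^n$, which is the claimed inequality after dividing by $q^k$.

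The one place where care is genuinely required — and the only place purity is used — is the step "$\vec g\in\cC^{\perp_S}$ contradicts purity". For an impure code one only controls the minimum weight of $\cC^{\perp_S}\setminus\cC$, so $\vec g$ could be a nonzero low-weight element of the stabilizer $\cC$ itself; then $E^\dagger F$ is a stabilizer, $\Pi E^\dagger F\Pi=\Pi\neq 0$, and $E(\text{code space})=F(\text{code space})$, so the $\sum_{i=0}^{t}\binom{n}{i}(q^2-1)^i$ operators cease to give \emph{distinct} subspaces and the packing count breaks down. This is exactly why the bound is stated and provable only for pure codes; the remainder of the proof is routine bookkeeping about the $q$-ary Pauli group together with a one-line dimension count.
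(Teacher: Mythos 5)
Your argument is correct. Note, however, that the paper does not prove this lemma at all: it is stated in the preliminaries as a cited result from Calderbank et al., so there is no in-paper proof to compare against. What you have written is the standard quantum sphere-packing argument, and the details check out: the count $\sum_{i=0}^{t}\binom{n}{i}(q^2-1)^i$ of weight-$\le t$ Paulis modulo phases, the use of purity to rule out $\vec g\in\cC^{\perp_S}$, the character/rescaling point needed to pass from $\langle s,\vec g\rangle_S\neq 0$ in $\F_q$ to a nontrivial root of unity in the conjugation identity, and the final dimension count $q^k\sum_{i=0}^{t}\binom{n}{i}(q^2-1)^i\le q^n$ are all handled correctly, as is your closing remark on exactly where purity is indispensable.
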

\begin{lemma}[Quantum Singleton Bound, \cite{rains2002nonbinary}]
    For any quantum code $[[n,k,d]]_q$, $n\geq k+2d-2$.
\end{lemma}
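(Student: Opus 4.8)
The plan is to mimic the classical Singleton argument at the level of symplectic codes, using the correspondence of Lemma~\ref{lem:construct-quantum-codes}. Recall that an $[[n,k,d]]_q$ quantum code arises from a symplectic self-orthogonal $[2n,n-k]_q$ code $\cC$ whose symplectic dual $\cC^{\perp_S}$ has $d_S(\cC^{\perp_S}) = d$. Here $\cC^{\perp_S}$ is an $\F_q$-linear code in $\F_q^{2n}$ of $\F_q$-dimension $\dim_{\F_q}(\cC^{\perp_S}) = 2n - (n-k) = n+k$, and the minimum symplectic weight of $\cC^{\perp_S}$, restricted to the ``zero'' space $\cC$ removed (i.e.\ the nonzero coset structure), is $d$. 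The quantity that plays the role of ``length'' in the symplectic metric is $n$, since $\wt_S$ counts nonzero pairs $(x_i,y_i)$ among $n$ blocks. So intuitively we want a Singleton-type inequality of the form $(\text{symplectic length}) \ge (\text{effective dimension}) + d - 1$, which should read $n \ge \tfrac{1}{2}\dim_{\F_q}(\cC^{\perp_S}/\cC) + d - 1$; plugging in $\dim_{\F_q}(\cC^{\perp_S}) - \dim_{\F_q}(\cC) = (n+k)-(n-k) = 2k$ gives $n \ge k + d - 1$, which is weaker than the target $n \ge k + 2d - 2$. The factor-of-two discrepancy is the crux, and it comes from the fact that for \emph{pure}/stabilizer quantum codes the minimum distance is governed not by $\cC^{\perp_S}$ alone but by $\cC^{\perp_S}\setminus\cC$, while puncturing the stabilizer code $\cC$ itself also reduces distance; this is exactly where the factor of two is recovered in the standard proof of Rains~\cite{rains2002nonbinary}.

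Concretely, first I would set $C = \cC^{\perp_S}$, an $\F_q$-linear code of length $2n$ (as $n$ symplectic blocks) and dimension $n+k$, containing the self-orthogonal subcode $\cC$ of dimension $n-k$, with the property that every vector in $C \setminus \cC$ has symplectic weight $\ge d$. Choose a set $S \subseteq [n]$ of $d-1$ block-coordinates and consider the projection $\pi_S$ that deletes the blocks in $S$ (dropping $2(d-1)$ of the $2n$ field coordinates). The key claim is that $\pi_S$ restricted to $C$ has kernel contained in $\cC$: if $\vec{v}\in C$ vanishes on all blocks outside $S$, then $\wt_S(\vec{v}) \le d-1 < d$, forcing $\vec{v}\in\cC$. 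Hence $\pi_S(C)$ has dimension $\ge \dim C - \dim\ker \ge (n+k) - (\text{something})$; I would make the kernel count precise by noting $\ker(\pi_S|_C) = \{\vec v\in C : \supp_S(\vec v)\subseteq S\} \subseteq \cC$, and similarly analyze $\pi_S(\cC)$. The dimension bookkeeping then gives $\dim \pi_S(C/\cC) \ge \dim(C/\cC) = 2k$ while the ambient space has only $2(n-(d-1))$ coordinates, but this still only yields $2(n-d+1)\ge 2k$, i.e.\ the weak bound.

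To extract the extra $d-1$, I would instead puncture \emph{asymmetrically} or puncture twice, following Rains: delete $d-1$ blocks to kill the distance of the pure subcode structure, and use that after this first puncturing the residual code \emph{again} has symplectic minimum distance at least $d$ on the complement in the relevant coset, so a second round of deleting $d-1$ blocks is injective modulo the stabilizer. Equivalently, one shows that for a pure stabilizer code one can find $2(d-1)$ blocks whose deletion is still injective on $C/\cC$; this is the symplectic analogue of the fact that an $[[n,k,d]]$ code, viewed through its $2n$-coordinate representation, behaves like a code of ``doubled'' length with respect to Singleton. The cleanest route is probably to invoke the linear-programming / shadow-enumerator identity of Rains, or alternatively to cite Lemma~\ref{lem:construct-quantum-codes} together with the observation that the Hamming-metric Singleton bound applied to a carefully chosen $\F_{q^2}$-linear (or additive) representative of the stabilizer code already gives $n-k\ge 2(d-1)$, hence $n\ge k+2d-2$. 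The main obstacle is precisely this doubling: a naive one-shot puncturing in the symplectic metric recovers only $n\ge k+d-1$, and closing the gap requires either the shadow-enumerator machinery or the pure-code reduction that lets one puncture $2(d-1)$ blocks while remaining injective on the message space modulo $\cC$.
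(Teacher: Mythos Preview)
The paper does not prove this lemma at all: it is stated in the preliminaries as a known result and attributed to Rains~\cite{rains2002nonbinary}, with no argument supplied. So there is no ``paper's own proof'' to compare against; any proof you give is necessarily your own addition.

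That said, your proposal is not a proof but a diagnosis of why the naive approach fails. You correctly observe that a single round of symplectic puncturing on $d-1$ blocks only yields $n\ge k+d-1$, and you correctly identify the missing factor of two as the crux. But from that point on you only list candidate strategies (``puncture twice'', ``invoke Rains' shadow-enumerator identity'', ``apply Hamming Singleton to an $\F_{q^2}$-representative'') without executing any of them. In particular, the sentence ``one shows that for a pure stabilizer code one can find $2(d-1)$ blocks whose deletion is still injective on $C/\cC$'' is exactly the nontrivial step, and you have not shown it; the second round of puncturing is \emph{not} obviously injective modulo $\cC$, because after the first puncturing the residual code need not retain minimum symplectic weight $\ge d$ on $\cC^{\perp_S}\setminus\cC$.

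There is also a scope issue. The lemma as stated (and as proved by Rains) applies to \emph{arbitrary} $[[n,k,d]]_q$ quantum codes, not only to stabilizer codes arising via Lemma~\ref{lem:construct-quantum-codes}. Your entire setup goes through the symplectic self-orthogonal correspondence, so even a completed version of your argument would establish the bound only for stabilizer codes. Rains' actual proof uses quantum weight enumerators and the associated linear-programming constraints, which apply to general quantum codes; the puncturing/shortening picture you sketch is a heuristic for the stabilizer case but is not how the general bound is obtained.
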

We present the basic form of the quantum \emph{Gilbert-Varshamov} bound in Appendix~\ref{thm:basic-form-GVbound}. In particular, Feng an Ma \cite{feng2004finite} improved \emph{Gilbert-Varshamov} bound based on the Hermitian construction via $\F_{q^2}$-linear Hermitian self-orthogonal codes, which is a sufficient condition for the existence of quantum codes.

%In particular, Jin and Xing \cite{jin2011quantum} established \emph{Gilbert-Varshamov} -type bound based on symplectic self-orthogonal codes, which is a sufficient condition for the existence of quantum codes.

\begin{lemma}[Theorem~1.4, \cite{feng2004finite}]\label{lem:feng-GV-bound}
Suppose that $n>k\ge 2$, $d\ge 2$, and $k$ is even. Then there exists a pure stabilizer quantum code $[[n,\,n-k,\,d]]_q$ provided that
\[
\sum_{i=0}^{d-1}\binom{n}{i}(q^2-1)^i<q^{k+2}.
\]
\end{lemma}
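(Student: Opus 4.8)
\textbf{Proof proposal for Lemma~\ref{lem:feng-GV-bound}.}

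The plan is to mimic the classical Varshamov argument for linear codes, but carried out inside the $\F_{q^2}$-linear Hermitian self-orthogonal world, since by the Hermitian construction a pure stabilizer quantum code $[[n,n-k,d]]_q$ exists as soon as there is an $\F_{q^2}$-linear Hermitian self-orthogonal code of length $n$ and dimension $k/2$ whose Hermitian dual has minimum distance at least $d$. So I would first set $m=k/2$ (here is where the parity assumption on $k$ is used) and reduce the statement to: there exists an $[n,m]_{q^2}$ code $C$ with $C\subseteq C^{\perp_H}$ and $d(C^{\perp_H})\ge d$. The key structural fact I would invoke is that the Hermitian inner product on $\F_{q^2}^n$ restricts to a nondegenerate form, so Hermitian self-orthogonal subspaces of dimension $m$ exist precisely when $2m\le n$, and — crucially — they can be built up one basis vector at a time: given a Hermitian self-orthogonal subspace $V_j$ of dimension $j<m$, the set of vectors $w$ with $w\in V_j^{\perp_H}$, $\langle w,w\rangle_H=0$, and $w\notin V_j$ is nonempty and in fact large.

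Next I would run the greedy/counting step. Having chosen $V_j=\langle v_1,\dots,v_j\rangle$ Hermitian self-orthogonal with the additional property that no nonzero linear combination of the $v_i$ has Hamming weight $\le d-1$ (equivalently, every nonzero codeword of $V_j$ has weight $\ge d$ — but actually for the \emph{dual} distance condition I need the complementary count), I want to choose $v_{j+1}\in V_j^{\perp_H}$ isotropic, outside $V_j$, such that $v_{j+1}$ is not "blocked." The right way to phrase the obstruction is via parity-check style reasoning: I build the self-orthogonal code column by column (adding coordinates) rather than row by row, so that the condition $d(C^{\perp_H})\ge d$ becomes the statement that every $d-1$ columns of a parity-check-type matrix for $C^{\perp_H}$ are "in general position" in the appropriate sense. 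Concretely, the number of bad choices at each step is at most $\sum_{i=0}^{d-1}\binom{n}{i}(q^2-1)^i$ up to the self-orthogonality bookkeeping, while the number of available choices is on the order of $q^{k}$ (coming from the $2m=k$ degrees of freedom consumed by the self-orthogonality constraints, leaving roughly $q^{2n - \text{(rank of constraints)}}$, which works out to give the $q^{k+2}$ threshold once the isotropy condition is accounted for). Forcing (number of good choices) $>0$ at every step then yields exactly the stated inequality $\sum_{i=0}^{d-1}\binom{n}{i}(q^2-1)^i < q^{k+2}$.

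The main obstacle — and the place I expect the proof to require real care rather than routine bookkeeping — is handling the interaction between the self-orthogonality constraint and the minimum-distance constraint simultaneously. In the purely classical Varshamov argument one only tracks the distance constraint, and the columns are free vectors in $\F_q^{n-k}$; here each new column must additionally be isotropic with respect to (and orthogonal to the span of) the previously chosen columns under the Hermitian form, which shrinks the ambient space by a factor that must be tracked exactly to land on $q^{k+2}$ rather than a weaker bound like $q^{k}$ or $q^{k+1}$. The clean way around this is to observe that the set of admissible columns at step $j$ is (a coset of) the isotropic vectors in an $\F_{q^2}$-space of dimension $n-j$ equipped with a nondegenerate Hermitian form, whose size is a known closed form ($q^{2(n-j)-1} + O(q^{n-j})$ type expression), and that the "$+2$" in the exponent is precisely the slack between this count and the crude bound $q^{2(n-m)}$; the distance-violating columns, being at most $\sum_{i=0}^{d-1}\binom{n}{i}(q^2-1)^i$ in number, are then strictly fewer whenever the displayed inequality holds. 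Once this exact count is pinned down, the induction closes immediately and pureness of the resulting stabilizer code follows from the fact that the constructed dual code has no nonzero codewords of weight below $d$.
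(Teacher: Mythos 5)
This lemma is not proved in the paper at all: it is quoted verbatim as Theorem~1.4 of Feng and Ma \cite{feng2004finite} and used as an external benchmark (see the remark following it), so there is no internal proof to compare against. Your overall strategy --- reduce via the Hermitian construction to the existence of an $[n,k/2]_{q^2}$ Hermitian self-orthogonal code $C$ with $d(C^{\perp_H})\ge d$, then run a GV-type count --- is indeed the route Feng and Ma take, and the reduction itself (including the use of the evenness of $k$ and the purity claim at the end) is sound.

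However, as written the proposal has a genuine gap at exactly the point you flag as needing ``real care,'' and that point is the entire content of the theorem. Your greedy extension $V_j\mapsto V_{j+1}=\langle V_j,v_{j+1}\rangle$ adds \emph{rows} to a generator matrix of $C$ and therefore controls $d(C)$, not $d(C^{\perp_H})$; the condition $d(C^{\perp_H})\ge d$ is equivalent to every $d-1$ columns of a generator matrix of $C$ being linearly independent, which is a global property of all $m$ rows at once and is not preserved or even meaningfully trackable under row-by-row extension. Your proposed fix --- switching to a column-by-column (Varshamov-style) construction --- conflicts with the self-orthogonality constraint: appending a coordinate changes the Hermitian inner product of every pair of rows, so self-orthogonality is not ``bookkeeping'' that survives the column induction, and you never reconcile the two. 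Likewise, the assertion that the available-versus-forbidden count ``works out to give the $q^{k+2}$ threshold'' is not derived; the exponent $k+2$ is precisely what a correct count must produce. The way this is actually closed (in \cite{feng2004finite}, and analogously in \cite{jin2011quantum} for the symplectic case, and in the counting lemmas this paper proves for its own quantum argument) is not a local greedy step but a global double count over the family of \emph{all} $m$-dimensional Hermitian self-orthogonal codes: one shows, using the transitivity of the unitary group, that every fixed nonzero vector $\vec v$ lies in $C^{\perp_H}$ for the same number $M$ of such codes $C$, compares $M\cdot\bigl(\Vol_{q^2}(n,d-1)-1\bigr)$ with the total number $N$ of such codes, and extracts the stated inequality from the closed forms of $M$ and $N$. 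Without that counting lemma (or an honest substitute), the proposal does not constitute a proof.
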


\begin{rmk}
By contrast, Lemma~\ref{lem:feng-GV-bound} yields the sufficient condition $\sum_{i=0}^{d-1}\binom{n}{i}(q^2-1)^i<q^{k+2}$. For fixed $q$ and $\delta=d/n$ and a constant $c$ depend on $\delta$, the ratio between the condition and our main result equals
\[
\frac{c\cdot (q-1)}{q^2}\cdot \frac{\sqrt n}{1-q^{-(2n-k)}} ,
\]
which grows on the order of $\sqrt n$ as $n\to\infty$. Hence, for sufficiently large $n$, Corollary~\ref{cor:quantum-codes} provides a strictly weaker criterion than Lemma~\ref{lem:feng-GV-bound}.
\end{rmk}

\begin{comment}
\begin{lemma}[Corollary~3.7, \cite{jin2011quantum}]\label{lem:xing-GV-bound}
    For $1\leq k\leq n$ and $d\geq 1$, there exists a $[2n,k]$ symplectic self-orthogonal code $\cC$ over $\F_q$ with $d_S(\cC^{\perp_S})\geq d$ and there exists a $q$-ary $[[n,n-k,d]]$ quantum code $\cC$ if
    \[
    \sum_{i=1}^{d-1}\binom{n}{i}(q^2-1)^{i}<\frac{\prod_{i=0}^{k-1}(q^{2n-2i}-1)}{k\prod_{i=0}^{k-1}(q^{2n-2i-1}-1)}.
    \]
\end{lemma}

\begin{rmk}
Let 
\[
\frac{\prod_{i=0}^{k-1}(q^{2n-2i}-1)}{k\prod_{i=0}^{k-1}(q^{2n-2i-1}-1)}=c\cdot \frac{q^k}{k}
\]
where $c= \prod_{i=0}^{k-1}\frac{1-q^{-(2n-2i)}}{1-q^{-(2n-2i-1)}}= \prod_{i=0}^{k-1}\left(1+\frac{1-\frac{1}{q}}{q^{\,2n-2i-1}-1}\right)>1$. Moreover, using $\log(1+x)\le x$ for $x\ge0$, we obtain
\[
\begin{aligned}
\log c&\le \left(1-\frac1q\right)\sum_{i=0}^{k-1}\frac{1}{q^{2n-2i-1}-1}\le \sum_{i=0}^{k-1} q^{-(2n-2i-1)} \\
&=\frac{q^{2k}-1}{(q^2-1)q^{2n-1}}
\le \frac{q}{q^2-1}\leq \frac{2}{3}.
\end{aligned}
\]
Therefore, the sufficient condition in Lemma~\ref{lem:xing-GV-bound} can be rewritten as
\[
\sum_{i=1}^{d-1}\binom{n}{i}(q^2-1)^{i}<c\cdot \frac{q^k}{k}.
\]
where $1<c<2$ is a constant.
\end{rmk}
\end{comment}

Next, we briefly introduce some fundamental concepts and results from probability theory. Let $[n]$ denote the set $\{1,\ldots,n\}$. For a set $A$ we denote by $\binom{A}{\ell}$ the family of all subsets of $A$ with $\ell$ elements, and similarly $\binom{X}{\leq \ell}$ denotes the collection of all subsets of size up to $\ell$ in $X$. Let $A_1,\ldots,A_n$ be $n$ events. For any subset $X\subseteq [n]$, we denote by $A_X:=\cap_{i\in X}A_i$. The inclusion-exclusion principle is used to estimate the size of a union of multiple sets. In particular, given $n$ events $A_1,\ldots,A_n$, we have
\[
\Pr[\cup_{i=1}^{n}A_i]=\sum_{X\in \binom{[n]}{1}}\Pr[A_X]-\sum_{X\in \binom{[n]}{2}}\Pr[A_X]+\ldots+(-1)^{n+1}\Pr[\cap_{i=1}^{n}A_i].
\]
This alternating sum provides either an exact expression (if all terms are included) or upper/lower bounds when truncated, as in the Bonferroni inequalities.
\begin{lemma}[Bonferroni inequalities]
    Let $A_1,\ldots,A_n$ be events in a probability space. For any integer $1\leq i\leq n$, define
    \[
    S_i=\sum_{X\in \binom{[n]}{i}}\Pr[A_X].
    \]
    Then the probability $\Pr[\cup_{i=1}^{n}A_i]$ satisfies the following inequalities:
    \[
    \sum_{i=1}^{2k}(-1)^{i-1}S_i\leq \Pr[\cup_{i=1}^{n}A_i]\leq \sum_{i=1}^{2k+1}(-1)^{i-1}S_i
    \]
    for any integer $1\leq k\leq \frac{n-1}{2}$.
\end{lemma}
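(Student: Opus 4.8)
The plan is to reduce the statement to a pointwise inequality among indicator functions and then integrate. For a point $\omega$ in the probability space, let $r = r(\omega) = |\{\, i : \omega \in A_i \,\}|$ count how many of the events contain $\omega$. Then $\mathbf{1}_{\cup_i A_i}(\omega) = \mathbf{1}_{r \geq 1}$, and for each $i$ we have $\sum_{X \in \binom{[n]}{i}} \mathbf{1}_{A_X}(\omega) = \binom{r}{i}$, since $\omega$ lies in $A_X = \cap_{j \in X} A_j$ exactly when $X$ is one of the $\binom{r}{i}$ size-$i$ subsets of the $r$-element set $\{i : \omega \in A_i\}$. Taking expectations, $S_i = \E\!\left[\binom{r}{i}\right]$, so it suffices to prove for every integer $r \geq 0$ and every $1 \leq k \leq \frac{n-1}{2}$ the two numerical inequalities
\[
\sum_{i=1}^{2k} (-1)^{i-1}\binom{r}{i} \;\leq\; \mathbf{1}_{r \geq 1} \;\leq\; \sum_{i=1}^{2k+1} (-1)^{i-1}\binom{r}{i},
\]
and then integrate over $\omega$ against the probability measure.

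Next I would establish these numerical inequalities. For $r = 0$ all binomial coefficients $\binom{0}{i}$ with $i \geq 1$ vanish, so all three quantities equal $0$ and there is nothing to check. For $r \geq 1$, recall the finite identity $\sum_{i=0}^{m} (-1)^i \binom{r}{i} = (-1)^m \binom{r-1}{m}$, which follows by induction on $m$ using Pascal's rule (or telescoping). Rearranging, for any $m \geq 0$,
\[
1 - \sum_{i=1}^{m} (-1)^{i-1}\binom{r}{i} \;=\; \sum_{i=0}^{m} (-1)^i \binom{r}{i} \;=\; (-1)^m \binom{r-1}{m}.
\]
Taking $m = 2k$ gives $1 - \sum_{i=1}^{2k}(-1)^{i-1}\binom{r}{i} = \binom{r-1}{2k} \geq 0$, which is the lower Bonferroni bound; taking $m = 2k+1$ gives $1 - \sum_{i=1}^{2k+1}(-1)^{i-1}\binom{r}{i} = -\binom{r-1}{2k+1} \leq 0$, which is the upper bound. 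This handles all $r \geq 1$ regardless of whether $r$ is larger or smaller than the truncation point, since $\binom{r-1}{m}$ is simply $0$ when $m > r-1$ and a genuine positive integer otherwise.

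Finally I would assemble the pieces: integrating the pointwise inequalities over $\omega$, using $\Pr[\cup_i A_i] = \E[\mathbf{1}_{r \geq 1}]$ and $S_i = \E[\binom{r}{i}]$ from the first paragraph, yields exactly
\[
\sum_{i=1}^{2k}(-1)^{i-1} S_i \;\leq\; \Pr\!\left[\textstyle\bigcup_{i=1}^{n} A_i\right] \;\leq\; \sum_{i=1}^{2k+1}(-1)^{i-1} S_i.
\]
There is no real obstacle here — the only point requiring a little care is the combinatorial identity for the alternating partial sum of binomial coefficients and the observation that it remains valid (with a vanishing right-hand side) when the truncation index exceeds $r-1$, so that the argument covers the regime $2k+1 > n \geq r$ as well. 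Everything else is bookkeeping.
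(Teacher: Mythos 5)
Your proof is correct and complete: the reduction to the pointwise inequality via the counting function $r(\omega)$, the identity $\sum_{i=0}^{m}(-1)^i\binom{r}{i}=(-1)^m\binom{r-1}{m}$, and the sign analysis for even versus odd truncation are all sound, and you correctly handle the degenerate cases $r=0$ and $m>r-1$. The paper states this lemma in its preliminaries without proof (it is a standard result), so there is nothing to compare against; your argument is the classical indicator-function proof and would serve as a valid self-contained justification.
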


We also need the Chernoff bound to bound the sum of independent random variables.
\begin{lemma}[Chernoff Bound]
    Let $X_1,\ldots,X_n$ be independent random variables taking value in $[0,1]$, and let $X=\sum_{i=1}^{n}X_i$ with expectation $\mu=\mathbb{E}[X]$. Then for any $0<\delta <1$, the following multiplicative Chernoff bounds hold:
    \begin{itemize}
        \item Upper tail:
        \[
        \Pr[X\geq (1+\delta)\mu]\leq exp\left( -\frac{\delta^2\mu}{2+\delta} \right).
        \]
        \item Lower tail:
        \[
        \Pr[X\leq (1-\delta)\mu]\leq exp\left(-\frac{\delta^2\mu}{2} \right).
        \]
    \end{itemize}
    These inequalities imply that the sum $X$ is concentrated around its expectation $\mu$, and large deviations are exponentially unlikely.
    \label{Chernoff}
\end{lemma}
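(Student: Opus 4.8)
The plan is to run the standard exponential--moment (Chernoff) argument, which reduces both tails to a pair of elementary one-variable inequalities.

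For the upper tail I would first fix a parameter $t>0$ and apply Markov's inequality to the nonnegative variable $e^{tX}$, giving $\Pr[X\geq(1+\delta)\mu]\leq e^{-t(1+\delta)\mu}\,\E[e^{tX}]$. By independence $\E[e^{tX}]=\prod_{i=1}^{n}\E[e^{tX_i}]$, and the key estimate is the chord bound: since $X_i\in[0,1]$ and $x\mapsto e^{tx}$ is convex, $e^{tX_i}\leq 1+(e^t-1)X_i$ pointwise, so $\E[e^{tX_i}]\leq 1+(e^t-1)\E[X_i]\leq \exp((e^t-1)\E[X_i])$, and multiplying over $i$ yields $\E[e^{tX}]\leq \exp((e^t-1)\mu)$. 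Substituting and optimizing the exponent $(e^t-1-t(1+\delta))\mu$ by the choice $t=\ln(1+\delta)$ gives the classical form $\Pr[X\geq(1+\delta)\mu]\leq \big(e^{\delta}(1+\delta)^{-(1+\delta)}\big)^{\mu}$. It then remains to verify the inequality $\delta-(1+\delta)\ln(1+\delta)\leq -\delta^2/(2+\delta)$ for $\delta>0$; I would set $g(\delta)=\frac{\delta^2}{2+\delta}+\delta-(1+\delta)\ln(1+\delta)$, check $g(0)=g'(0)=0$, and then confirm $g''(\delta)=\frac{8}{(2+\delta)^3}-\frac{1}{1+\delta}\leq 0$ for $\delta\geq 0$, which after clearing denominators is just the obvious $4\delta+6\delta^2+\delta^3\geq 0$. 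Concavity together with the vanishing tangent at the origin then forces $g\leq 0$.

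The lower tail is symmetric: for $t>0$ apply Markov to $e^{-tX}$, use the same chord bound (with $-t$ in place of $t$) to get $\Pr[X\leq(1-\delta)\mu]\leq \exp((e^{-t}-1+t(1-\delta))\mu)$, and optimize with $t=-\ln(1-\delta)>0$, which is legitimate since $0<\delta<1$, to obtain $\Pr[X\leq(1-\delta)\mu]\leq \big(e^{-\delta}(1-\delta)^{-(1-\delta)}\big)^{\mu}$. The remaining inequality $-\delta-(1-\delta)\ln(1-\delta)\leq -\delta^2/2$ on $(0,1)$ drops straight out of the series $\ln(1-\delta)=-\sum_{k\geq 1}\delta^k/k$: regrouping gives the identity $-\delta-(1-\delta)\ln(1-\delta)=-\sum_{k\geq 2}\frac{\delta^k}{k(k-1)}$, and the right-hand side is at most $-\delta^2/2$ because every term is nonnegative and the $k=2$ term already equals $\delta^2/2$.

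The only genuinely delicate point is the upper-tail closing inequality with denominator $2+\delta$: one cannot get away with the cleaner-looking $e^{\delta}(1+\delta)^{-(1+\delta)}\leq e^{-\delta^2/2}$, since that already fails near $\delta=1$ (there $e/4>e^{-1/2}$), so the $2+\delta$ in the denominator is essential and this step genuinely requires the concavity/monotonicity argument above rather than a series comparison. Everything else is the textbook moment computation.
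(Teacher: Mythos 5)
Your proof is correct and complete. Note that the paper states this Chernoff bound as a standard preliminary tool (Lemma~\ref{Chernoff}) without providing any proof, so there is nothing to compare against; your argument is the textbook exponential--moment derivation (Markov on $e^{\pm tX}$, the chord bound $e^{tx}\le 1+(e^t-1)x$ on $[0,1]$, optimization at $t=\pm\ln(1\pm\delta)$), and both closing one-variable inequalities --- the concavity argument for $\delta-(1+\delta)\ln(1+\delta)\le -\delta^2/(2+\delta)$ and the series identity for the lower tail --- check out, as does your observation that the $2+\delta$ denominator is genuinely needed in the upper tail.
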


\section{An Initial Improvement of the Gilbert-Varshamov Bound Using Bonferroni Inequalities}
In this section, we will improve GV bound with a constant multiplcative factor by using Bonferroni inequalities. In next section, we will refine our argument which leads to our $\Omega(\sqrt(n))$ multiplicative factor improvement to the GV bound.

First, we provide a brief characterization of a property of random linear codes.

\begin{lemma}[Lemma~3.1.12, \cite{guruswami2012essential}]
    Given a non-zero vector $\vec{m}\in \F_q^k$ and a uniformly random $k\times n$ matrix $G$ over $\F_q$, the vector $\vec{m}\cdot G$ is uniformly distributed over $\F_q^n$.
\end{lemma}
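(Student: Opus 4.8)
The plan is to reduce everything to a one-column statement: a nonzero $\F_q$-linear functional pushes the uniform distribution on $\F_q^k$ forward to the uniform distribution on $\F_q$, and then to exploit independence across the $n$ columns of $G$.

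Concretely, write $G=(G_{i\ell})$ with independent uniform entries, and let $\vec G_\ell=(G_{1\ell},\ldots,G_{k\ell})$ denote its $\ell$-th column; these columns are mutually independent and each uniform on $\F_q^k$. The $\ell$-th coordinate of $\vec m\cdot G$ is $(\vec m\cdot G)_\ell=\sum_{i=1}^k m_i G_{i\ell}=\phi_{\vec m}(\vec G_\ell)$, where $\phi_{\vec m}\colon\F_q^k\to\F_q$ is the linear map $\vec x\mapsto\sum_{i=1}^k m_i x_i$. First I would show that $\phi_{\vec m}(\vec U)$ is uniform on $\F_q$ whenever $\vec U$ is uniform on $\F_q^k$: since $\vec m\neq\vec 0$ there is an index $j$ with $m_j\neq 0$, and for every target value $t\in\F_q$ and every choice of the coordinates $(x_i)_{i\neq j}$ there is exactly one value of $x_j$ with $\phi_{\vec m}(\vec x)=t$, namely $x_j=m_j^{-1}\bigl(t-\sum_{i\neq j}m_i x_i\bigr)$ — this is precisely where invertibility of the nonzero scalar $m_j$ in the field is used. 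Hence every fiber of $\phi_{\vec m}$ has size $q^{k-1}$, so $\phi_{\vec m}(\vec U)$ is uniform on $\F_q$. Then I would assemble the coordinates: because the columns $\vec G_1,\ldots,\vec G_n$ are independent, the coordinates $\phi_{\vec m}(\vec G_\ell)$ of $\vec m\cdot G$ are independent, each uniform on $\F_q$, so $\vec m\cdot G$ is uniform on $\F_q^n$.

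There is no genuine obstacle here; the only points worth care are that the hypothesis $\vec m\neq\vec 0$ is essential (otherwise $\phi_{\vec m}\equiv 0$ and the image is concentrated at $\vec 0$) and that the argument genuinely uses the field structure of $\F_q$ rather than merely a ring structure. A slicker alternative, avoiding coordinates altogether, is to condition on all rows of $G$ except the $j$-th row $\vec g_j$: then $\vec m\cdot G=m_j\vec g_j+\vec c$ with $\vec c$ determined by the conditioning, and the map $\vec y\mapsto m_j\vec y+\vec c$ is a bijection of $\F_q^n$, so $\vec m\cdot G$ is conditionally uniform on $\F_q^n$; since this holds for every value of the conditioning, $\vec m\cdot G$ is unconditionally uniform on $\F_q^n$.
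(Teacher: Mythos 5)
Your proof is correct. The paper does not actually prove this lemma --- it simply cites it from Guruswami's lecture notes --- and your column-wise argument (a nonzero linear functional on $\F_q^k$ has all fibers of size $q^{k-1}$, so each coordinate of $\vec{m}\cdot G$ is uniform on $\F_q$, and independence of the columns of $G$ gives independence of the coordinates) is exactly the standard proof given in that reference; your alternative argument via conditioning on the rows other than the $j$-th is equally valid.
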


Let $\cC$ be an random linear code of dimension $k$ and length $n$ with generator matrix $G$. We index the nonzero codeword in $\cC$ by $\vec{c}_m=\vec{m}G$ where $\vec{m}\in \F_q^k$ is a nonzero vector. In this sense, the codewords $\vec{c}_{m_1},\ldots, \vec{c}_{m_\ell}$ are linearly dependent if and only if $\vec{m}_1,\ldots,\vec{m}_\ell$ are linearly dependent. Let $B_q(n,d)$ denote the Hamming ball of radius $d$. We next show that the probability that the linear combination of $\ell$ vectors, chosen uniformly at random from a Hamming ball, remains entirely within the same Hamming ball is exponentially small.

\begin{lemma}[Proposition~3.3.3, \cite{guruswami2012essential}]
    Let $q\geq 2$ be an integer and $0\leq p\leq 1-\frac{1}{q}$ be a real number. Then:
    \begin{enumerate}[label=(\roman*)]
        \item $\Vol_q(n,pn)\leq q^{H_q(p)n}$; and
        \item for large enough $n$, $\Vol_q(n,pn)\geq q^{H_q(p)n-o(n)}$.
    \end{enumerate}
    where $H_q(x)=x\log_q(q-1)-x\log_qx-(1-x)\log_q(1-x)$.
    \label{lem:Hamming volume}
\end{lemma}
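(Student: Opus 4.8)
The plan is to prove the two inequalities separately, both by elementary estimates on binomial coefficients; no probabilistic machinery is needed, so I would not invoke Chernoff or Bonferroni here.

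For the upper bound (i), I would compare each term $\binom{n}{i}(q-1)^i$ of $\Vol_q(n,pn)$ against the corresponding term of the trivial identity $1=\bigl(p+(1-p)\bigr)^n=\sum_{i=0}^n\binom{n}{i}p^i(1-p)^{n-i}$. The starting observation is that, straight from the definition of $H_q$,
\[
q^{-H_q(p)n}=\Bigl(\tfrac{p}{q-1}\Bigr)^{pn}(1-p)^{(1-p)n}.
\]
A short manipulation then gives, for every $0\le i\le pn$,
\[
q^{-H_q(p)n}\binom{n}{i}(q-1)^i=\binom{n}{i}p^i(1-p)^{n-i}\cdot\Bigl(\tfrac{p/(q-1)}{1-p}\Bigr)^{pn-i}.
\]
The hypothesis $p\le 1-\tfrac1q$ is exactly what forces $\tfrac{p/(q-1)}{1-p}\le 1$, and since $pn-i\ge 0$ in the summation range the last factor is at most $1$. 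Summing over $i=0,\dots,\lfloor pn\rfloor$ and then extending to the full binomial sum yields $q^{-H_q(p)n}\Vol_q(n,pn)\le\sum_{i=0}^n\binom{n}{i}p^i(1-p)^{n-i}=1$, which is (i).

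For the lower bound (ii), I would simply retain the single largest term: $\Vol_q(n,pn)\ge\binom{n}{\lfloor pn\rfloor}(q-1)^{\lfloor pn\rfloor}$. Using the standard Stirling-type estimate $\binom{n}{\lfloor pn\rfloor}\ge \frac{1}{n+1}\,2^{H_2(\lfloor pn\rfloor/n)\,n}$, where $H_2$ is the binary entropy, together with the identity $H_q(p)=H_2(p)\log_q 2+p\log_q(q-1)$ and $(q-1)^{\lfloor pn\rfloor}\ge(q-1)^{pn-1}$, one gets $\Vol_q(n,pn)\ge q^{H_q(p)n-o(n)}$, which is (ii). The only routine technicalities are absorbing the floor $\lfloor pn\rfloor$ and using continuity of $H_q$ near $p$ (replacing $p$ by $\lfloor pn\rfloor/n$ perturbs the exponent by $o(n)$). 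There is no real obstacle: the one place the hypothesis $p\le 1-1/q$ genuinely enters is the inequality $\tfrac{p/(q-1)}{1-p}\le 1$ in the upper-bound argument, and that is a one-line algebraic check. So I expect the whole proof to be a short, self-contained computation.
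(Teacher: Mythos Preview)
The paper does not supply its own proof of this lemma; it is quoted verbatim as Proposition~3.3.3 of \cite{guruswami2012essential} and used as a black box. Your argument is correct and is precisely the standard proof of this fact (and, in fact, the one given in the cited reference): the upper bound via the comparison with $(p+(1-p))^n=1$ together with the monotonicity check $\tfrac{p/(q-1)}{1-p}\le 1\iff p\le 1-\tfrac1q$, and the lower bound by keeping the single term $\binom{n}{\lfloor pn\rfloor}(q-1)^{\lfloor pn\rfloor}$ and applying Stirling. There is nothing to add or correct.
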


\begin{lemma}\label{lem:Hamming ball}
    Assume $d-1= \delta n$ with some constant $\delta\in (0,1-\frac{1}{q})$. Let $\vec{v}_1,\ldots \vec{v}_\ell$ be $n\geq \ell \geq 2$ random vectors distributed uniformly at random in the Hamming ball $B_q(n,\delta n)\subseteq \F_q^n$. Then, the probability that $\sum_{i=1}^{\ell}\vec{v}_i\in B_q(n,\delta n)$ is at most $2^{-h_{\delta}n}$ for some constant $h_{\delta}\in (0,1)$.
\end{lemma}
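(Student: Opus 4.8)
The plan is to analyze the distribution of each coordinate of the sum $\vec{w} := \sum_{i=1}^\ell \vec{v}_i$ and show it is biased toward having too many nonzero entries to lie in $B_q(n,\delta n)$. First I would set up the sampling: drawing $\vec{v}_i$ uniformly from $B_q(n,\delta n)$ is awkward because the coordinates are not independent (they are constrained by the weight bound), so the first step is to replace this with a cleaner model. A standard trick is to condition on the weight profile: for each $i$, let $S_i = \supp(\vec{v}_i)$, and condition on $|S_i| = w_i \le \delta n$; given $S_i$, the nonzero entries are uniform in $\F_q^*$ and the support $S_i$ is a uniformly random $w_i$-subset of $[n]$. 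It suffices to prove the bound uniformly over all choices of $w_1,\ldots,w_\ell$, so fix them. Actually, a slicker route: I would show the result first for $\ell = 2$ and then reduce the general case to it, since if $\vec{v}_1 + \vec{v}_2 + \cdots + \vec{v}_\ell \in B_q(n,\delta n)$ then in particular we can group the terms — but that grouping does not obviously preserve uniformity, so I'll instead handle general $\ell$ directly via the coordinate-wise argument below.

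The core computation is as follows. Fix a coordinate $j \in [n]$. The $j$-th entry of $\vec{w}$ is $\sum_i (v_i)_j$, where each $(v_i)_j$ is $0$ with probability $1 - w_i/n$ and otherwise uniform on $\F_q^*$ (independently across $i$, once the weights are fixed, up to the mild dependence from sampling supports without replacement — which I would handle either by a Poissonization/negative-association argument or by noting the per-coordinate marginals are exactly $\mathrm{Bernoulli}(w_i/n)$ and the events "$j \in S_i$" are negatively associated across $j$). The key elementary fact is that for a sum of independent $\F_q$-valued random variables at least one of which is uniform on $\F_q^*$ — or more robustly, whenever the "mass" is spread out — the probability that the sum is $0$ is bounded away from $1$; concretely, if $p_i := w_i/n \le \delta < 1 - 1/q$, a short character-sum (Fourier over $\F_q$) estimate gives
\[
\Pr[(w)_j = 0] \;=\; \frac{1}{q}\sum_{\chi} \prod_{i=1}^\ell \widehat{\mu_i}(\chi) \;\le\; \frac1q + \frac{q-1}{q}\Bigl(1 - \tfrac{q}{q-1}\,p_{\min}\Bigr)^{\ell},
\]
where $p_{\min} = \min_i p_i$ and I use that each nonzero character is damped by a factor of absolute value at most $1 - \frac{q}{q-1}p_i$. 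Hmm — this is only useful if some $p_i$ is bounded below, which need not hold. The fix is to observe that the weight constraint forces the *average* $\bar p = \frac1\ell \sum_i p_i$ to matter: if all $p_i$ are tiny then $\vec w$ itself has small weight and lies in the ball with probability that... actually could be large. So the genuine content must use that $\vec w \in B_q(n,\delta n)$ is a *joint* constraint on all $n$ coordinates, and I should bound $\E[\wt(\vec w)]$ from below and apply Chernoff (Lemma~\ref{Chernoff}).

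Here is the clean version of the argument I would write. By linearity, $\E[\wt(\vec w)] = \sum_{j=1}^n \Pr[(w)_j \ne 0] = n\bigl(1 - \Pr[(w)_1 \ne 0]\bigr)$, wait — I want a *lower* bound on $\E[\wt(\vec w)]$, equivalently an *upper* bound on $\Pr[(w)_1 = 0]$ that beats $\delta$. Using the character bound above together with convexity, $\Pr[(w)_1 = 0] \le \frac1q + \frac{q-1}{q}\exp\!\bigl(-\tfrac{q}{q-1}\textstyle\sum_i p_i\bigr)$ when the $p_i$ are small, and separately, when some $p_i$ is bounded below by a constant the same bound holds trivially; in all cases one checks $\Pr[(w)_1 = 0] \le 1 - \delta - \eta$ for a constant $\eta = \eta(\delta, q) > 0$ — the one place I must be careful is the regime where every $p_i \le \epsilon$ with $\epsilon$ small, in which case $\sum_i p_i$ could be as small as, say, $\epsilon$; but then $\vec v_1 + \cdots + \vec v_\ell$ has *support* of expected size $\le \ell \epsilon n$, and if this is $\le \delta n$ the sum does lie in the ball with non-negligible probability — so the lemma as literally stated would be false without using $\ell \ge 2$ in a way that rules this out. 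Resolving this tension is the crux: I believe the intended reading is that $\vec v_i$ are drawn uniformly from the ball (not from a fixed weight shell), so the weights $w_i$ are themselves random and concentrate near $\delta n \cdot (1 - 1/q)/(1-1/q)$... — more precisely, a uniformly random point of $B_q(n,\delta n)$ has weight $\delta n (1 - o(1))$ with overwhelming probability, since the volume is dominated by the outermost shell (Lemma~\ref{lem:Hamming volume}). So almost surely $p_i = \delta(1-o(1))$ for every $i$, and then $\sum_i p_i \ge 2\delta(1 - o(1))$, making the character bound give $\Pr[(w)_1 = 0] \le \frac1q + \frac{q-1}{q}e^{-\frac{q}{q-1}\cdot 2\delta} + o(1)$, which I then verify is $\le 1 - \delta - \eta$ for an explicit $\eta(\delta,q) > 0$ using $\delta < 1 - 1/q$. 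Finally, conditioning on the (overwhelmingly likely) event that all $w_i \ge \delta n(1 - o(1))$, the coordinates of $\vec w$ are (negatively associated) indicators with $\E[\wt(\vec w)] \ge (\delta + \eta)n$, so Lemma~\ref{Chernoff} gives $\Pr[\wt(\vec w) \le \delta n] \le \exp(-\Omega(\eta^2 n)) = 2^{-h_\delta n}$; adding back the negligible probability $q^{-\Omega(n)}$ of the bad weight event keeps the bound at $2^{-h_\delta n}$ after adjusting $h_\delta$.

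The main obstacle, as flagged, is pinning down the exact sampling model and dispatching the small-weight degenerate case — everything hinges on the fact that a uniform point of the Hamming ball concentrates on its boundary shell, which is what makes the two summands genuinely "heavy" and forces the convolution off-center. The second, more routine obstacle is justifying the Chernoff application despite the coordinates of $\vec w$ not being independent; this is handled by negative association of the sampling-without-replacement indicators (the standard fact that $k$-subset indicators are negatively associated, closed under monotone functions), which suffices for the Chernoff upper-tail bound. Once those two points are secured, the character-sum estimate and the final arithmetic comparing the bound to $1 - \delta$ are straightforward.
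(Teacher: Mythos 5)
Your proposal is correct in substance and follows the same overall strategy as the paper's proof: condition on the (overwhelmingly likely) event that each $\vec v_i$ lies on the outer shell of the ball, so $\wt(\vec v_i)\ge(\delta-\varepsilon)n$; lower-bound the expected weight of the sum coordinate by coordinate; and finish with Chernoff. The differences are worth noting. Where the paper computes $\Pr[(\vec v_1+\vec v_2)_j\ne 0]=\delta_1+\delta_2-\frac{q}{q-1}\delta_1\delta_2$ for two summands and then handles general $\ell$ by induction (repeatedly absorbing $\vec v_\ell$ into the running sum), your character-sum identity
\[
\Pr[(w)_j=0]=\frac1q+\frac{q-1}{q}\prod_{i=1}^{\ell}\Bigl(1-\tfrac{q}{q-1}p_i\Bigr)
\]
recovers exactly the paper's two-vector formula and handles all $\ell$ in one step, since each factor lies in $(0,1)$ when $p_i<1-1/q$ and the product only shrinks as $\ell$ grows; this avoids the induction entirely (and sidesteps the paper's slightly delicate step of re-applying the $\ell=2$ argument to the non-uniform partial sum). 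You also explicitly address the gap between fixed-weight sampling and the i.i.d.-coordinate model via negative association, which the paper passes over with the phrase ``consider the same probability''; your treatment is more careful, though note that the indicators $\mathbf{1}[(w)_j\neq 0]$ depend on the field values and not merely on the supports, so the negative-association step needs a sentence of justification (e.g., condition on the supports first, or invoke the standard reduction from sampling without replacement to sampling with replacement for Chernoff-type tails).

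One genuine slip in the final arithmetic: you relax the product to $e^{-\frac{q}{q-1}\sum_i p_i}$ and claim $\frac1q+\frac{q-1}{q}e^{-2q\delta/(q-1)}\le 1-\delta-\eta$. Writing $u=\frac{q}{q-1}\delta\in(0,1)$, this requires $e^{-2u}\le 1-u-\eta'$, which fails for $u$ close to $1$ (the crossover is near $u\approx 0.8$), i.e.\ for $\delta$ near $1-1/q$. The fix is immediate: keep the exact quadratic you already wrote down, $\frac1q+\frac{q-1}{q}\bigl(1-\frac{q}{q-1}(\delta-\varepsilon)\bigr)^2=1-\bigl(2(\delta-\varepsilon)-\frac{q}{q-1}(\delta-\varepsilon)^2\bigr)$, and use the paper's observation that $2\delta-\frac{q}{q-1}\delta^2>\delta$ throughout $\delta\in(0,1-\frac1q)$; with that substitution the argument closes for the entire stated range of $\delta$.
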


\begin{proof}
    By \cref{lem:Hamming volume}, the number of codeword with weight $r$ is $\binom{n}{r}(q-1)^r=2^{\Theta(H_q(\frac{r}{n}))}$ where the entropy function $H_q(x)$ is monotone increasing in the domain $x\in (0,1-\frac{1}{q})$. Let $\varepsilon>0$, this implies that if $\vec{v}$ is a random vector distributed uniformly at random in $B_q(n,d)$, the probability that $\wt(\vec{v})$ is less than $(\delta-\varepsilon)n$ is at most $2^{-a_{\delta}n}$ for some constant $a_\delta\in (0,1)$. We now prove this lemma by induction. Assume $\ell=2$ and we bound the probability that $\wt(\vec{v}_1+\vec{v}_2)< d$. With probability at most $2^{-a_{\delta}n+1}$, the weight of $\vec{v}_1$ and $\vec{v}_2$ are at most $(\delta-\varepsilon)n$. It follows that
    \begin{equation}
    \label{eq:two vector}
    \begin{aligned}
    &\Pr[\wt(\vec{v}_1 + \vec{v}_2) < d] 
    \leq 2 \Pr[\wt(\vec{v}_1) < (\delta - \varepsilon)n] + \Pr[\wt(\vec{v}_i) \geq (\delta - \varepsilon)n,\, i \in [2]]  \\
    & \cdot  \Pr[\wt(\vec{v}_1 + \vec{v}_2) < d 
    \mid \wt(\vec{v}_i) \geq (\delta - \varepsilon)n,\, i \in [2]].
    \end{aligned}
    \end{equation}
    Now, we suppose that $\wt(\vec{v}_1)=\delta_1n,\ \wt(\vec{v}_2)=\delta_2n,\ (\delta-\varepsilon)\leq \delta_1,\delta_2\leq \delta$, consider the same probability when each coordinate of $\vec{v}_i$ is chosen to be non-zero element with probability $\delta_i/n$ independently. Thus, the expectation of the Hamming weight at the $j$-th coordinate of $\vec{v}_i$ is given by
    \[
    \mathbb{E}[\wt(\vec{v}_i)_j]=\delta_i,\ for\ \ 1\leq j\leq n.
    \]
    Therefore, the expectation of $\wt(\vec{v}_1+\vec{v}_2)_j$ can be computed as:
    \[
    \begin{aligned}
        \mathbb{E}(\wt(\vec{v}_1+\vec{v}_2)_j)&=\delta_1(1-\delta_2)+\delta_2(1-\delta_1)+\frac{q-2}{q-1}\delta_1 \delta_2 \\
        &=\delta_1+\delta_2-\frac{q}{q-1}\delta_1 \delta_2 \\
        &\geq 2(\delta-\varepsilon)-\frac{q}{q-1}(\delta-\varepsilon)^2.
    \end{aligned}
    \]
    As for any $\delta\in (0,1-\frac{1}{q})$, $2\delta-\frac{q}{q-1}\delta^2>\delta$, there exists some constant $\varepsilon>0$ such that
    \[
    \mathbb{E}(\wt(\vec{v}_1+\vec{v}_2)_j)\geq 2(\delta-\varepsilon)-\frac{q}{q-1}(\delta-\varepsilon)^2\geq (\delta+\varepsilon).
    \]
    Since $\mathbb{E}[\wt(\vec{v}_1+\vec{v}_2)]=\sum_{j=1}^{n}\mathbb{E}[\wt(\vec{v}_1+\vec{v}_2)_j]\geq (\delta+\varepsilon)n$, by \cref{Chernoff}, the probability that $\wt(\vec{v}_1+\vec{v}_2)$ is less than $\delta n$ is at most $2^{-b_{\delta}n}$ for some constant $b_\delta\in (0,1)$. We take $c_\delta=min\{a_\delta, b_\delta \}$ and it follows that $\cref{eq:two vector}$ can be proved to be smaller than $2^{-c_{\delta}n}$.
    
    Now we use the induction to prove that the probability of $wt(\sum_{i=1}^{\ell}\vec{v}_i)\leq \delta n$ is at most $2^{-h_\delta'n}$. Assume this holds for $\ell-1$, by induction, we can suppose that the probability of $wt(\sum_{i=1}^{\ell-1}\vec{v}_i)\geq (\delta-\varepsilon)n$ is at least $1-2^{-h_{\delta}n}$. From above argument, we also know that with probability at most $2^{-c_{\delta}n}$, $wt(\vec{v}_\ell)\leq (\delta-\varepsilon)n$. Then, we fix the vector $\vec{v}:=\sum_{i=1}^{\ell-1}\vec{v}_{i}$ and apply the $\ell=2$ argument to $\vec{v}+\vec{v}_\ell$ conditioning that both of the vector has weight at least $(\delta-\varepsilon)n$. Note that the expectation $\mathbb{E}[\wt(\vec{v})]\geq (\delta-\varepsilon)n$, following the derivation in the case of $\ell=2$, we obtain the corresponding result that the probability of $\wt(\vec{v}+\vec{v}_\ell)< d)$ is at most $2^{-c_{\delta}n}$. By union bound, we complete the induction and there exists $h_\delta'\in (0,1)$ such that $\Pr[\wt(\sum_{i=1}^{\ell}\vec{v}_i)< d]\leq 2^{-h_{\delta}'n}$. The proof is completed.
\end{proof}
We are ready to prove the improved GV bound. As a warm up, we consider a simple case. Let $E_{\vec{m}}$ be event that a random vector $\vec{c}_{\vec{m}}$ has Hamming weight at most $d-1$ if no confusion occurs. Let $W\subseteq (\F_q^k-\{\vec{0}\})/\F_q^*$, i.e., $W=\bigcup_{i=1}^{k-1}W_i$ where $W_i=\{(0,\ldots,0,1,\vec{x}):\vec{x}\in \F_q^i \}$.

\begin{thm}
    Let $d=\delta n$ for some $\delta \in(0,1-\frac{1}{q})$. There exists $[n,k,d]$-linear code over $\F_q$ if 
    \[
    \frac{q^k-1}{q-1}<\frac{1.7q^n}{\sum_{i=0}^{d-1}\binom{n}{i}(q-1)^i}.
    \]
\end{thm}

\begin{proof}
    Let $\cC$ be an random linear code of length $n$ and dimension $k$. $\vec{c}_m$ be the codeword in $\cC$ for $\vec{m}\in \F_q^k$. It suffices to prove that all random vector $\vec{c}_m$ for $\vec{m}\in W$ are of weight at least $d$. Recall that $E_{\vec{m}}$ is the event that a random vector $\vec{c}_m$ has weight less than $d$ and $E_X$ is the event that all random vectors $\vec{c}_m$ for $\vec{m}\in X$ have weight less than $d$. We want to bound the probability $\Pr[\cup_{\vec{m}\in W}E_{\vec{m}}]$. By Bonferroni inequalities, we have:
    \[
        \Pr[\bigcup_{\vec{m}\in W}E_{\vec{m}}]\leq \sum_{\vec{m}\in W}\Pr[E_{\vec{m}}]-\sum_{X\in \binom{W}{2}}\Pr[E_X]+\sum_{X\in \binom{W}{3}}\Pr[E_X].
    \]
    It is clear that $a:=\Pr[E_{\vec{m}}]=\frac{\Vol_q(n,d-1)}{q^n}=\sum_{i=0}^{d-1}\binom{n}{i}(q-1)^i$. Next we proceed to bound the probability $E_X$ for $X=\{\vec{c}_x,\vec{c}_y \}$. Since $\vec{x},\vec{y}\in W$, they are $\F_q$-linearly independent. This implies the random vector $\vec{c}_x,\vec{c}_y$ are mutually independent and thus $\Pr[E_X]=a^2$. Then, we have
    \[
    \sum_{X\in \binom{W}{2}}\Pr[E_X]=\binom{|W|}{2}a^2.
    \]
    It remains to bound $\Pr[E_X]$ for $X=\{\vec{c}_x,\vec{c}_y,\vec{c}_z \}$. Clearly, if $\vec{x},\vec{y},\vec{z}$ are linearly independent, then $\Pr[E_X]=a^3$. Otherwise, since $\vec{x},\vec{y},\vec{z}\in W$, they must span a $2$-dimensional space. Then, we assume $\vec{z}=\lambda_1\vec{x}+\lambda_2\vec{y}$ for some nonzero $\lambda_1,\lambda_2$. By Lemma~\ref{lem:Hamming ball}, the probability $\Pr[E_X]\leq 2^{-h_{\delta}n}\cdot a^2$. Since there are at most $\binom{|W|}{2}(q-1)^2$ such $X$, we conclude 
    \[
    \sum_{X\in \binom{W}{3}}\Pr[E_X]\leq \binom{|W|}{3}a^3+\binom{|W|}{2}(q-1)^2\cdot a^2\cdot 2^{-h_\delta n}.
    \]
    Let $b=a|W|$ and the linear code exists if
    \[
    \Pr[\bigcup_{\vec{m}\in W}E_{\vec{m}}]\leq b-\frac{b^2}{2}(1-2^{-h_\delta n})+\frac{b^3}{6}<1
    \]
    holds for large enough $n$. Hence, we can choose $b=1.7$ to ensure the inequality holds. We conclude that if
    \[
    \frac{q^k-1}{q-1}<\frac{1.7q^n}{\sum_{i=0}^{d-1}\binom{n}{i}(q-1)^i},
    \]
    there exists $[n,k,d]$-linear code over $\F_q$. The proof is completed.
\end{proof}

\section{A General Improvement of the Gilbert–Varshamov Bound}
In the previous section, we presented a constant factor improvement of \emph{Gilbert-Varshamov} bound. In this section, we will expand the inclusion--exclusion sum of 
$\Pr\left[\bigcup_{\vec{m} \in W} E_{\vec{m}}\right]$ up to $\Omega(\sqrt{n})$-th term. This allows us to prove a $\Omega(\sqrt{n})$ multiplicative improvement for the \emph{Gilbert-Varshamov} bound. We start with a simple but very useful observation. Let $t = c\sqrt{n}$ be an odd integer for some small constant $c$. By Bonferroni inequalities, we have
\[
\Pr[\bigcup_{\vec{m} \in W} E_{\vec{m}}] 
\leq \sum_{i=1}^t (-1)^{i-1} \sum_{X \in \binom{W}{i}} \Pr[E_X].
\]
If all vectors in $X$ are linearly independent, $\Pr[E_X] = a^{|X|}$ for $a = \frac{\Vol_q(n,d-1)}{q^n}$. Assume this is true\footnote{Clearly, we will meet a large amount of set $X$ that such condition does not hold. Fortunately, this argument still holds if $X$ is relatively small.} for all 
$X \in \binom{W}{i}$ for $i \in [t]$ and we have
\[
\Pr[\bigcup_{\vec{m} \in W} E_{\vec{m}}] 
\leq -\sum_{i=1}^t \binom{|W|}{i} (-a)^i 
= - \sum_{i=1}^t \frac{(-b)^i}{i!}.
\]
We want this sum to be less than $\mathbf{1}$ and try to prove the following:
\[
\sum_{i=0}^t \frac{(-b)^i}{i!} 
\geq e^{-b} - \frac{b^{t+1}}{(t+1)!} > 0.
\]
The first inequality comes from the Taylor expansion of $e^{-x}$. Applying the Stirling formula to bound $b^{t+1}(t+1)!\leq (\frac{eb}{t+1})^{t+1}\leq e^{-(t+1)}$, we can argue that this holds for $b\leq \frac{t+1}{e^2}$ which completes the argument. This observation leads to our following theorem.
\begin{thm}
    Let $d=\delta n$ for some $\delta\in (0,1-\frac{1}{q})$. There exists $[n,k,d]$-linear code over $\F_q$ if
    \[
    \frac{q^k-1}{q-1}<\frac{c_\delta\sqrt{n}q^n}{\sum_{i=0}^{d-1}\binom{n}{i}(q-1)^i}
    \]
    for some constant $c_{\delta}>0$.
\end{thm}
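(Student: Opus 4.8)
The plan is to take the heuristic computation in the preceding paragraph and make it rigorous by carefully controlling the terms of the Bonferroni truncation up to $t = \Theta(\sqrt n)$ where the sets $X$ are \emph{not} linearly independent. The starting point is the truncated inclusion--exclusion bound
\[
\Pr\Bigl[\bigcup_{\vec m\in W} E_{\vec m}\Bigr]\le \sum_{i=1}^{t}(-1)^{i-1}\sum_{X\in\binom{W}{i}}\Pr[E_X],
\]
valid for odd $t$ by the Bonferroni inequalities. The goal is to show this is $<1$ whenever $b := a|W| = \tfrac{\Vol_q(n,d-1)}{q^n}\cdot\tfrac{q^k-1}{q-1} \le c_\delta\sqrt n$ for a suitable small constant $c_\delta$; combined with Varshamov's indexing trick (weights of $\vec c_{\vec m}$ only depend on $\vec m$ up to scalar, so it suffices to bound the minimum weight over $\vec m\in W$), this yields the theorem.

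The key step is to reorganize the sum $\sum_{X\in\binom{W}{i}}\Pr[E_X]$ by grouping each $X$ according to the span of its vectors. Fix a linearly independent subset $Y=\{\vec x_1,\dots,\vec x_s\}\subseteq W$ (thought of as representatives of codewords). Every $X$ with $\mathrm{span}(X)=\mathrm{span}(Y)$ and $Y\subseteq X$ satisfies $s\le|X|=i\le t$ and $X\subseteq \mathrm{span}\{Y\}$, which has only $q^s$ points, hence at most $\tfrac{q^s-1}{q-1}$ admissible representatives. For such an $X$, conditioning on the $s$ independent uniform vectors $\vec c_{\vec x_1},\dots,\vec c_{\vec x_s}$, each remaining $\vec c_{\vec m}$ with $\vec m\in X\setminus Y$ is a nontrivial linear combination of at least two of the $\vec c_{\vec x_j}$, so Lemma~\ref{lem:Hamming ball} gives $\Pr[E_X]\le a^s\bigl(2^{-h_\delta n}\bigr)^{\,|X|-s}$, while the ``base'' event $E_Y$ contributes $a^s$. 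Summing over all $i$ from $s$ to $t$ and all $X$ with $Y\subseteq X\subseteq\mathrm{span}\{Y\}$, $|X|=i$, the geometric tail in $2^{-h_\delta n}$ is dominated by the $i=s$ term and one obtains
\[
\sum_{i=s}^{t}\;\sum_{\substack{Y\subseteq X\subseteq \mathrm{span}\{Y\},\ |X|=i}}\Pr[E_X]\;\le\; a^s\bigl(1+2^{-h_\delta n/2}\bigr),
\]
the bound quoted in the excerpt. The number of independent $Y$ of size $s$ is at most $\binom{|W|}{s}$, so the ``dependent'' contributions beyond the leading $\sum_i\binom{|W|}{i}(-a)^i$ term are, in absolute value, bounded by $\sum_{s\le t}\binom{|W|}{s}a^s\cdot 2^{-h_\delta n/2}\le e^{b}\cdot 2^{-h_\delta n/2}$, which is negligible since $b=O(\sqrt n)$. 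The choice $t=\lceil\sqrt{h_\delta\log_q 2\cdot n/2}\rceil$ (made odd) is exactly what makes the per-set error $2^{-h_\delta(i-s)n}$ collapse against the $\le t$ summands and against the $q^s\le q^t$ points available.

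After this reorganization, the dominant term is $-\sum_{i=1}^{t}\binom{|W|}{i}(-a)^i$; bounding $\binom{|W|}{i}a^i\le b^i/i!$ and using the Taylor-with-remainder estimate $\sum_{i=0}^{t}\tfrac{(-b)^i}{i!}\ge e^{-b}-\tfrac{b^{t+1}}{(t+1)!}$ together with Stirling's $\tfrac{b^{t+1}}{(t+1)!}\le(eb/(t+1))^{t+1}$, one sees that the sum is bounded away from $0$ (indeed $\ge e^{-b}/2$, say) as soon as $b\le (t+1)/e^2$, i.e. $b\le c_\delta\sqrt n$ for $c_\delta\asymp \sqrt{h_\delta\log_q 2}/e^2$. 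Collecting the negligible correction from the previous paragraph, $\Pr[\bigcup E_{\vec m}]\le 1-e^{-b}/2+o(1)<1$ for $n$ large, so a code with all $\vec c_{\vec m}$, $\vec m\in W$, of weight $\ge d$ exists; this is an $[n,k,d]_q$ linear code.

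The main obstacle is the bookkeeping in the previous-to-last paragraph: I must verify that the accumulated error from \emph{all} linearly dependent configurations — summed over every choice of independent core $Y$ of every size $s\le t$, and every way of completing $Y$ to $X\subseteq\mathrm{span}\{Y\}$ — really is a lower-order correction, and that applying Lemma~\ref{lem:Hamming ball} is legitimate for each dependent codeword simultaneously (the lemma is stated for a fixed number $\ell$ of uniform vectors, so I need that, conditioned on the independent core being uniform and independent, each dependent member's defining combination involves at least two core vectors with a uniform marginal — which it does, since $W$ contains no scalar multiples). A secondary subtlety is that $t$ depends on $h_\delta$ while $h_\delta$ comes from Lemma~\ref{lem:Hamming ball} with $\ell\le t$; since that lemma's constant is uniform in $\ell\le n$, there is no circularity, but this should be stated explicitly.
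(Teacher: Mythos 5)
Your proposal follows essentially the same route as the paper's own proof: the same Bonferroni truncation at an odd $t=\Theta(\sqrt n)$, the same grouping of linearly dependent sets $X$ around a linearly independent core $Y$ with Lemma~\ref{lem:Hamming ball} supplying the exponential penalty and yielding $\sum_{i=s}^{t}\sum_{Y\subseteq X\subseteq \mathrm{span}\{Y\}}\Pr[E_X]\le a^s(1+2^{-h_\delta n/2})$, and the same Taylor-plus-Stirling finish with $b=(t+1)/e^2$. Two details to tighten in the write-up, both handled in the paper with one line each: the lemma only justifies a \emph{single} factor $2^{-h_\delta n}$ per dependent $X$ (not the product $(2^{-h_\delta n})^{|X|-s}$, since the dependent members are not conditionally independent given the core — but one factor suffices because $\binom{q^s}{i}2^{-h_\delta n}\le q^{t^2+1}2^{-h_\delta n}\le 2^{-h_\delta n/2}$), and replacing $|\mathcal{X}_i|$ by $\binom{|W|}{i}$ in the even-$i$ (negatively signed) terms requires the lower bound $|\mathcal{X}_i|\ge \bigl(1-q^{-k+t}\bigr)\binom{|W|}{i}$, whose contribution is then absorbed into the negligible correction.
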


\begin{proof}
Let $\cC$ be a random linear code of length $n$ and dimension $k$. $\vec{\bc}_m$ be the codeword in $\cC$ for $\vec{m} \in \F_q^k$. It suffices to prove that all random vector $\vec{\bc}_m$ for $\vec{m} \in W$ are of weight at least $d$. By Bonferroni inequalities, we have
\begin{equation}
    \Pr[\bigcup_{\vec{m} \in W} E_{\vec{m}} ]
    \leq \sum_{i=1}^t (-1)^{i-1} \sum_{X \in \binom{W}{i}} \Pr[E_X].
\end{equation}
for some odd integer $t = \sqrt{h_{\delta} \log_q 2\cdot \frac{n}{2}}$ where $h_{\delta}$ is given in Lemma~\ref{lem:Hamming ball}.  

Let $X \subseteq W$ be any subset of size $|X|=r \leq t$. If all vectors in $X$ are linearly independent, we have $\Pr[E_X] = a^r$ with 
$a = \frac{\Vol_q(n,d-1)}{q^n}$. Otherwise, assume that $X = \{\vec{x}_1, \ldots, \vec{x}_r\}$ where $\vec{x}_1, \ldots, \vec{x}_s\ (s<r)$ are the maximally linearly independent vectors in $X$. Then we have
\[
\Pr[E_X] = \Pr[E_{\{\vec{x}_1, \ldots, \vec{x}_s\}}] \Pr[ E_X \mid E_{\{\vec{x}_1, \ldots, \vec{x}_s\}} ] 
\leq a^s\cdot \Pr[ E_{\vec{x_r}} \mid E_{\{\vec{x}_1, \ldots, \vec{x}_s\}} ] \leq a^s\cdot 2^{-h_{\delta} n},
\]
The last inequality is due to Lemma~\ref{lem:Hamming ball} as $\vec{\bc}_{x_1}, \vec{\bc}_{x_2}, \ldots, \vec{\bc}_{x_s}$ are distributed uniformly at random in $B(\vec{0}, d-1)$. Thus, for any linearly independent vectors $Y := \{\vec{x}_1, \ldots, \vec{x}_s\}$, we have the followings:
\[
\sum_{i=s}^t \sum_{\substack{Y \subseteq X \subseteq span\{Y\}, \\ |X| = i}} \Pr[E_X] \leq a^s + a^s \sum_{i=s+1}^t \binom{q^s}{i} 2^{-h_{\delta} n} 
\leq a^s \left( 1 + \frac{q^{t^2+1}}{t!} 2^{-h_{\delta} n} \right) 
\leq a^s (1 + 2^{-h_{\delta}\cdot \frac{n}{2}}),
\]
as $t = \sqrt{h_{\delta} \log_q 2\cdot \frac{n}{2}}$.

Let $\mathcal{X}_i \subseteq \binom{W}{i}$ be the collection of the set of vectors 
$\{\vec{x}_1, \ldots, \vec{x}_i\}$ that are linearly independent, and define 
$\bar{\mathcal{X}}_i = \binom{W}{i} \setminus \mathcal{X}_i$.  
Then, we have
\[
\sum_{i=1}^t (-1)^{i-1} \sum_{X \in \binom{W}{i}} \Pr[E_X]
= \sum_{i=1}^t (-1)^{i-1} \sum_{X \in \mathcal{X}_i} \Pr[E_X] 
+ \sum_{i=1}^t (-1)^{i-1} \sum_{X \in \bar{\mathcal{X}}_i} \Pr[E_X]
\]
\[
\leq \sum_{i=1}^{t}(-1)^{i-1}\sum_{X\in \mathcal{X}_i}\Pr[E_X]+\sum_{i=1}^{t}\sum_{X\in \bar{\mathcal{X}_i}}\Pr[E_X]
\leq \sum_{i=1}^t (-1)^{i-1} a^i \sum_{X \in \mathcal{X}_i} 
\left( 1 + (-1)^{i-1} 2^{-h_{\delta} n/2} \right).
\]
The size of $\mathcal{X}_i$ is at least 
\[
\frac{1}{i!} \prod_{j=1}^i (|W| - q^{j-1}).
\]
Thus, we have
\[
1\geq \frac{|\mathcal{X}_i|}{\binom{|W|}{i}} \ge 1 - \frac{\sum_{j=1}^i q^{j-1}}{|W|} \ge 1 - q^{-k+t}.\ \ \ \ (\text{Since }i\leq t\text{ and }|W|=\frac{q^k-1}{q-1})
\]
This implies that
\begin{align*}
(-1)^{i-1} a^i \sum_{X \in \mathcal{X}_i} \left( 1 + (-1)^{i-1} 2^{-h_{\delta}\cdot \frac{n}{2}} \right) 
&=a^i|\mathcal{X}_i|\left( 1 + (-1)^{i-1} 2^{-h_{\delta}\cdot \frac{n}{2}} \right) \\
&\leq a^i \binom{|W|}{i} (1 + 2^{-h_{\delta}\cdot \frac{n}{2}}),
\end{align*}
for odd $i$ and
\begin{align*}
(-1)^{i-1} a^i \sum_{X \in \mathcal{X}_i} \left( 1 + (-1)^{i-1} 2^{-h_{\delta}\cdot \frac{n}{2}} \right) 
&\leq -a^i\left(\binom{|W|}{i}-q^{-k+t}\binom{|W|}{i} \right)\left(1-2^{-h_{\delta}\cdot \frac{n}{2}} \right) \\
&\leq - a^i \binom{|W|}{i} + a^i \binom{|W|}{i} (2^{-h_{\delta}\cdot\frac{n}{2}} + 2^{-\frac{k}{2}}) \\
&(\text{Since }k=Rn,\text{$R$ is the rate of code $\cC$})\\
&\leq - a^i \binom{|W|}{i} + a^i \binom{|W|}{i} 2^{-h_{\delta}'\cdot \frac{n}{2} + 1}
\end{align*}
for even $i$ where $h_{\delta}'=max\{\frac{h_{\delta}}{2},\frac{k}{2n} \}$. In summary, we obtain:
\[
\Pr[\bigcup_{\vec{m}\in W}E_{\vec{m}}]\leq \sum_{i=1}^{t}(-1)^{i-1}a^i\binom{|W|}{i}+\sum_{i=1}^{t}a^i\binom{|W|}{i}2^{-h_{\delta}'n+1}.
\]
Define $b=|W|a$ and the right hand side becomes
\[
\sum_{i=1}^{t}(-1)^{i-1}\frac{b^i}{i!}+\sum_{i=1}^{t}\frac{b^i}{i!}2^{-h_{\delta}'n+1}\leq 1+\frac{b^{t+1}}{(t+1)!}-e^{-b}+e^{b}2^{-h_{\delta}'n+1}.
\]
This inequality is due to the Taylor expansion of $e^x$ and $e^{-x}$:
\[
e^b\geq \sum_{i=0}^{t}\frac{b^i}{i!},\ e^{-b}\geq \sum_{i=0}^{t+1}(-1)^i\frac{b^i}{i!}.
\]
We concludes the theorem by choosing $b=\frac{t+1}{e^2}=\Omega(\sqrt{n})$. The proof is completed.
\end{proof}

\section{Extending to the Quantum Gilbert-Varshamov Bound}
In the previous section, we presented an improvement of the classical \emph{Gilbert-Varshamov} bound. In this section, we extend our approach to the quantum setting and establish an improved quantum \emph{Gilbert-Varshamov} bound.

Suppose $\cC$ is a random symplectic self-orthogonal code of length $2n$ and dimension $k$ with parity-check matrix $H$. Let $\cC^{\perp_S}$ denote its orthogonal complement, which is a linear code with parameters $[2n,2n-k]_q$ and $H$ serves as its generator matrix. By \cref{lem:construct-quantum-codes}, we note that the minimum distance of quantum code is completely determined by its dual code instead of code itself. Thus, we need to bound the minimum weight of nonzero codes in $\cC^{\perp_S}$ instead of $\cC$. Similar to the Hamming case, we index the nonzero codeword in $\cC^{\perp_S}$ by $\vec{c}_m=\vec{m}H$ where $\vec{m}\in \F_q^{2n-k}$ is a nonzero vector. In this setting, the codewords $\vec{c}_{m_1},\ldots, \vec{c}_{m_{\ell}}$ are linearly dependent if and only if $\vec{m}_1,\ldots,\vec{m}_{\ell}$ are linearly dependent. We use $E_{\vec{m}}$ to denote the event that a random vector $\vec{c}_{\vec{m}}$ has symplectic weight at most $d-1$ if no confusion occurs. We can define a set $W\subseteq \F_q^{2n-k}\setminus\{\vec{0}\}$ such that $W$ is of size $\frac{q^{2n-k}-1}{q-1}$ and for any $\vec{x}\in \F_q^{2n-k}\setminus \{\vec{0}\}$, there exists a $\vec{y}\in W$ with $\vec{x}=\lambda \vec{y}$ for some nonzero $\lambda\in \F_q$. In fact, we can explicitly write $W=\bigcup_{i=1}^{2n-k-1}W_i$ where $W_i=\{(0,\ldots,0,1,\vec{x}):\vec{x}\in \F_q^i \}$.  

In contrast to the Hamming distance case, the selection of $\ell$ linearly independent vectors in the symplectic code cannot be treated as a uniform random sampling from the entire space. Moreover, the analysis of the probability that their linear combinations lie outside the symplectic ball requires a different approach. To address this challenge, We begin by computing $\Bigl|B_q^S(2n,d-1)\cap\langle\vec{u}_1\rangle^{\perp_S}\cap\ldots\cap\langle\vec{u}_\ell\rangle^{\perp_S}\Bigr|$ , which becomes the key ingredient of our method.

\begin{lemma}\label{lem:one-vec-bound}
    Let $\vec{u}$ be an nonzero random vector in $B_q^S(2n,d-1)$, then we have the bound:
    \[
    \frac{1-q^{-\Omega(n)}}{q}\Vol_q^S(2n,d-1)\leq \Bigl|B_q^S(2n,d-1)\cap\langle\vec{u}\rangle^{\perp_S}\Bigr|\leq \frac{1+q^{-\Omega(n)}}{q}\Vol_q^S(2n,d-1).
    \]
\end{lemma}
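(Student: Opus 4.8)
\quad
The plan is to compute $|B_q^S(2n,d-1)\cap\langle\vec{u}\rangle^{\perp_S}|$ exactly in terms of $s:=\wt_S(\vec{u})$ and symplectic ball volumes, show that the deviation from $\tfrac1q\Vol_q^S(2n,d-1)$ is a $q^{-\Omega(n)}$ fraction of $\Vol_q^S(2n,d-1)$ whenever $s$ is a constant fraction of $d-1$, and then conclude, since a uniformly random nonzero $\vec{u}\in B_q^S(2n,d-1)$ has $\wt_S(\vec{u})$ that large with probability $1-q^{-\Omega(n)}$. For the exact evaluation, the key point is that the symplectic form splits over the $n$ coordinate pairs: writing $\vec{u}=(u^{(1)},\dots,u^{(n)})$ with $u^{(i)}\in\F_q^2$, one has $\langle\vec{u},\vec{v}\rangle_S=\sum_{i=1}^n\omega(u^{(i)},v^{(i)})$ for the nondegenerate symplectic form $\omega$ on $\F_q^2$, so $\langle\vec{u}\rangle^{\perp_S}$ is a hyperplane and the constraint only involves the coordinates of $S:=\supp_S(\vec{u})$. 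Fixing a nontrivial additive character $\psi$ of $\F_q$ and expanding the indicator of $\langle\vec{u},\vec{v}\rangle_S=0$ as $\tfrac1q\sum_{t\in\F_q}\psi(t\langle\vec{u},\vec{v}\rangle_S)$, the $t=0$ term contributes exactly $\tfrac1q\Vol_q^S(2n,d-1)$; for $t\neq0$, since $z\mapsto t\,\omega(u^{(i)},z)$ is a nonzero functional on $\F_q^2$ for each $i\in S$, one gets $\sum_{z\in\F_q^2\setminus\{0\}}\psi(t\,\omega(u^{(i)},z))=-1$, and grouping $\vec v$ by its weight $j$ on $S$ (with $\binom sj$ choices of support, each contributing $(-1)^j$) and its weight $\le d-1-j$ off $S$ shows every $t\neq0$ term equals the same quantity
\[
E\;:=\;\sum_{j=0}^{s}(-1)^j\binom{s}{j}\,\Vol_q^S\bigl(2(n-s),\,d-1-j\bigr),
\]
so that $|B_q^S(2n,d-1)\cap\langle\vec{u}\rangle^{\perp_S}|=\tfrac1q\Vol_q^S(2n,d-1)+\tfrac{q-1}{q}E$, and the whole problem reduces to bounding $|E|$.

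To bound $|E|$, I would substitute $\Vol_q^S(2m,r)=\sum_i\binom mi(q^2-1)^i$, swap the order of summation, and invoke the partial alternating-binomial identity $\sum_{j=0}^{J}(-1)^j\binom sj=(-1)^J\binom{s-1}{J}$: this makes the inner sum \emph{vanish} on every weight level for which the full range $0\le j\le s$ is available, so only the top $s$ weight levels survive and $|E|\le[z^{d-1}]\bigl((1+z)^{s-1}(1+(q^2-1)z)^{n-s}\bigr)$. Next, using $[z^{m}]f\le f(r)/r^{m}$ (valid for all $r>0$ since $f$ has nonnegative coefficients) at the saddle point $r:=\tfrac{\delta}{(1-\delta)(q^2-1)}$ of $(1+(q^2-1)z)^n$, where $\delta:=(d-1)/n$ is bounded away from $1-\tfrac1{q^2}$, and noting that $1+(q^2-1)r>1+r$ so that $\rho:=\tfrac{1+r}{1+(q^2-1)r}\in(0,1)$ is a constant depending only on $q$ and $\delta$, I would obtain (after absorbing a factor $\le1$) the bound $|E|\le\rho^{\,s}\cdot(1+(q^2-1)r)^n/r^{\,d-1}$. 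The last factor equals $(1-\delta)^{-(1-\delta)n}\delta^{-\delta n}(q^2-1)^{d-1}$, which is at most $\poly(n)\cdot\binom{n}{d-1}(q^2-1)^{d-1}\le\poly(n)\cdot\Vol_q^S(2n,d-1)$; hence $|E|\le\poly(n)\cdot\rho^{\,s}\cdot\Vol_q^S(2n,d-1)$.

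To finish, since $\delta<1-\tfrac1{q^2}$ the counts $\binom nw(q^2-1)^w$ are increasing in $w$ for $w\le d-1$, so the geometric-tail argument from the proof of Lemma~\ref{lem:Hamming ball} (now with symplectic balls and $H_{q^2}$ in place of Hamming balls and $H_q$) gives $\Pr[\wt_S(\vec{u})<(\delta-\varepsilon)n]\le q^{-\Omega(n)}$ for every fixed $\varepsilon>0$. Taking $\varepsilon=\delta/2$, with probability $1-q^{-\Omega(n)}$ we have $s=\wt_S(\vec{u})\ge\delta n/2$, whence $\rho^{\,s}\le\rho^{\delta n/2}=q^{-\Omega(n)}$ and $\tfrac{q-1}{q}|E|\le q^{-\Omega(n)}\Vol_q^S(2n,d-1)$, which is exactly the claimed two-sided bound.

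I expect the bound on $|E|$ to be the crux. The lemma holds only in this high-probability sense: it genuinely fails for low-weight $\vec{u}$ — for instance $\wt_S(\vec{u})=1$ produces a relative error of order $1-\delta$ — so the exponential gap must come from combining the exact evaluation of $E$, its collapse under the alternating-binomial identity (which kills all but the top weight levels), and the weight concentration above; the character-sum bookkeeping and the saddle-point estimate are otherwise routine.
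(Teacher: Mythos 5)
Your proposal is correct, and while it shares the paper's first reduction---the single symplectic-orthogonality constraint only involves the coordinates in $\supp_S(\vec{u})$, so one stratifies $\vec{v}$ by its weight $j$ there and its weight off the support---the error analysis is genuinely different. The paper directly counts, by inclusion--exclusion over vanishing coordinate pairs, the number $N_h$ of fully supported weight-$h$ solutions of the restricted equation, obtains $N_h=\tfrac1q(q^2-1)^h\pm(1-\tfrac1q)$, and then argues separately that the terms with $h=o(n)$ contribute an exponentially small fraction of $\Vol_q^S(2n,d-1)$. You instead get the exact identity $|B_q^S(2n,d-1)\cap\langle\vec{u}\rangle^{\perp_S}|=\tfrac1q\Vol_q^S(2n,d-1)+\tfrac{q-1}{q}E$ from a character expansion, collapse $E$ via $\sum_{j\le J}(-1)^j\binom sj=(-1)^J\binom{s-1}{J}$ to a single coefficient $[z^{d-1}]\bigl((1+z)^{s-1}(1+(q^2-1)z)^{n-s}\bigr)$, and bound that by a saddle-point evaluation. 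What your route buys is a one-shot, fully quantitative bound on the entire deviation (no case split between large and small overlap with $\supp_S(\vec{u})$, which is the step the paper treats somewhat informally); what the paper's route buys is elementary bookkeeping with no characters or generating functions. I would also note in your favor that you make explicit what the paper leaves implicit: the two-sided bound genuinely requires $\wt_S(\vec{u})=\Omega(n)$ (it fails for, say, $\wt_S(\vec{u})=1$), so the statement must be read as holding with probability $1-q^{-\Omega(n)}$ over the random $\vec{u}$, which is how it is invoked later in \cref{thm:l-vector-bound}. The only cosmetic slips are a harmless $\rho^{s-1}$ versus $\rho^{s}$ and the fact that your exponent constant in $q^{-\Omega(n)}$ depends on $\delta$ and $q$, exactly as in the paper.
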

\begin{proof}
    Let $\vec{u}=(a_1,\ldots,a_n\mid b_1,\ldots,b_n)\in \F_q^{2n}$ and $\wt_S(\vec{u})=t<d$. Without loss of generality, we assume that $\wt(\vec{u};[t])=t$. If $\vec{v}=(x_1,\ldots,x_n\mid y_1,\ldots,y_n)\in \langle\vec{u}\rangle^{\perp_S}\cap B_q^S(2n,d-1)$, it follows that $\wt_S(\vec{v})\leq d-1$ and:
    \begin{equation}\label{eq:one-vec-orth}
        \sum_{i=1}^{t}(a_iy_i-b_ix_i)=0.
    \end{equation}
   Let $S$ be the set of all vectors $\vec{v} = (x_1,\ldots,x_t \mid  y_1,\ldots,y_t) \in \F_q^{2t}$ satisfying \cref{eq:one-vec-orth} and $N_h=\{\vec{v}\in S:\operatorname{wt}_S(\vec{v},[t])=h\}$. Let $A_j=\{\vec{v}\in S:(x_j,y_j)= (0,0)\}\subseteq S$, i.e. the collection of vectors in $S$ whose $j$-th coordinate are $(0,0)$.
   \noindent
    By inclusion-exclusion principle, we have:
     \begin{align*}
        |N_h|&=|S|-\sum_{i=1}^h|A_i|+\sum_{1\leq i_1<i_2\leq h}|A_i\cap A_j|+\ldots+|A_1\cap A_2\cap \ldots\cap A_h|\\
        &=\sum_{j=0}^{h-1}(-1)^j\binom{h}{j}q^{2h-1-2j}+(-1)^h.
    \end{align*}
    \noindent
    Therefore,
    \begin{equation*}
    N_h\leq \sum_{j=0}^{h-1}(-1)^j\binom{h}{j}q^{2h-1-2j}+1=q^{2h-1}(1-\frac{1}{q^2})^h+1.
    \end{equation*}
    If $h=\Omega(n)$, we have
    \[
    \begin{aligned}
        (q^2-1)^{w-h}\cdot N_h&\leq (q^2-1)^{w-h}\cdot (q^{2h-1}(1-\frac{1}{q^2})^h+1) \\
        &=\frac{1}{q}(q^2-1)^w+(q^2-1)^{w-h} 
        \leq \frac{1+q^{-\Omega(n)}}{q}(q^2-1)^w.
    \end{aligned}
    \]
    To estimate the size of $B_q^S(2n,d-1)\cap \langle\vec{u}\rangle^{\perp_S}$,
    we first select $h$ positions within the first $t$ entries, and $\wt_S(\vec{v})-h$ positions from the remaining entries. Therefore,
    \begin{equation}\label{eq:one-vec-cap}
        |B_q^S(2n,d-1)\cap \langle\vec{u}\rangle^{\perp_S}|=\sum_{w=0}^{d-1}\sum_{h=0}^{min\{w,t\}}\binom{t}{h}\binom{n-t}{w-h}(q^2-1)^{w-h}\cdot N_h.
    \end{equation}
    We now proceed to estimate $|B_q^S(2n,d-1)\cap\langle\vec{u}\rangle^{\perp_S}|$. For $h=o(n)$, the contribution is at most $\Vol_q^S(2n-2t,d-h)\cdot \binom{t}{h}(q^2-1)^{h}$. As $t,d=\Theta (n)$, this term is exponentially smaller than the dominant term in \cref{eq:one-vec-cap}. Thus, it suffices to bound the term in \cref{eq:one-vec-cap} for the term with $h=\Omega(n)$.
    Hence,
    \begin{align*}
        &\sum_{w=0}^{d-1}\sum_{h=0}^{min\{w,t\}}\binom{t}{h}\binom{n-t}{w-h}(q^2-1)^{w-h}\cdot N_h\\
        &\leq \frac{1+q^{-\Omega(n)}}{q}\sum_{w=0}^{d-1}(q^2-1)^w\sum_{h=0}^{min\{w,t\}}\binom{t}{h}\binom{n-t}{w-h}\\
        &=\frac{1+q^{-\Omega(n)}}{q}\sum_{w=0}^{d-1}(q^2-1)^w\binom{n}{w} 
        =\frac{1+q^{-\Omega(n)}}{q}\Vol_q^S(2n,d-1).
    \end{align*}
    In the same way, we can obtain:
    \[
    N_h=\sum_{j=0}^{h-1}(-1)^j\binom{h}{j}q^{2h-1-2j}+(-1)^h\geq q^{2h-1}(1-\frac{1}{q^2})^h-1
    \]
    and
    \[
    |B_q^S(2n,d-1)\cap \langle\vec{u}\rangle^{\perp_S}|\geq \frac{1-q^{-\Omega(n)}}{q}\Vol_q^S(2n,d-1).
    \]
    Therefore, the proof is completed.
\end{proof}

We first assume that $\vec{u}_1,\ldots,\vec{u}_\ell$ be nonzero, linearly independent random vectors in $B_q^S(2n,d-1)$, and subsequently incorporate the mutual orthogonality relations among them in Corollary~\ref{col:l-vector-bound}. Our argument shows that if $\vec{x}\in B_q^S(2n,d-1)\cap \bigcap_{i=1}^{\ell}\langle\vec{u}_i\rangle^{\perp_S}$, the vector coordinates must satisfy $\ell$ linear equations. Let the coefficient matrix induced by the $\ell$ linear equations be $A$ and $\rank(A)=\ell$. We prove that the size of the intersection $B_q^S(2n,d-1)\cap \bigcap_{i=1}^{\ell}\langle\vec{u}_i\rangle^{\perp_S}$ is affected only when $\rank(A)<\ell$, and the total contribution due to such rank-deficiency events is exponentially small.

\begin{thm}\label{thm:l-vector-bound}
    Let $\ell=\sqrt{n}$, $\vec{u}_1,\ldots,\vec{u}_\ell$ be nonzero, linearly independent random vectors in $B_q^S(2n,d-1)$, we obtain the bound:
    \[
    \frac{1-q^{-\Omega(n)}}{q^\ell}\Vol_q^S(2n,d-1)\leq |B_q^S(2n,d-1)\cap \langle\vec{u}_1\rangle^{\perp_S}\cap \langle \vec{u}_2\rangle^{\perp_S}\ldots\langle\vec{u}_\ell\rangle^{\perp_S}|\leq \frac{1+q^{-\Omega(n)}}{q^\ell}\Vol_q^S(2n,d-1).
    \]
\end{thm}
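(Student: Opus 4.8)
The plan is to transfer the statement to a structural property of the $\F_q$-code $V:=\spa\{\vec{u}_1,\dots,\vec{u}_\ell\}$ — that $V$ has symplectic minimum distance $\Omega(n)$ — and then to evaluate the intersection by a character-sum expansion over the possible supports of $\vec{x}$.

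\textbf{Reduction.} First I would show that, except with probability $q^{-\Omega(n)}$ over the random linearly independent $\vec{u}_i$, every nonzero $\vec{v}\in V$ has $\wt_S(\vec{v})\ge(\delta-\varepsilon)n$ for a small constant $\varepsilon>0$. Writing $\vec{v}=\sum_i c_i\vec{u}_i$ with $s\ge 1$ nonzero coefficients and rescaling each $c_i\vec{u}_i$ (again uniform in $B_q^S(2n,\delta n)$, since scaling by a nonzero field element preserves the symplectic ball), $\vec{v}$ is a sum of $s$ i.i.d.\ uniform ball vectors; by the symplectic analogue of Lemma~\ref{lem:Hamming ball} — which follows from the Hamming case over the alphabet $\F_{q^2}$ under $\F_q^{2n}\cong\F_{q^2}^n$ — together with the concentration of the weight of a random ball vector near $\delta n$ (valid since $\delta<1-1/q^2$ makes $\binom ni(q^2-1)^i$ increasing on $0\le i\le\delta n$), we get $\wt_S(\vec{v})\ge(\delta-\varepsilon)n$ except with probability $2^{-\Omega(n)}$; a union bound over the at most $q^\ell=q^{\sqrt n}$ projective directions is harmless because $\sqrt n=o(n)$. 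Condition on this event, and also discard the $\vec{x}$ with $\wt_S(\vec{x})<(\delta-\varepsilon)n$, which number at most $\Vol_q^S(2n,(\delta-\varepsilon)n)=q^{-\Omega(n)}\Vol_q^S(2n,\delta n)$.

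\textbf{Decomposition and the key identity.} Write
\[
N:=\bigl|B_q^S(2n,d-1)\cap\langle\vec{u}_1\rangle^{\perp_S}\cap\cdots\cap\langle\vec{u}_\ell\rangle^{\perp_S}\bigr|=\sum_{w}\sum_{T\in\binom{[n]}{w}}B(T),
\]
where $B(T)$ counts the ways to fill the $w$ coordinate pairs indexed by $T$ with nonzero elements of $\F_q^2$ (the remaining pairs being $\vec{0}$) so that the $\ell$ equations $\langle\vec{x},\vec{u}_i\rangle_S=0$ hold. Let $A(T)$ be the $\ell\times 2w$ coefficient matrix of these equations; at the pair labelled $j\in T$ its row space realizes $(b_{v,j},-a_{v,j})$ for $\vec{v}=\sum_i c_i\vec{u}_i$, so the row-space codeword attached to $\vec{v}$ has weight $\wt_S(\vec{v};T)=|\supp_S(\vec{v})\cap T|$. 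Expanding $\prod_{j\in T}(1-\mathbf{1}[\text{pair }j=\vec{0}])$ by additive characters of $\F_q^2$ and summing over the solution space gives the exact identity
\[
B(T)=q^{-\operatorname{rank}A(T)}(q^2-1)^{w}\Bigl(1+\sum_{\vec{v}\in V\setminus\{\vec{0}\}}\bigl(\tfrac{-1}{q^2-1}\bigr)^{\wt_S(\vec{v};T)}\Bigr).
\]

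\textbf{Good supports and assembly.} For $w\ge(\delta-\varepsilon)n$ call $T$ \emph{good} if $\wt_S(\vec{v};T)\ge\gamma n$ for every nonzero $\vec{v}\in V$, with $\gamma:=\tfrac12(\delta-\varepsilon)^2$. Under the reduction each such $\vec{v}$ has $\wt_S(\vec{v})\ge(\delta-\varepsilon)n$, so $\wt_S(\vec{v};T)$ is hypergeometric with mean $\ge 2\gamma n$; a Chernoff tail plus a union bound over the $\le q^\ell$ directions show all but a $q^{-\Omega(n)}$-fraction of the $\binom nw$ sets of size $w$ are good. For a good $T$, $A(T)$ has full rank $\ell$ and the bracketed error above is at most $\sum_{\vec{v}\ne\vec{0}}(q^2-1)^{-\wt_S(\vec{v};T)}\le q^{\ell}(q^2-1)^{-\gamma n}=q^{-\Omega(n)}$ — here $\ell=\sqrt n=o(n)$ is used crucially — hence $B(T)=(1\pm q^{-\Omega(n)})q^{-\ell}(q^2-1)^w$. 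Summing over good $T$ and $w\ge(\delta-\varepsilon)n$ gives $(1\pm q^{-\Omega(n)})q^{-\ell}\Vol_q^S(2n,d-1)$ by Vandermonde; the discarded parts are absorbed since a non-good $T$ has $B(T)\le(q^2-1)^w$, so their total is $\le\sum_w q^{-\Omega(n)}\binom nw(q^2-1)^w$, which, with the light-$\vec{x}$ bound, is $q^{-\Omega(n)}\cdot q^{-\ell}\Vol_q^S(2n,d-1)$ because $q^{\ell}=q^{o(n)}$.

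\textbf{The main difficulty.} The crux is the error control in $B(T)$: a naive inclusion–exclusion over which pairs of $\vec{x}$ vanish (in the style of Lemma~\ref{lem:one-vec-bound}) loses factors of order $(q^2/(q^2-1))^{w}$ and is too weak, so one must recognize that the deviation is governed exactly by $\sum_{\vec{v}\in V\setminus\{\vec{0}\}}(q^2-1)^{-\wt_S(\vec{v};T)}$, which is $q^{-\Omega(n)}$ precisely because the random code $V$ has symplectic minimum distance $\Omega(n)$; and since this must hold simultaneously for all $\le q^\ell$ codewords and all good $T$, the sub-exponential regime $\ell=O(\sqrt n)$ is exactly what the argument can afford. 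Establishing that large-minimum-distance property uniformly — the reduction step, which subsumes the one-vector hypothesis $\wt_S(\vec{u})=\Theta(n)$ of Lemma~\ref{lem:one-vec-bound} — is what makes the proof go through.
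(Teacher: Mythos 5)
Your proposal is correct, but it takes a genuinely different route from the paper's proof, and in one important respect it is more careful. The paper fixes a support $T$, observes that when the restricted coefficient matrix $A_T$ has full rank $\ell$ the count $N(T)$ "by the same inclusion--exclusion argument as in Lemma~\ref{lem:one-vec-bound}" equals $\frac{1\pm q^{-\Omega(n)}}{q^\ell}(q^2-1)^{|T|}$, and then spends all its effort bounding the probability of rank deficiency (via hypergeometric concentration of $|\supp(\vec u_i)\cap T|$ and a union bound over $\mathrm{span}\{\vec u_1',\dots,\vec u_{i-1}'\}$), finishing with Markov's inequality on the expected rank-deficient contribution. You instead first establish a structural property of the random code $V=\spa\{\vec u_1,\dots,\vec u_\ell\}$ --- minimum symplectic distance $\ge(\delta-\varepsilon)n$ with probability $1-q^{-\Omega(n)}$, via Lemma~\ref{lem:quantum-ball} and a union bound over the $O(q^\ell)=q^{O(\sqrt n)}$ directions --- and then evaluate $B(T)$ exactly by a character-sum (MacWilliams-type) identity, so that the deviation from $q^{-\ell}(q^2-1)^{w}$ is precisely $\sum_{\vec v\in V\setminus\{\vec 0\}}(q^2-1)^{-\wt_S(\vec v;T)}$. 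This makes visible a point the paper glosses over: full rank of $A_T$ alone does \emph{not} imply $N(T)=\frac{1\pm q^{-\Omega(n)}}{q^\ell}(q^2-1)^{w}$; one needs every nonzero $\vec v\in V$ to have restricted weight $\Omega(n)$ on $T$ (a single codeword with $\wt_S(\vec v;T)=O(1)$ already produces a constant-order relative error), which is exactly your "good $T$" condition and is strictly stronger than $\rank(A_T)=\ell$. Your approach buys a deterministic estimate conditional on one high-probability event, at the cost of invoking the Fourier/weight-enumerator identity; the paper's buys elementary counting at the cost of an incompletely justified step. Two small remarks: (i) your identity as written, with $q^{-\rank A(T)}$ in front and the sum over $V\setminus\{\vec 0\}$, overcounts when the restriction map $V\to C_T$ is not injective (the universally valid form is $B(T)=q^{-\ell}(q^2-1)^{w}\sum_{\vec v\in V}(\tfrac{-1}{q^2-1})^{\wt_S(\vec v;T)}$), but since you only apply it for good $T$, where injectivity is forced, this is harmless; (ii) you should note explicitly that conditioning on linear independence of the $\vec u_i$ perturbs the probabilities in the reduction step by only a $(1-q^{-\Omega(n)})^{-1}$ factor, as the paper does.
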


\begin{proof}
    Write $\vec{u}_i=(a_{i1},\ldots,a_{in}\mid b_{i1},\ldots,b_{in})$ for $i\in [\ell]$. Any $\vec{x}=(x_1,\ldots, x_n \mid y_1,\ldots y_n)\in B_q^S(2n,d-1)\cap \bigcap_{i=1}^{\ell}\langle \vec{u}_i\rangle^{\perp_S}$ must satisfy the $\ell$ symplectic-orthogonality constraints:
\begin{equation}\label{eq:l-equations}
    \sum_{j=1}^n(a_{ij}y_j-b_{ij}x_j)=0,\qquad i\in[\ell].
\end{equation}

Let $A\in \F_q^{\ell\times n}$ be the coefficient matrix induced by $\cref{eq:l-equations}$ (each column corresponds to one symplectic pair $(x_j,y_j)$). Since $\vec{u}_1,\ldots,\vec{u}_{\ell}$ are linearly independent, $A$ has full row rank, i.e., $\rank(A)=\ell$. Moreover, $\vec{u}_1,\ldots,\vec{u}_{\ell}$ are random vectors in  $B_q^S(2n,d-1)$. We note that given that $\vec{u}_1,\ldots,\vec{u}_{\ell}$ distribute uniformly at random in  $B_q^S(2n,d-1)$, the probability $\vec{u}_1,\ldots,\vec{u}_{\ell}$ are linearly independent is at least $1-\exp(\Omega(n))$ if $d=\Omega(n)$. Thus, we will only assume that $\vec{u}_1,\ldots,\vec{u}_{\ell}$ are random vectors in  $B_q^S(2n,d-1)$ in the following argument. 

Partition $B_q^S(2n,d-1)$ into $d$ sets according to the symplectic weight:
\[
B_q^S(2n,d-1)=\bigcup_{w=0}^{d-1}\mathcal B_w,\qquad \mathcal B_w:=\{\vec{x}:\wt_S(\vec{x})=w\}.
\]

For vectors $\vec{x}$ of symplectic weight $w$, let $T\subseteq [n]$ denote the support set of $\vec{x}$. Restricting \cref{eq:l-equations} to the variables indexed by $T$ yields a reduced system with coefficient submatrix $A_T$ which is a submatrix of $A$ by restricting $A$ to the columns indexed by $T$. Denote by $N(T)$ the number of assignments of $(x_j,y_j)_{j\in T}\in (\F_q^2\setminus \{(0,0)\})^w$ satisfying \cref{eq:l-equations}. Then the intersection size can be written as:
\begin{equation}\label{eq:l-sum}
    \left|B_q^S(2n,d-1)\cap \bigcap_{i=1}^{\ell}\langle\vec{u}_i\rangle^{\perp_S}\right|=\sum_{w=0}^{d-1}\ \sum_{\substack{T\subseteq[n]\\|T|=w}} N(T).
\end{equation}

Fix $T\subseteq [n]$ with $|T|=\alpha n$. It is well known that the total number of vectors in $B_q^S(2n,d-1)$ with symplectic weight $\alpha <\frac{2\delta}{3}$ is at most a $q^{-\Omega(n)}$ fraction of $\Vol_q^S(2n,d-1)$. Hence, it suffices to consider $\alpha\in [\frac{2\delta}{3},\delta]$. %Moreover, with probability at least $1-q^{-\Omega(n)}$, all $\vec{u}_i$ satisfy $wt_S(\vec{u}_i)\geq \frac{2\delta n}{3}$, 
we proceed our argument conditioning on this case. When $\rank(A_T)=\ell$, the $\ell$ constraints in \cref{eq:l-equations} behave as $\ell$ independent linear constraints on the $2\alpha n$ variables $(x_j,y_j)_{i\in T}$. Using the same inclusion-exclusion argument as in Lemma~\ref{lem:one-vec-bound}, one obtains
\[
N(T)=\frac{1\pm q^{-\Omega(n)}}{q^\ell}\,(q^2-1)^{\alpha n}
\qquad\text{whenever }\rank(A_T)=\ell.
\]

It remains to bound the total contribution of supports $T$ for which $\rank(A_T)<\ell$. For any $T$, define the bad event $\mathcal E_T:=\{\rank(A_T)<\ell\}$. As $\vec{u}_1,\ldots,\vec{u}_\ell$ are sampled uniformly at random from $B_q^S(2n,d-1)$, it is standard that for any fixed $\varepsilon>0$, it holds that
\[
\Pr\left[(\delta-\varepsilon)n\leq \wt_S(\vec{u}_i)\leq \delta n\right]\geq 1-q^{-\Omega (n)}.
\]
for every $i\in [\ell]$.
Consequently, by a union bound, the above event holds for all $i\in [\ell]$ simultaneously with probability at least $1-\ell\cdot q^{-\Omega (n)}$. %Hence, we have
%\[
%\delta-\varepsilon\leq \Pr[\vec{u}_i(j)\neq \vec{0}]\leq \delta+\varepsilon.
%\]

Assume that $\wt_S(\vec{u}_i)=w_i$. Then the support $\supp(\vec{u}_i)\subseteq [n]$ is uniformly distributed among all subsets of $[n]$ of size $w_i$. Hence,
\[
X_i:=\wt_S(\vec{u}_i') = |\supp(\vec{u}_i)\cap T|
\]
follows a hypergeometric distribution with mean $\mathbb{E}[X_i]=\alpha w_i$. Therefore, by Chernoff-Hoeffding bound for hypergeometric random variables, for any $\varepsilon>0$, there exists $c'>0$ such that for every $i\in [\ell]$,
\[
\Pr\big[\,|X_i-\alpha w_i|\ge \varepsilon n \big]\le q^{-c'n}.
\]
In particular, if $w_i\in[(\delta-\varepsilon)n,\delta n]$, we have
\[
\Pr\left[ X_i\in [(\alpha\delta-\varepsilon')n, (\alpha\delta+\varepsilon')n]\right]\geq 1-q^{-\Omega(n)}.
\]
 for any $\varepsilon'>0$.

Let $W_{i-1}=\mathrm{span}\{\vec{u}_1',\ldots,\vec{u}_{i-1}'\}$, so $|W_{i-1}|\le q^{i-1}\le q^{\ell-1}$. Note that the support $\supp(\vec{u}_i)$ is uniform among all subsets of $[n]$ with size $w_i$. Hence, conditioning on $X_i=h$, the punctured vector $\vec{u}_i'=\vec{u}_i|_T$ is uniform over the symplectic sphere $\mathcal S_h(T):=\{\vec{v}\in(\F_q^2)^{|T|}: \wt_S(\vec{v})=h\}$, whose cardinality is $|\mathcal S_h(T)|=\binom{\alpha n}{h}(q^2-1)^h$.
Thus, for any $i\ge 2$ and any fixed subspace $W_{i-1}$,
\[
\Pr[\vec{u}_i'\in W_{i-1}\mid X_i=h]\le \frac{|W_{i-1}|}{|\mathcal S_h(T)|}
\le \frac{q^{\ell-1}}{\binom{\alpha n}{h}(q^2-1)^h}=q^{-\Omega(n)}.
\]

Taking a union bound over $i=2,\ldots,\ell$, we conclude that
\[
\begin{aligned}
    \Pr(\mathcal E_T)&=\Pr\left[\{\vec{u}_i'\}_{i=1}^\ell \text{ are linearly dependent}\right] \\
    &\leq \ell\cdot \Pr[\vec{u}_\ell'\in W_{\ell-1}] = q^{-a_\delta n}
\end{aligned}
\]
where the constant $a_\delta>0$ depends on $\delta$.

Now it suffices to control the total contribution of rank-deficient supports:
\[
\sum_{w=0}^{d-1}\ \sum_{\substack{T\subseteq[n]\\ |T|=w}} \mathbf 1\{\mathcal E_T\}\,N(T).
\]

Therefore, we take expectation over $A$ and bound the expected bad contribution at a fixed weight $w$:
\begin{align}\label{eq:rank-deficient}
\mathbb{E}_A\!\left[\sum_{\substack{T\subseteq[n]\\ |T|=w}} \mathbf 1\{\mathcal E_T\}\,N(T)\right]
&=\sum_{\substack{T\subseteq[n]\\ |T|=w}} \mathbb{E}_A\!\left[\mathbf 1\{\mathcal E_T\}\,N(T)\right] \notag \notag   \\
&\leq \sum_{\substack{T\subseteq[n]\\ |T|=w}} \Pr_{A}(\mathcal E_T)\cdot \max_{A} N(T) \notag \\
&\leq \sum_{\substack{T\subseteq[n]\\ |T|=w}} \Pr_A(\mathcal E_T)\cdot (q^2-1)^w \notag \\
&\leq q^{-a_{\delta}n}\binom{n}{w}(q^2-1)^w.
\end{align}
Here we use that for any fixed $T$, 
\[
\mathbb{E}_A\!\left[\mathbf{1}\{\mathcal{E}_T\}N(T)\right]
\le \Pr_A(\mathcal{E}_T)\cdot \max_A N(T).
\]
Moreover, since $N(T)$ counts assignments $(x_j,y_j)_{j\in T}\in(\F_q^2\setminus\{(0,0)\})^w$ satisfying \cref{eq:l-equations}, we have the bound $\max_A N(T)\le (q^2-1)^w$.
Summing over $w=0,1,\ldots,d-1$ and let variable $Z\coloneqq\sum_{w=0}^{d-1}\sum_{\substack{T\subseteq[n]\\ |T|=w}} \mathbf 1\{\mathcal E_T\}\,N(T)$, we obtain:
\[
\mathbb{E}_A[Z]
\le q^{-a_{\delta}n}\sum_{w=0}^{d-1}\binom{n}{w}(q^2-1)^w \notag
= q^{-a_{\delta}n}\,\Vol_q^S(2n,d-1).
\]
Since the rank-deficient contribution $Z$ is nonnegative and $\mathbb{E}(Z)\leq q^{-a_{\delta}n}\cdot \Vol_q^S(2n,d-1)$, Markov’s inequality implies that  $$\Pr[Z\leq q^{-a_{\delta} n/2}\cdot \Vol_q^S(2n,d-1)]\geq 1-q^{-a_{\delta}n/2}.$$

Using \cref{eq:l-sum,eq:rank-deficient} and following the form of the previous theorem, we obtain
\[
\frac{1- q^{-\Omega(n)}}{q^\ell}\sum_{w=0}^{d-1}\binom{n}{w}(q^2-1)^w\leq \left|B_q^S(2n,d-1)\cap \bigcap_{i=1}^{\ell}\langle\vec{u}_i\rangle^{\perp_S}\right|\leq \frac{1+ q^{-\Omega(n)}}{q^\ell}\sum_{w=0}^{d-1}\binom{n}{w}(q^2-1)^w.
\]
This proof is complete.

\end{proof}

The above theorem does not completely fit for our random quantum codes argument. The reason is that we do not require that $\vec{u}_1,\ldots, \vec{u}_\ell$ should be orthogonal to each other which is necessary for quantum code construction. We will add this constraint in the following corollary. We show that the orthogonal constraint will not affect the bound in \cref{thm:l-vector-bound}.

\begin{cor}\label{col:l-vector-bound}
    Let $\ell=\frac{\sqrt{a_{\delta}n}}{2}$ where $a_{\delta}$ is defined in the proof of  \cref{thm:l-vector-bound}, and let $\vec{u}_1,\ldots,\vec{u}_{\ell}$ be nonzero, linearly independent random vectors in $B_q^S(2n,d-1)$ that are mutually symplectic orthogonal to each other. Then, it holds
    \[
    \frac{1-q^{-\Omega(n)}}{q^\ell}\Vol_q^S(2n,d-1)\leq |B_q^S(2n,d-1)\cap \langle\vec{u}_1\rangle^{\perp_S}\cap\ldots\langle\vec{u}_\ell\rangle^{\perp_S}|\leq \frac{1+q^{-\Omega(n)}}{q^\ell}\Vol_q^S(2n,d-1).
    \]
\end{cor}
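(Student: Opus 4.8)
The plan is to observe that the corollary's distribution is exactly that of $\ell$ i.i.d.\ uniform vectors in $B_q^S(2n,d-1)$ conditioned on the mutual-orthogonality event $\mathcal O:=\{\langle\vec{u}_i,\vec{u}_j\rangle_S=0\text{ for all }i<j\}$ (together with linear independence, which occurs with probability $1-q^{-\Omega(n)}$ anyway). So the whole argument reduces to two things: first, pinning down $\Pr[\mathcal O]=q^{-\binom{\ell}{2}}\bigl(1\pm q^{-\Omega(n)}\bigr)$; and second, rerunning the proof of \cref{thm:l-vector-bound} but transferring each of its probabilistic estimates from the unconditioned i.i.d.\ model to the $\mathcal O$-conditioned model, which costs only the factor $1/\Pr[\mathcal O]\le q^{\binom{\ell}{2}}(1+q^{-\Omega(n)})$. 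The choice $\ell=\tfrac{\sqrt{a_\delta n}}{2}$ is precisely calibrated so that $\binom{\ell}{2}\le\tfrac{\ell^2}{2}=\tfrac{a_\delta n}{8}<a_\delta n$, so this blow-up is sub-exponential compared with the rank-deficiency bound $\Pr[\mathcal E_T]\le q^{-a_\delta n}$ coming from \cref{thm:l-vector-bound}, and nothing deteriorates.

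For the first step I would induct on $\ell$. The case $\ell=1$ is trivial, and for the step I reveal $\vec{u}_\ell$ last and use its independence from $\vec{u}_1,\dots,\vec{u}_{\ell-1}$ to write $\Pr[\mathcal O_\ell]=\mathbb{E}\bigl[\mathbf 1\{\mathcal O_{\ell-1}\}\cdot \tfrac{1}{\Vol_q^S(2n,d-1)}\,\bigl|B_q^S(2n,d-1)\cap\bigcap_{j<\ell}\langle\vec{u}_j\rangle^{\perp_S}\bigr|\,\bigr]$, the expectation being over i.i.d.\ $\vec{u}_1,\dots,\vec{u}_{\ell-1}$. By \cref{thm:l-vector-bound} (or \cref{lem:one-vec-bound} when $\ell=2$) the bracketed ratio equals $q^{-(\ell-1)}(1\pm q^{-\Omega(n)})$ off an event of probability $q^{-\Omega(n)}$; combining this with the inductive value $\Pr[\mathcal O_{\ell-1}]=q^{-\binom{\ell-1}{2}}(1\pm q^{-\Omega(n)})$ and the identity $\binom{\ell-1}{2}+(\ell-1)=\binom{\ell}{2}$ yields $\Pr[\mathcal O_\ell]=q^{-\binom{\ell}{2}}(1\pm q^{-\Omega(n)})$. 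For the second step, note that the proof of \cref{thm:l-vector-bound} derives its conclusion from three ingredients: (i) the purely deterministic evaluation $N(T)=\frac{1\pm q^{-\Omega(n)}}{q^\ell}(q^2-1)^w$ whenever $\rank(A_T)=\ell$, which depends only on the linear system \eqref{eq:l-equations} and is untouched by any conditioning; (ii) the estimate that all symplectic weights $\wt_S(\vec{u}_i)$ are typical with probability $1-q^{-\Omega(n)}$; and (iii) the rank-deficiency bound $\Pr[\mathcal E_T]\le q^{-a_\delta n}$. Transferring (ii) and (iii) to the $\mathcal O$-model via the factor $q^{\binom{\ell}{2}}$ leaves them at $q^{-\Omega(n)}$; in particular $\mathbb{E}_{\mathcal O}[Z]\le q^{\binom{\ell}{2}}q^{-a_\delta n}\Vol_q^S(2n,d-1)=q^{-\Omega(n)}\Vol_q^S(2n,d-1)$ for the bad-support sum $Z$, and Markov's inequality together with (i) reproduces the two-sided bound exactly as in \cref{thm:l-vector-bound}.

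The main obstacle is the error bookkeeping in the first step, i.e.\ verifying that the lower bound $\Pr[\mathcal O]\ge q^{-\binom{\ell}{2}}(1-q^{-\Omega(n)})$ really holds. Along the $\ell$-step induction one produces $\Theta(n)$ small correction terms (one per level, each coming from \cref{thm:l-vector-bound} or \cref{lem:one-vec-bound}), and one must confirm that they telescope into a single $q^{-\Omega(n)}$ factor rather than accumulating; this works because $\ell=O(\sqrt n)$ is sub-exponential, but it requires checking that the error exponents furnished by \cref{thm:l-vector-bound} and \cref{lem:one-vec-bound} are bounded below by a positive constant independent of $\ell$, and that this constant strictly exceeds $\binom{\ell}{2}/n\le a_\delta/8$ (shrinking $\ell$ to another $\Theta(\sqrt n)$ value if necessary). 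A secondary, purely organizational point is that one should state \cref{thm:l-vector-bound} for every $\ell'\le\tfrac{\sqrt{a_\delta n}}{2}$ (its proof delivers this verbatim), so that both the induction and the final application are licensed; granting that, the rest is a line-by-line repetition of the argument of \cref{thm:l-vector-bound} under conditioning on $\mathcal O$.
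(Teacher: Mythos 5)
Your proposal is correct and follows essentially the same route as the paper: the paper also lower-bounds the number of mutually symplectically orthogonal $\ell$-tuples by iterating \cref{thm:l-vector-bound} over $h=1,\ldots,\ell$ (yielding the $(1-q^{-\Omega(n)})\Vol_q^S(2n,d-1)^\ell/q^{\ell(\ell-1)/2}$ count, i.e.\ your $\Pr[\mathcal O]\gtrsim q^{-\binom{\ell}{2}}$), and then transfers the Markov-type bad-tuple bound $q^{-a_\delta n/2}$ to the orthogonal ensemble at the cost of the factor $q^{\ell(\ell-1)/2}\le q^{a_\delta n/8}$, exactly as you describe. Your conditional-probability phrasing and explicit error bookkeeping are a cleaner presentation of the same counting argument.
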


\begin{proof}
    We first apply \cref{thm:l-vector-bound} to conclude
    \[
    |B_q^S(2n,d-1)\cap \langle\vec{u}_1\rangle^{\perp_S}\cap \langle \vec{u}_2\rangle^{\perp_S}\ldots\langle\vec{u}_h\rangle^{\perp_S}|\geq\frac{1-q^{-\Omega(n)}}{q^{h-1}}\Vol_q^S(2n,d-1).
    \]
    for $h=1,\ldots,\ell$. This implies that there are at least $\frac{1-q^{-\Omega(n)}}{q^{h}}\Vol_q^S(2n,d-1)$ different $\vec{u}_{h+1}$ lying in the orthogonal space of the space spanned by $\vec{u}_{1},\ldots,\vec{u}_h$. It follows that there exists at least
    \begin{equation}\label{eq:mutual_orthogonal}
    \prod_{r=1}^{\ell}\frac{1-q^{-\Omega(n)}}{q^{r-1}}\Vol_q^S(2n,d-1)\geq \frac{(1-q^{-\Omega(n)}) \Vol_q^S(2n,d-1)^\ell}{q^{\frac{\ell(\ell-1)}{2}}}
    \end{equation}
    number of such tuples $(\vec{u}_1,\ldots,\vec{u}_\ell)\in B_q^S(2n,d-1)^\ell$ such that $\vec{u}_1,\ldots,\vec{u}_\ell$ are mutually symplectically orthogonal.

    On the other hand, \cref{thm:l-vector-bound} (together with the Markov step at the end of its proof) implies that, for a uniformly random $\ell$-tuple $(\vec{u}_1,\ldots,\vec{u}_{\ell})\in B_q^S(2n,d-1)^{\ell}$, the probability that a uniformly $\ell$-tuples $(\vec{u}_1,\ldots,\vec{u}_{\ell})$ does not satisfy the conclusion of \cref{thm:l-vector-bound} is at most $q^{-a_{\delta}n/2}$. Therefore, the number of such $\ell$-tuples is at most $q^{-a_{\delta}n/2}\Vol_q^S(2n,d-1)$.

    Combining this estimate \cref{eq:mutual_orthogonal}, we obtain that even the above "bad" $\ell$-tuples are mutually symplectically orthogonal, the fraction of "bad" $\ell$-tuples is at most
    \[
    \frac{q^{\ell(\ell-1)/2}}{(1-q^{-\Omega(n)})^{\ell-1}}\cdot q^{-a_{\delta}n/2}\leq q^{-a_{\delta}n/4},
    \]
    as $\ell=\frac{\sqrt{a_{\delta}n}}{2}$. Therefore, the mutual symplectic orthogonality constraint only affects an exponentially small subset of $\ell$-tuples. Consequently, using the same approach in \cref{thm:l-vector-bound}, the following conclusion still holds:
    \[
    \frac{1-q^{-\Omega(n)}}{q^\ell}\Vol_q^S(2n,d-1)\leq |B_q^S(2n,d-1)\cap \langle\vec{u}_1\rangle^{\perp_S}\cap \ldots\langle\vec{u}_\ell\rangle^{\perp_S}|\leq \frac{1+q^{-\Omega(n)}}{q^\ell}\Vol_q^S(2n,d-1).
    \]
\end{proof}

With the above theorem established, we proceed to generalize Lemma~\ref{lem:Hamming ball} to the quantum setting. Prior to this extension, we prove the corresponding result for the case where $\ell$ vectors are randomly selected from $B_q^S(2n,d-1)$.

\begin{lemma}
     Let $q\geq 2$ be an integer and $0< \delta < 1-\frac{1}{q^2}$ be a real number. Then:
    \begin{enumerate}[label=(\roman*)]
        \item $\Vol_q^S(2n,\delta n)\leq (q^2)^{H_{q^2}(\delta)n}$; and
        \item for large enough $n$, $\Vol_q^S(2n,\delta n)\geq (q^2)^{H_{q^2}(\delta)n-o(n)}$.
    \end{enumerate}
    where $H_{q^2}(x)=x\log_{q^2}(q^2-1)-x\log_{q^2}x-(1-x)\log_{q^2}(1-x)$.
    \label{lem:quantum volume}
\end{lemma}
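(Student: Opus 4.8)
The plan is to deduce this lemma directly from Lemma~\ref{lem:Hamming volume} by a change of alphabet. The key observation is that, for a vector $\vec{v}=(a_1,\dots,a_n\mid b_1,\dots,b_n)\in\F_q^{2n}$, the $i$-th symplectic coordinate is the pair $(a_i,b_i)\in\F_q^2$, which ranges over $q^2$ values and contributes to the symplectic weight precisely when it is one of the $q^2-1$ nonzero pairs. Comparing this with the closed form
\[
\Vol_q^S(2n,d)=\sum_{i=0}^d\binom{n}{i}(q^2-1)^i,
\]
one sees that $\Vol_q^S(2n,d)=\Vol_{q^2}(n,d)$, i.e.\ the symplectic ball of radius $d$ in $\F_q^{2n}$ has the same cardinality as the Hamming ball of radius $d$ over an alphabet of size $q^2$.

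Given this identification, I would simply invoke Lemma~\ref{lem:Hamming volume} with the alphabet size $q$ there replaced by $q^2$. The hypotheses match up: $q^2$ is an integer with $q^2\ge 4\ge 2$, and our assumption $0<\delta<1-\tfrac1{q^2}$ is exactly the range $0\le p\le 1-\tfrac1{q}$ of that lemma with $p=\delta$ and $q$ replaced by $q^2$; moreover the entropy function $H_q(x)=x\log_q(q-1)-x\log_q x-(1-x)\log_q(1-x)$ becomes literally $H_{q^2}(x)=x\log_{q^2}(q^2-1)-x\log_{q^2}x-(1-x)\log_{q^2}(1-x)$ under the substitution, so there is nothing further to translate. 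Part~(i) then yields $\Vol_q^S(2n,\delta n)=\Vol_{q^2}(n,\delta n)\le(q^2)^{H_{q^2}(\delta)n}$, and part~(ii) yields the matching lower bound $(q^2)^{H_{q^2}(\delta)n-o(n)}$.

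If one prefers a self-contained argument rather than a black-box reduction, the two estimates can be reproved directly, mirroring the standard proof. For the upper bound, note that $\delta<1-\tfrac1{q^2}$ is equivalent to $\tfrac{\delta}{q^2-1}\le 1-\delta$, hence for every $i\le\delta n$ one has $\delta^i(1-\delta)^{n-i}\ge(q^2-1)^i\bigl(\tfrac{\delta}{q^2-1}\bigr)^{\delta n}(1-\delta)^{(1-\delta)n}$; summing against $\binom{n}{i}$ over $i\le\delta n$ and using $\sum_{i=0}^n\binom{n}{i}\delta^i(1-\delta)^{n-i}=1$ gives (i) after rearranging. For the lower bound (ii), keep only the single term $i=\delta n$ and apply $\binom{n}{\delta n}\ge(n+1)^{-1}2^{H_2(\delta)n}$ together with the identity $2^{H_2(\delta)n}(q^2-1)^{\delta n}=(q^2)^{H_{q^2}(\delta)n}$. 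I expect no real obstacle here: the entire content is the combinatorial identity between symplectic-ball volumes over $\F_q$ and Hamming-ball volumes over an alphabet of size $q^2$, after which everything is either a citation or the routine entropy-of-the-binomial computation.
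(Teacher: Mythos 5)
Your reduction is correct and is exactly the intended justification: the paper states this lemma without proof precisely because the displayed formula $\Vol_q^S(2n,d)=\sum_{i=0}^{d}\binom{n}{i}(q^2-1)^i$ identifies it with $\Vol_{q^2}(n,d)$, so Lemma~\ref{lem:Hamming volume} applies verbatim with $q$ replaced by $q^2$. Your optional self-contained entropy computation is also accurate, so there is nothing to add.
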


Analogous to the analysis for $|B_q^S(2n,d-1)\cap \langle\vec{u}_1\rangle^{\perp_S}\cap \ldots\cap\langle\vec{u}_\ell\rangle^{\perp_S}|$, we first impose the assumption that $\vec{v}_1,\ldots,\vec{v}_{\ell}$ are random vectors distributed uniformly in $B_q^S(2n,\delta n)$, and subsequently incorporate their mutual orthogonality relations in \cref{thm:l-vector-orth-bound}. The proof of the next lemma follows the same approach as Lemma~\ref{lem:Hamming ball}. Hence, the detailed argument is provided in the Appendix~\ref{proof:l-linear-combination}.”

\begin{lemma}
    Assume $d-1=\delta n$ with some constant $\delta\in (0,1-\frac{1}{q^2})$. Let $\vec{v}_1,\ldots, \vec{v}_{\ell}$ be $n\geq \ell\geq 2$ random vectors distributed uniformly in the symplectic ball $B_q^S(2n,\delta n)\subseteq \F_q^{2n}$. Then, the probability that $\sum_{i=1}^{\ell}\vec{v}_i\in B_q^S(2n,\delta n)$ is at most $2^{-h_{\delta}n}$ for some constant $h_{\delta}\in (0,1)$.
    \label{lem:quantum-ball}
\end{lemma}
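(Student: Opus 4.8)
The plan is to mirror the proof of Lemma~\ref{lem:Hamming ball} verbatim, replacing every appearance of the Hamming ball $B_q(n,\delta n)\subseteq\F_q^n$ by the symplectic ball $B_q^S(2n,\delta n)\subseteq\F_q^{2n}$, and every appearance of the entropy function $H_q(\cdot)$ by its symplectic analogue $H_{q^2}(\cdot)$ from Lemma~\ref{lem:quantum volume}. The key structural observation that makes this transfer work is that the symplectic weight $\wt_S(\vec{v})$ of a vector $\vec{v}=(x_1,\dots,x_n\mid y_1,\dots,y_n)$ counts the number of indices $i\in[n]$ for which the \emph{pair} $(x_i,y_i)$ is nonzero, so it behaves exactly like a Hamming weight on an alphabet of size $q^2$. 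Consequently the symplectic ball of radius $\delta n$ has volume $\Vol_q^S(2n,\delta n)=\sum_{i=0}^{\delta n}\binom{n}{i}(q^2-1)^i$, and Lemma~\ref{lem:quantum volume} gives $\Vol_q^S(2n,\delta n)=(q^2)^{H_{q^2}(\delta)n\pm o(n)}$ with $H_{q^2}$ strictly increasing on $(0,1-\tfrac1{q^2})$.

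First I would set up the base case $\ell=2$. As in Lemma~\ref{lem:Hamming ball}, since $H_{q^2}$ is strictly increasing on the relevant domain, a uniformly random $\vec{v}\in B_q^S(2n,\delta n)$ has $\wt_S(\vec{v})<(\delta-\eps)n$ with probability at most $2^{-a_\delta n}$ for a suitable constant $a_\delta\in(0,1)$ and small $\eps>0$. Conditioning on $\wt_S(\vec{v}_1)=\delta_1 n$, $\wt_S(\vec{v}_2)=\delta_2 n$ with $\delta_1,\delta_2\in[\delta-\eps,\delta]$, I would model each coordinate-pair as independent, and compute the expected per-coordinate contribution to $\wt_S(\vec{v}_1+\vec{v}_2)$. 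The pair $(x_i,y_i)+(x_i',y_i')$ is zero precisely when $(x_i',y_i')=-(x_i,y_i)$; conditioned on coordinate $i$ being nonzero in both $\vec{v}_1$ and $\vec{v}_2$ (which are each uniform over the $q^2-1$ nonzero pairs), this happens with probability $\tfrac1{q^2-1}$, so the cancellation term gives
\[
\E[\wt_S(\vec{v}_1+\vec{v}_2)_i]=\delta_1(1-\delta_2)+\delta_2(1-\delta_1)+\frac{q^2-2}{q^2-1}\delta_1\delta_2
=\delta_1+\delta_2-\frac{q^2}{q^2-1}\delta_1\delta_2.
\]
Since $2\delta-\frac{q^2}{q^2-1}\delta^2>\delta$ for every $\delta\in(0,1-\tfrac1{q^2})$, by continuity there is a constant $\eps>0$ with $2(\delta-\eps)-\frac{q^2}{q^2-1}(\delta-\eps)^2\ge\delta+\eps$, hence $\E[\wt_S(\vec{v}_1+\vec{v}_2)]\ge(\delta+\eps)n$. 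Applying the Chernoff bound (Lemma~\ref{Chernoff}) to the independent per-coordinate indicators gives $\Pr[\wt_S(\vec{v}_1+\vec{v}_2)<\delta n]\le 2^{-b_\delta n}$, and combining with the weight-concentration estimate via a union bound as in \eqref{eq:two vector} yields the $\ell=2$ case with $h_\delta=\min\{a_\delta,b_\delta\}/2$, say.

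The induction step is then identical to Lemma~\ref{lem:Hamming ball}: assuming the bound for $\ell-1$, with probability at least $1-2^{-h_\delta n}$ the partial sum $\vec{v}:=\sum_{i=1}^{\ell-1}\vec{v}_i$ has $\wt_S(\vec{v})\ge(\delta-\eps)n$, and with probability at least $1-2^{-a_\delta n}$ we have $\wt_S(\vec{v}_\ell)\ge(\delta-\eps)n$; conditioning on both, the $\ell=2$ argument applied to $\vec{v}+\vec{v}_\ell$ gives $\Pr[\wt_S(\vec{v}+\vec{v}_\ell)<\delta n]\le 2^{-b_\delta n}$, and a union bound closes the induction for all $\ell\le n$ with a final constant $h_\delta\in(0,1)$. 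I do not anticipate a real obstacle here; the only point requiring a little care is that $\vec{v}=\sum_{i=1}^{\ell-1}\vec{v}_i$ is not itself uniform on a symplectic ball, but — exactly as in the Hamming proof — all that is used is the lower bound $\E[\wt_S(\vec{v})]\ge(\delta-\eps)n$ together with independence of coordinates of $\vec{v}_\ell$, so the per-coordinate expectation computation goes through unchanged. The mild subtlety that, for fixed $\vec{v}$, the sum $\vec{v}+\vec{v}_\ell$ has per-coordinate cancellation probability depending on whether $\vec{v}_i=0$ at that coordinate is absorbed into the same inequality $\E[\wt_S(\vec{v}+\vec{v}_\ell)_i]\ge\delta+\eps$ as above.
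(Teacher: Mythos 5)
Your proposal is correct and follows essentially the same route as the paper's own proof (given in Appendix~\ref{proof:l-linear-combination}): the same reduction of symplectic weight to Hamming weight over an alphabet of size $q^2$, the same per-coordinate expectation $\delta_1+\delta_2-\tfrac{q^2}{q^2-1}\delta_1\delta_2$ followed by a Chernoff bound for the base case $\ell=2$, and the same induction on $\ell$ via the partial sum $\vec{v}=\sum_{i=1}^{\ell-1}\vec{v}_i$. No substantive differences.
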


Lemma~\ref{lem:Hamming ball} implies that there are at most $2^{-h_\delta n}\cdot \Vol_q^S(2n,\delta n)^\ell$ sets of $\ell$-vectors $\vec{v}_1,\ldots,\vec{v}_\ell$ such that $\sum_{i=1}^{\ell}\vec{v}_i\in B_q^S(2n,\delta n)$. Next, we prove that even all these $\ell$ sets of vectors are mutually orthogonal, the probability $\Pr\left[\sum_{i=1}^{\ell}\vec{v}_i\in B_q^S(2n,\delta n)\right]$ is still at most $2^{-h'_\delta n}$ for some constant $h'_\delta\in (0,1)$.

\begin{thm}\label{thm:l-vector-orth-bound}
    Assume $d-1=\delta n$ with some constant $\delta\in (0,1-\frac{1}{q^2})$. Let $\ell=\min\{\frac{\sqrt{a_{\delta} n}}{2},\sqrt{h_{\delta}\log_q2\cdot \frac{n}{2}}\}$ where $a_{\delta
}$ is defined in the proof of \cref{thm:l-vector-bound} and $h_{\delta}$ is defined in Lemma~\ref{lem:quantum-ball}. Let $\vec{v}_1,\ldots,\vec{v}_{\ell}$ be $\ell$ random vectors distributed uniformly at random in the symplectic ball $B_q^S(2n,\delta n)\subseteq \F_q^{2n}$ and they are mutually symplectic orthogonal. Then, the probability that $\sum_{i=1}^{\ell}\vec{v}_i\in B_q^S(2n,\delta n)$ is at most $2^{-h'_{\delta}n}$ for some constant $h'_\delta\in (0,1)$.
\end{thm}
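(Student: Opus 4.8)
The plan is to reduce this to the unconstrained statement Lemma~\ref{lem:quantum-ball} by a counting (averaging) argument, using Corollary~\ref{col:l-vector-bound} to guarantee that mutually symplectically orthogonal $\ell$-tuples are not too rare. Write $\mathcal{O}\subseteq B_q^S(2n,\delta n)^{\ell}$ for the set of $\ell$-tuples $(\vec{v}_1,\ldots,\vec{v}_\ell)$ that are pairwise symplectically orthogonal, so that the probability in the statement equals
\[
p \;=\; \frac{\bigl|\{(\vec{v}_1,\ldots,\vec{v}_\ell)\in\mathcal{O}:\ \sum_{i=1}^{\ell}\vec{v}_i\in B_q^S(2n,\delta n)\}\bigr|}{|\mathcal{O}|}.
\]
First I would bound the numerator by simply dropping the orthogonality constraint: the set can only grow, hence it is at most the number of all $\ell$-tuples in $B_q^S(2n,\delta n)^{\ell}$ whose sum lies in $B_q^S(2n,\delta n)$, which by Lemma~\ref{lem:quantum-ball} is at most $2^{-h_{\delta}n}\,\Vol_q^S(2n,\delta n)^{\ell}$.

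Next I would lower bound $|\mathcal{O}|$ via the estimate~\eqref{eq:mutual_orthogonal} from the proof of Corollary~\ref{col:l-vector-bound}, which is legitimate here because $d-1=\delta n$ and $\ell\le\frac{\sqrt{a_{\delta}n}}{2}$, so that \cref{thm:l-vector-bound} governs the size of each intersection $B_q^S(2n,\delta n)\cap\bigcap_{i=1}^{h}\langle\vec{v}_i\rangle^{\perp_S}$ for $h=1,\ldots,\ell$. Choosing $\vec{v}_{h+1}$ inside that intersection at each step gives
\[
|\mathcal{O}| \;\ge\; \prod_{r=1}^{\ell}\frac{1-q^{-\Omega(n)}}{q^{\,r-1}}\,\Vol_q^S(2n,\delta n) \;\ge\; \frac{\bigl(1-q^{-\Omega(n)}\bigr)\,\Vol_q^S(2n,\delta n)^{\ell}}{q^{\ell(\ell-1)/2}}.
\]
Dividing the two bounds yields $p\le \dfrac{2^{-h_{\delta}n}\,q^{\ell(\ell-1)/2}}{1-q^{-\Omega(n)}}$.

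Finally I would exploit the second term in the definition of $\ell$: from $\ell\le\sqrt{h_{\delta}\log_q 2\cdot\frac n2}$ we obtain $\frac{\ell(\ell-1)}{2}\le\frac{\ell^2}{2}\le\frac{h_{\delta}\log_q 2}{4}\,n$, hence $q^{\ell(\ell-1)/2}\le q^{(h_{\delta}\log_q 2)n/4}=2^{h_{\delta}n/4}$, so $p\le\frac{2^{-3h_{\delta}n/4}}{1-q^{-\Omega(n)}}\le 2^{-h_{\delta}n/2}$ for all sufficiently large $n$; taking $h'_{\delta}=h_{\delta}/2\in(0,1)$ then completes the proof (and $n\ge\ell\ge 2$ holds automatically for large $n$). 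The one delicate point is balancing the two quantities defining $\ell$: the term $\frac{\sqrt{a_{\delta}n}}{2}$ is exactly what keeps the intersection-size estimate of \cref{thm:l-vector-bound} valid all the way up to the $\ell$-th vector (so that $|\mathcal{O}|$ is large), while the term $\sqrt{h_{\delta}\log_q 2\cdot\frac n2}$ is precisely what keeps the ``entropy deficit'' $q^{\ell(\ell-1)/2}$ incurred by conditioning on mutual orthogonality strictly below the exponential saving $2^{-h_{\delta}n}$ supplied by Lemma~\ref{lem:quantum-ball}; beyond arranging this trade-off, no step poses a genuine obstacle.
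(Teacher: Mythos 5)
Your proposal is correct and follows essentially the same route as the paper's own proof: bound the number of ``bad'' $\ell$-tuples via Lemma~\ref{lem:quantum-ball} by dropping the orthogonality constraint, lower-bound the number of mutually symplectically orthogonal $\ell$-tuples by the product estimate from Corollary~\ref{col:l-vector-bound}, and use $\ell\le\sqrt{h_\delta\log_q 2\cdot\tfrac n2}$ to absorb the $q^{\ell(\ell-1)/2}$ loss into the $2^{-h_\delta n}$ saving. The final arithmetic ($q^{\ell(\ell-1)/2}\le 2^{h_\delta n/4}$) matches the paper's exactly.
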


\begin{proof}
    We define "bad $\ell$-vectors" set as 
    $$B=\{(\vec{v}_1,\ldots,\vec{v}_\ell)\in B_q^S(2n,d-1):\sum_{i=1}^{\ell}\vec{v}_i\in B_q^S(2n,d-1)\}.$$ By Lemma~\ref{lem:quantum-ball}, the size of "bad $\ell$-vectors" set is bounded by:
    \[
    |B|\leq \Vol_q^S(2n,d-1)^\ell\cdot 2^{-h_\delta n}.
    \]
    where the constant $h_\delta\in (0,1)$. By Corollary~\ref{col:l-vector-bound}, we conclude that there are at least
    \[
    \prod_{r=1}^\ell\frac{1-q^{-\Omega(n)}}{q^{r-1}}\Vol_q^S(2n,d-1)\geq \frac{(1-q^{-\Omega(n)})^{\ell-1}\cdot \Vol_q^S(2n,d-1)^\ell}{q^{\frac{\ell(\ell-1)}{2}}}
    \]
    mutually orthogonal $\ell$-tuples $(\vec{v}_1,\ldots,\vec{v}_\ell)$.
    \noindent
    As $\ell\leq \sqrt{h_\delta \log_q2\cdot \frac{n}{2}}$, we have
    \[
    \frac{q^{\frac{\ell(\ell-1)}{2}}}{(1-q^{-\Omega(n)})^{\ell-1}}\cdot 2^{-h_\delta n}\leq \frac{2^{\frac{h_\delta n}{4}}}{1-(\ell-1)q^{-\Omega(n)}}\cdot 2^{-h_\delta n}=2^{-h'_\delta n}
    \]
    for constant number $h'_\delta\in (0,1)$. Therefore, the probability that $\sum_{i=1}^\ell\vec{v}_i\in B_q^S(2n,\delta n)$ is at most $2^{-h'_{\delta}n}$ for some constant $h'_{\delta}\in (0,1)$.
\end{proof}

We are ready to present the improved quantum \emph{Gilbert-Varshamov} bound.

\begin{thm}\label{thm:quantum-GV-Bound}
    Let $d=\delta n$ for some $\delta\in (0,1-\frac{1}{q^2})$. There exists $[2n,2n-k,d]$-linear symplectic self-orthogonal dual code $\cC^{\perp_S}$ over $\F_q$ if
    \[
    \frac{q^{2n-k}-1}{q-1}<\frac{c_\delta \sqrt{n}\cdot q^{2n}}{\sum_{i=0}^{d-1}\binom{n}{i}(q^2-1)^i}
    \]
    for some constant $c_{\delta}>0$.
\end{thm}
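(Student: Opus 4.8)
The plan is to run the probabilistic argument of Section~4 over a uniformly random symplectic self-orthogonal $[2n,k]_q$ code $\cC$ (assume $1\le k\le n$, otherwise no such code exists) with dual $\cC^{\perp_S}$ of type $[2n,2n-k]_q$; fixing a uniformly random generator matrix $H$ of $\cC^{\perp_S}$ and writing $\vec c_{\vec m}=\vec m H$ for the nonzero codewords; and, by scaling-invariance of $\wt_S$, reducing to the set $W\subseteq\F_q^{2n-k}\setminus\{\vec 0\}$ of size $\frac{q^{2n-k}-1}{q-1}$ introduced in Section~5, so that it suffices to show $\Pr\bigl[\bigcup_{\vec m\in W}E_{\vec m}\bigr]<1$, where $E_{\vec m}=\{\wt_S(\vec c_{\vec m})\le d-1\}$. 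First I would record the single-codeword probability: since $\mathrm{Sp}_{2n}(\F_q)$ acts transitively on nonzero vectors and carries symplectic self-orthogonal $k$-codes to symplectic self-orthogonal $k$-codes, $\vec c_{\vec m}$ is uniform on $\F_q^{2n}\setminus\{\vec 0\}$, so $a:=\Pr[E_{\vec m}]=\frac{\Vol_q^S(2n,d-1)-1}{q^{2n}-1}$.

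Next I would fix an odd integer $t=\Theta(\sqrt n)$ not exceeding the parameter $\ell=\min\{\tfrac12\sqrt{a_\delta n},\,\sqrt{h_\delta\log_q 2\cdot n/2}\}$ of Corollary~\ref{col:l-vector-bound} and Theorem~\ref{thm:l-vector-orth-bound}, and apply the Bonferroni inequalities to bound $\Pr\bigl[\bigcup_{\vec m\in W}E_{\vec m}\bigr]$ by $\sum_{i=1}^t(-1)^{i-1}\sum_{X\in\binom{W}{i}}\Pr[E_X]$. The core is the estimate $\Pr[E_X]=a^{|X|}\bigl(1\pm q^{-\Omega(n)}\bigr)$ for linearly independent $X$ of size $r\le t$: writing $B_k$ for the number of symplectic self-orthogonal $k$-codes and $N(\vec v_1,\dots,\vec v_r)$ for the number of such $\cC$ with $\vec v_1,\dots,\vec v_r\in\cC^{\perp_S}$, one has $\Pr[E_X]=\bigl(B_k\prod_{j=0}^{r-1}(q^{2n-k}-q^j)\bigr)^{-1}\sum N(\vec v_1,\dots,\vec v_r)$, the sum over linearly independent $r$-tuples in $B_q^S(2n,d-1)$; since $\cC$ is forced into the radical of $\cC^{\perp_S}$, $N$ is maximized — by an exponential factor — on tuples spanning a totally isotropic subspace, and Corollary~\ref{col:l-vector-bound} both shows those tuples dominate the sum up to a $1\pm q^{-\Omega(n)}$ factor and pins down their number via the concentrated intersection volumes $|B_q^S(2n,d-1)\cap\langle\vec u_1\rangle^{\perp_S}\cap\cdots\cap\langle\vec u_r\rangle^{\perp_S}|=(1\pm q^{-\Omega(n)})q^{-r}\Vol_q^S(2n,d-1)$, which together give $\Pr[E_X]=a^{r}(1\pm q^{-\Omega(n)})$. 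For linearly dependent $X$ with maximal independent subset $Y=\{\vec x_1,\dots,\vec x_s\}$, the codeword $\vec c_{\vec x_r}=\sum_{i=1}^{s}\lambda_i\vec c_{\vec x_i}$ has at least two nonzero coefficients (no two elements of $W$ are proportional), and conditioned on $E_Y$ the $\vec c_{\vec x_i}$ are, up to $q^{-\Omega(n)}$, independent uniform points of $B_q^S(2n,d-1)$ that are mutually symplectically orthogonal; rescaling by the $\lambda_i$ preserves $\wt_S$, so Theorem~\ref{thm:l-vector-orth-bound} (the version of Lemma~\ref{lem:quantum-ball} for mutually symplectically orthogonal tuples) gives $\Pr[E_X]=\Pr[E_Y]\,\Pr[E_X\mid E_Y]\le a^{s}\bigl(2^{-h'_\delta n}+q^{-\Omega(n)}\bigr)$.

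From here the computation is line-for-line the one in Section~4. Grouping the dependent $X$ by their maximal independent subset and summing over $s\le i\le t$ — the truncation at $t=O(\sqrt n)$ keeps all binomial factors below $q^{t^2}=q^{O(n)}\ll 2^{h'_\delta n}$ — bounds each such block by $a^s(1+2^{-h'_\delta n/2})$; splitting $\binom{W}{i}$ into the linearly independent family $\mathcal X_i$ (with $|\mathcal X_i|\ge(1-q^{-(2n-k)+t})\binom{|W|}{i}$) and its complement, and inserting the two estimates above, collapses the alternating sum to $\Pr\bigl[\bigcup_{\vec m\in W}E_{\vec m}\bigr]\le\sum_{i=1}^t(-1)^{i-1}a^i\binom{|W|}{i}+\sum_{i=1}^t a^i\binom{|W|}{i}\,2^{-h''_\delta n+1}$ for a constant $h''_\delta\in(0,1)$. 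With $b:=|W|a$ and the Taylor bounds $e^b\ge\sum_{i=0}^t\tfrac{b^i}{i!}$, $e^{-b}\ge\sum_{i=0}^{t+1}\tfrac{(-b)^i}{i!}$, the right-hand side is at most $1+\tfrac{b^{t+1}}{(t+1)!}-e^{-b}+e^{b}\,2^{-h''_\delta n+1}$; taking $b=\tfrac{t+1}{e^2}=\Omega(\sqrt n)$, Stirling gives $\tfrac{b^{t+1}}{(t+1)!}\le e^{-(t+1)}<e^{-b}$ and $e^{b}2^{-h''_\delta n+1}=2^{O(\sqrt n)-\Omega(n)}\to 0$, so the bound is $<1$ for large $n$. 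Hence a suitable $\cC$ exists whenever $b=\tfrac{q^{2n-k}-1}{q-1}\cdot\tfrac{\Vol_q^S(2n,d-1)-1}{q^{2n}-1}\le\tfrac{t+1}{e^2}$, which, as $t=\Theta(\sqrt n)$ and $\Vol_q^S(2n,d-1)=\sum_{i=0}^{d-1}\binom{n}{i}(q^2-1)^i$, is exactly $\tfrac{q^{2n-k}-1}{q-1}<\tfrac{c_\delta\sqrt n\,q^{2n}}{\sum_{i=0}^{d-1}\binom{n}{i}(q^2-1)^i}$ for a constant $c_\delta>0$ depending only on $\delta$ (through $a_\delta,h_\delta,h'_\delta$).

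I expect the main obstacle to be exactly the two reductions feeding the term estimates. Unlike the Hamming setting, codewords of $\cC^{\perp_S}$ are genuinely correlated because $\cC$ must be self-orthogonal and hence lies in the radical of $\cC^{\perp_S}$; in the linearly-independent estimate two effects must be shown to cancel — conditioning on some small codewords forces the rest into their symplectic orthogonal complements (a factor-$q$ loss of freedom per vector), while the symplectic ball's volume inside those complements shrinks by exactly the same factor $q$ (Lemma~\ref{lem:one-vec-bound}, Theorem~\ref{thm:l-vector-bound}, Corollary~\ref{col:l-vector-bound}) — and for the dependent terms the anti-concentration input must be used in its mutually-orthogonal form (Theorem~\ref{thm:l-vector-orth-bound}) rather than the plain Lemma~\ref{lem:quantum-ball}. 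Forcing every error term to stay below $e^{-b}=e^{-\Omega(\sqrt n)}$ is what caps $t$, and therefore the improvement, at $\Theta(\sqrt n)$.
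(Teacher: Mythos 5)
Your proposal is correct and follows essentially the same route as the paper: Bonferroni truncation at $t=\Theta(\sqrt n)$, Corollary~\ref{col:l-vector-bound} to show the intersection volumes shrink by exactly the factor $q$ per conditioning (giving $\Pr[E_X]\approx a^{|X|}$ for independent $X$), Theorem~\ref{thm:l-vector-orth-bound} for the dependent terms, and the identical Taylor/Stirling endgame with $b=(t+1)/e^2$. The only cosmetic difference is that you justify the independent-term estimate by double-counting symplectic self-orthogonal codes containing a given tuple in their dual, while the paper chains conditional probabilities $\Pr[E_{\vec x_i}\mid E_{\{\vec x_1,\dots,\vec x_{i-1}\}}]\le(1+q^{-\Omega(n)})a$ directly from the same corollary; both reductions carry the same content.
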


\begin{proof}
    Let $\cC^{\perp_S}$ be a linear symplectic self-orthogonal dual code of length $2n$ and dimension $2n-k$. Let $\vec{c}_m$ be the codeword in $\cC^{\perp_S}$ with the index $\vec{m}\in \F_q^{2n-k}$. Recall $W\subseteq \F_q^{2n-k}\setminus\{\vec{0}\}$ of size $\frac{q^{2n-k}-1}{q-1}$ such that for any $\vec{x}\in \F_q^{2n-k}\setminus \{\vec{0}\}$, there exists a $\vec{y}\in W$ with $\vec{x}=\lambda \vec{y}$ for some nonzero $\lambda\in \F_q$. It suffices to prove that all random vector $\vec{c}_m$ for $\vec{m}\in W$ are of symplectic weight at least $d$. Recall that $E_{\vec{m}}$ is the event that a random vector $\vec{c}_m$ has symplectic weight less than $d$ and $E_X$ is the event that all random vectors $\vec{c}_m$ with $\vec{m}\in X$ have symplectic weight less than $d$. We want to bound the probability $\cup_{\vec{m}\in W}E_{\vec{m}}$. By Bonferroni inequalities, we have 
    \[
    \Pr\left[\bigcup_{\vec{m}\in W}E_{\vec{m}}\right]\leq \sum_{i=1}^{t}(-1)^{i-1}\sum_{X\in \binom{W}{i}}\Pr[E_X].
    \]
    for some odd integer $t=\min\{\frac{\sqrt{a_{\delta} n}}{2},\sqrt{h_{\delta}'\log_q2\cdot \frac{n}{2}}\}$ where $a_{\delta}$ is defined in the proof of \cref{thm:l-vector-bound} and $h_{\delta}'$ is defined in \cref{thm:l-vector-orth-bound}. 

    It is clear that $a:=\Pr[E_{\vec{m}}]=\frac{\Vol_q^S(2n,d-1)}{q^{2n}}=\frac{\sum_{i=0}^{d-1}\binom{n}{i}(q^2-1)^i}{q^{2n}}$. Let $\{\vec{x}_1,\ldots ,\vec{x}_r\}=X\subseteq W$ be any subset of size $r\leq t$, we proceed to bound the probability $E_{X}$. If all vectors in $X$ are linearly independent, we may assume that $\vec{c}_{x_1},\ldots, \vec{c}_{x_h}\in \cC$ and $\vec{c}_{x_{h+1}},\ldots,\vec{c}_{r}\in \cC^{\perp_S}\setminus \cC$ for some $1\leq h\leq r$. For $1\leq \ell\leq h$, we have $\vec{c}_{x_\ell}\in B_{q}^S(2n,d-1)\cap \langle \vec{c}_{x_1}\rangle^{\perp_S}\cap \ldots\cap\langle \vec{c}_{x_{\ell-1}}\rangle^{\perp_S}$, by Colloary~\ref{col:l-vector-bound},
    \[
    \Pr\left[E_{\{\vec{x}_1,\ldots,\vec{x}_\ell\}}\right]=\prod_{i=1}^{\ell}\Pr\left[E_{x_i}\mid E_{\{x_1,\ldots,x_{i-1}\}}\right]\leq \prod_{i=1}^{\ell}(1+q^{-\Omega(n)})\cdot a=(1+q^{-\Omega(n)})^\ell \cdot a^\ell.
    \]
    For $h+1\leq \ell\leq r$, we have $\vec{c}_{x_{\ell}}\in B_{q}^S(2n,d-1)\cap \langle \vec{c}_{x_1}\rangle^{\perp_S}\cap \ldots\cap\langle \vec{c}_{x_{h}}\rangle^{\perp_S}$ and $\vec{c}_{x_{h+1}},\ldots,\vec{c}_{x_\ell}$ are linearly independent. Thus, we apply Corolloary~\ref{col:l-vector-bound} to $\vec{c}_{x_{\ell}}$ and $\vec{c}_{x_1},\ldots, \vec{c}_{x_h}$ to obtain
    \[
    \begin{aligned}
        \Pr\left[E_{\vec{x}_{\ell}}|E_{\{\vec{x}_1,\ldots,\vec{x}_{\ell-1}\}}\right]
        &\leq \left(\frac{(1+q^{-\Omega(n)})}{q^{h}}\Vol(2n,d-1)\right)\cdot \frac{1}{q^{2n-h}-q^{\ell-h-1}} \\
        &\leq (1+q^{-\Omega(n)})\cdot a
    \end{aligned}
    \]
    It follows that
    \[
    \Pr\left[E_X\right]\leq \prod_{i=1}^{r}\Pr\left[E_{\vec{x}_{i}}\mid E_{\{\vec{x}_1,\ldots,\vec{x}_{i-1}\} } \right]\leq \prod_{i=1}^{r}(1+q^{-\Omega(n)})\cdot a=(1+q^{-\Omega(n)})^r\cdot a^r.
    \]
    We now proceed to the case that $\vec{x}_1,\ldots,\vec{x}_r$ are not linearly independent.
    Without loss of generality, we assume that $\vec{x}_1,\ldots,\vec{x}_s$ are the maximally linearly independent vectors in $X$. Then, by the \cref{thm:l-vector-orth-bound}, we have:
    \begin{align*}
    \Pr[E_X] &= \Pr\left[E_{\{\vec{x}_1, \ldots, \vec{x}_s\}}\right] \Pr\left[ E_X | E_{\{\vec{x}_1, \ldots, \vec{x}_s\}}\right] \\
    &\leq (1+q^{-\Omega(n)})^{s}\cdot a^s\cdot \Pr\left[ E_{\vec{x}_r} | E_{\{\vec{x}_1, \ldots, \vec{x}_s\}} \right] \\
    &\leq (1+q^{-\Omega(n)})^{s}\cdot a^s\cdot 2^{-h'_{\delta} n} \\
    &\leq (1+q^{-\Omega(n)})^{s}\cdot a^s\cdot 2^{-h'_{\delta} n}.
    \end{align*}
    Thus, for any linearly independent vectors $Y:=\{\vec{x}_1,\ldots\vec{x}_s\}$, we have the followings
    \begin{align*}
         \sum_{i=s}^t \sum_{\substack{Y \subseteq X \subseteq span\{Y\}, \\ |X| = i}} \Pr[E_X] &\leq (1+q^{-\Omega(n)})^{s}\cdot a^s + (1+q^{-\Omega(n)})^{s}\cdot a^s \sum_{i=s+1}^t \binom{q^s}{i} 2^{-h'_{\delta} n} \\
         &\leq (1+q^{-\Omega(n)})^{s}\cdot a^s\cdot  \left( 1 + \frac{q^{t^2+1}}{t!} 2^{-h'_{\delta} n} \right) \\
         &\leq (1+q^{-\Omega(n)})^{s}\cdot a^s\cdot (1 + 2^{-h'_{\delta}\cdot \frac{n}{2}}),
    \end{align*}
    as $\ell \leq \sqrt{h_{\delta}'\log_q2\cdot \frac{n}{2}}$.
    
    Let $\mathcal{X}_i \subseteq \binom{W}{i}$ be the collection of the set of vectors 
$\{\vec{x}_1, \ldots, \vec{x}_i\}$ that are linearly independent, and define 
$\bar{\mathcal{X}}_i = \binom{W}{i} \setminus \mathcal{X}_i$. Then, we have
\[
\sum_{i=1}^t (-1)^{i-1} \sum_{X \in \binom{W}{i}} \Pr[E_X]
= \sum_{i=1}^t (-1)^{i-1} \sum_{X \in \mathcal{X}_i} \Pr[E_X] 
+ \sum_{i=1}^t (-1)^{i-1} \sum_{X \in \bar{\mathcal{X}}_i} \Pr[E_X]
\]
\[
\leq \sum_{i=1}^{t}(-1)^{i-1}\sum_{X\in \mathcal{X}_i}\Pr[E_X]+\sum_{i=1}^{t}\sum_{X\in \bar{\mathcal{X}_i}}\Pr[E_X]
\leq \sum_{i=1}^t (-1)^{i-1} (1+q^{-\Omega(n)})^{i}\cdot a^i \sum_{X \in \mathcal{X}_i} 
\left( 1 + (-1)^i 2^{-h'_{\delta} n/2} \right).
\]
The size of $\mathcal{X}_i$ is at least 
\[
\frac{1}{i!} \prod_{j=1}^i \left(|W| - q^{j-1}\right).
\]
Thus, we have
\[
1\geq \frac{|\mathcal{X}_i|}{\binom{|W|}{i}} \ge 1 - \frac{\sum_{j=1}^i q^{j-1}}{|W|} \ge 1 - q^{-(2n-k)+t}.
\]
This implies that
\begin{align*}
(-1)^{i-1} a^i \sum_{X \in \mathcal{X}_i} \left( 1 + (-1)^{i-1} 2^{-h'_{\delta}\cdot \frac{n}{2}} \right) 
&=(1+q^{-\Omega(n)})^{i}\cdot a^i|\mathcal{X}_i|\left( 1 + (-1)^{i-1} 2^{-h'_{\delta}\cdot \frac{n}{2}} \right) \\
&\leq (1+q^{-\Omega(n)})^{i}\cdot a^i \binom{|W|}{i} (1 + 2^{-h'_{\delta}\cdot \frac{n}{2}}),
\end{align*}
for odd $i$ and
\begin{align*}
\MoveEqLeft
(-1)^{i-1} (1+q^{-\Omega(n)})^{i}\cdot a^i \sum_{X \in \mathcal{X}_i} \left( 1 + (-1)^{i-1} 2^{-h'_{\delta}\cdot \frac{n}{2}} \right) \\
&\leq - (1+q^{-\Omega(n)})^{i}a^i \binom{|W|}{i} + (1+q^{-\Omega(n)})^{i}\cdot a^i \binom{|W|}{i} (2^{-h'_{\delta}\cdot\frac{n}{2}} + 2^{-\frac{2n-k}{2}}) \\
&\leq - (1+q^{-\Omega(n)})^{i}\cdot a^i \binom{|W|}{i} + (1+q^{-\Omega(n)})^{i}\cdot a^i \binom{|W|}{i} 2^{-h_{\delta}^{*}\cdot \frac{n}{2} + 1}
\end{align*}
\noindent
for even $i$ where $h_{\delta}^{*}=max\{\frac{h'_{\delta}}{2},\frac{2n-k}{2n} \}$. Combining the above results, we obtain:
\begin{align*}
    \Pr\left[\bigcup_{\vec{m}\in W}E_{\vec{m}}\right]&\leq \sum_{i=1}^{t}(-1)^{i-1}(1+q^{-\Omega(n)})^{i}\cdot a^i\binom{|W|}{i}+\sum_{i=1}^{t}(1+q^{-\Omega(n)})^{i}\cdot a^i\binom{|W|}{i}2^{-h_{\delta}^{*}n+1} \\
    &\leq 1.01\left(\sum_{i=1}^{t}(-1)^{i-1}\cdot a^i\binom{|W|}{i}+\sum_{i=1}^{t} a^i\binom{|W|}{i}2^{-h_{\delta}^{*}n+1}\right)
\end{align*}
Define $b=|W|a$ and the right hand side becomes
\begin{align*}
    1.01\left(\sum_{i=1}^{t}(-1)^{i-1}\frac{b^i}{i!}+\sum_{i=1}^{t}\frac{b^i}{i!}2^{-h_{\delta}^{*}n+1}\right)
    &\leq1.01\left(1+\frac{b^{t+1}}{(t+1)!}-e^{-b}+e^b2^{-h_{\delta}^{*}n+1}\right).
\end{align*}
The inequality is due to
\[
e^{b} \geq \sum_{i=0}^{t}\frac{b^{i}}{i!}, \qquad
e^{-b} \geq \sum_{i=0}^{t+1} (-1)^{i}\frac{b^{i}}{i!}.
\]
We concludes the claim by choosing $b=\frac{t+1}{e^2}=\Omega(\sqrt{n})$.
\end{proof}

Combining Lemma~\ref{lem:construct-quantum-codes} and \cref{thm:quantum-GV-Bound}, we obtain a corollary on the existence of quantum codes.

\begin{cor}\label{cor:quantum-codes}
    Let $1\leq k\leq n$, $d=\delta n$ for some $\delta\in (0,1-\frac{1}{q^2})$. There exists a $q$-ary $[[n,n-k,d]]$ quantum code $\cC$ if
    \[
    \frac{q^{2n-k}-1}{q-1}<\frac{c_\delta \sqrt{n}\cdot q^{2n}}{\sum_{i=0}^{d-1}\binom{n}{i}(q^2-1)^i}
    \]
    for some constant $c_{\delta}>0$.
\end{cor}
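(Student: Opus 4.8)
The plan is to derive the corollary directly from the two results already established: Theorem~\ref{thm:quantum-GV-Bound}, which produces a suitable classical symplectic self-orthogonal code under exactly the numerical hypothesis stated here, and Lemma~\ref{lem:construct-quantum-codes} (Theorem~4 of \cite{ashikhmin2002nonbinary}), which converts such a code into a quantum code with the advertised parameters. No new estimates are needed; the argument is a matter of matching parameters and getting the duality direction right.

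First I would invoke Theorem~\ref{thm:quantum-GV-Bound}: under the assumption
\[
\frac{q^{2n-k}-1}{q-1}<\frac{c_\delta \sqrt{n}\cdot q^{2n}}{\sum_{i=0}^{d-1}\binom{n}{i}(q^2-1)^i},
\]
there exists a linear code $\cC\subseteq \F_q^{2n}$ with $\dim_{\F_q}\cC=k$ that is symplectic self-orthogonal, i.e. $\cC\subseteq \cC^{\perp_S}$, and whose symplectic dual $\cC^{\perp_S}$ is a $[2n,2n-k]_q$ code with $d_S(\cC^{\perp_S})\ge d$. In particular $\cC$ is a $q$-ary symplectic self-orthogonal $[2n,k]_q$ code. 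Then I would feed this $\cC$ into Lemma~\ref{lem:construct-quantum-codes}: since $\cC$ is symplectic self-orthogonal of length $2n$ and dimension $k$, there exists a $q$-ary $[[n,\,n-k,\,d']]$ quantum code with $d'=d_S(\cC^{\perp_S})\ge d$. A quantum code of distance $d'\ge d$ in particular realizes the existence of a $[[n,n-k,d]]$ quantum code in the Gilbert--Varshamov sense (distance at least $d$), which is exactly the claimed conclusion.

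I do not expect any real obstacle here; the only point deserving a word of care is the direction of the symplectic duality. The minimum distance relevant to the stabilizer quantum code is $d_S(\cC^{\perp_S})$, the distance of the \emph{dual} of the self-orthogonal code, rather than $d_S(\cC)$ itself, and this is precisely the quantity that Theorem~\ref{thm:quantum-GV-Bound} is phrased to control. The dimension count also dovetails: $\dim_{\F_q}\cC=k$ corresponds to the $n-k$ logical qudits appearing in Lemma~\ref{lem:construct-quantum-codes}, because $\dim_{\F_q}\cC+\dim_{\F_q}\cC^{\perp_S}=2n$. Since the numerical condition in Theorem~\ref{thm:quantum-GV-Bound} is verbatim the one assumed in the corollary, the two statements chain together with no loss, completing the proof.
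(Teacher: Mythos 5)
Your proposal is correct and matches the paper exactly: the paper's own justification is simply "Combining Lemma~\ref{lem:construct-quantum-codes} and Theorem~\ref{thm:quantum-GV-Bound}," which is precisely the chaining you carry out, and your added remarks on the duality direction and the dimension count $\dim_{\F_q}\cC + \dim_{\F_q}\cC^{\perp_S} = 2n$ are the right points to check.
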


\bibliographystyle{IEEEtran}
\bibliography{refs}  

@article{gilbert1952comparison,
  title={A comparison of signalling alphabets},
  author={Gilbert, Edgar N},
  journal={The Bell system technical journal},
  volume={31},
  number={3},
  pages={504--522},
  year={1952},
  publisher={Nokia Bell Labs}
}

@article{varshamov1957estimate,
  title={Estimate of the number of signals in error correcting codes},
  author={Varshamov, Rom Rubenovich},
  journal={Docklady Akad. Nauk, SSSR},
  volume={117},
  pages={739--741},
  year={1957}
}

@book{macwilliams1977theory,
  title={The theory of error-correcting codes},
  author={MacWilliams, Florence Jessie and Sloane, Neil James Alexander},
  volume={16},
  year={1977},
  publisher={Elsevier}
}

@book{pless1998handbook,
  title={Handbook of coding theory},
  author={Pless, Vera and Brualdi, Richard A and Huffman, William Cary},
  year={1998},
  publisher={Elsevier Science Inc.}
}

@article{gu1993generalized,
  title={A generalized Gilbert-Varshamov bound derived via analysis of a code-search algorithm},
  author={Gu, Jian and Fuja, Tom},
  journal={IEEE transactions on information theory},
  volume={39},
  number={3},
  pages={1089--1093},
  year={1993},
  publisher={IEEE}
}

@article{marcus2002improved,
  title={Improved Gilbert-Varshamov bound for constrained systems},
  author={Marcus, Brian H and Roth, Ron M},
  journal={IEEE transactions on information theory},
  volume={38},
  number={4},
  pages={1213--1221},
  year={2002},
  publisher={IEEE}
}

@article{tolhuizen2002generalized,
  title={The generalized gilbert-varshamov bound is implied by turan's theorem [code construction]},
  author={Tolhuizen, Ludo MGM},
  journal={IEEE Transactions on Information Theory},
  volume={43},
  number={5},
  pages={1605--1606},
  year={2002},
  publisher={IEEE}
}

@article{1982Modular,
  title={Modular curves, Shimura curves, and Goppa codes, better than Varshamov‐Gilbert bound},
  author={ Tsfasman, M. A.  and  Vldutx, S. G.  and  Zink, Th. },
  journal={Mathematische Nachrichten},
  year={1982},
}

@article{xing2003nonlinear,
  title={Nonlinear codes from algebraic curves improving the Tsfasman-Vladut-Zink bound},
  author={Xing, Chaoping},
  journal={IEEE Transactions on Information Theory},
  volume={49},
  number={7},
  pages={1653--1657},
  year={2003},
  publisher={IEEE}
}

@article{jiang2004asymptotic,
  title={Asymptotic improvement of the Gilbert-Varshamov bound on the size of binary codes},
  author={Jiang, Tao and Vardy, Alexander},
  journal={IEEE Transactions on Information Theory},
  volume={50},
  number={8},
  pages={1655--1664},
  year={2004},
  publisher={IEEE}
}

@book{tsfasman2013algebraic,
  title={Algebraic-geometric codes},
  author={Tsfasman, Michael and Vladut, Serge G},
  volume={58},
  year={2013},
  publisher={Springer Science \& Business Media}
}

@article{vu2005improving,
  title={Improving the Gilbert-Varshamov bound for q-ary codes},
  author={Vu, Van and Wu, Lei},
  journal={IEEE transactions on information theory},
  volume={51},
  number={9},
  pages={3200--3208},
  year={2005},
  publisher={IEEE}
}

@article{gaborit2008asymptotic,
  title={Asymptotic improvement of the Gilbert--Varshamov bound for linear codes},
  author={Gaborit, Philippe and Zemor, Gilles},
  journal={IEEE Transactions on Information Theory},
  volume={54},
  number={9},
  pages={3865--3872},
  year={2008},
  publisher={IEEE}
}

@article{hao2022distribution,
  title={Distribution of the minimum distance of random linear codes},
  author={Hao, Jing and Huang, Han and Livshyts, Galyna V and Tikhomirov, Konstantin},
  journal={IEEE Transactions on Information Theory},
  volume={68},
  number={10},
  pages={6388--6401},
  year={2022},
  publisher={IEEE}
}

@article{2002Quantum,
  title={Quantum Error Detection II: Bounds},
  author={ Ashikhmin, Alexei E.  and  Barg, Alexander M.  and  Knill, Emanuel  and  Litsyn, Simon N. },
  journal={IEEE Transactions on Information Theory},
  volume={46},
  number={3},
  pages={789-800},
  year={2002},
}

@article{ashikhmin2002nonbinary,
  title={Nonbinary quantum stabilizer codes},
  author={Ashikhmin, Alexei and Knill, Emanuel},
  journal={IEEE Transactions on Information Theory},
  volume={47},
  number={7},
  pages={3065--3072},
  year={2002},
  publisher={IEEE}
}

@article{feng2004finite,
  title={A finite Gilbert-Varshamov bound for pure stabilizer quantum codes},
  author={Feng, Keqin and Ma, Zhi},
  journal={IEEE transactions on information theory},
  volume={50},
  number={12},
  pages={3323--3325},
  year={2004},
  publisher={IEEE}
}

@inproceedings{jin2011quantum,
  title={Quantum Gilbert-Varshamov bound through symplectic self-orthogonal codes},
  author={Jin, Lingfei and Xing, Chaoping},
  booktitle={2011 IEEE International Symposium on Information Theory Proceedings},
  pages={455--458},
  year={2011},
  organization={IEEE}
}

@article{calderbank1998quantum,
  title={Quantum error correction via codes over GF (4)},
  author={Calderbank, A Robert and Rains, Eric M and Shor, Peter M and Sloane, Neil JA},
  journal={IEEE Transactions on Information Theory},
  volume={44},
  number={4},
  pages={1369--1387},
  year={1998},
  publisher={IEEE}
}

@article{rains2002nonbinary,
  title={Nonbinary quantum codes},
  author={Rains, Eric M},
  journal={IEEE Transactions on Information Theory},
  volume={45},
  number={6},
  pages={1827--1832},
  year={2002},
  publisher={IEEE}
}

@article{guruswami2012essential,
  title={Essential coding theory},
  author={Guruswami, Venkatesan and Rudra, Atri and Sudan, Madhu},
  journal={Draft available at http://www. cse. buffalo. edu/atri/courses/coding-theory/book},
  volume={2},
  number={1},
  year={2012}
}

@article{sendrier2004linear,
  title={Linear codes with complementary duals meet the Gilbert--Varshamov bound},
  author={Sendrier, Nicolas},
  journal={Discrete mathematics},
  volume={285},
  number={1-3},
  pages={345--347},
  year={2004},
  publisher={Elsevier}
}

@article{shum2002low,
  title={A low-complexity algorithm for the construction of algebraic-geometric codes better than the Gilbert-Varshamov bound},
  author={Shum, Kenneth W and Aleshnikov, Ilia and Kumar, P Vijay and Stichtenoth, Henning and Deolalikar, Vinay},
  journal={IEEE Transactions on Information Theory},
  volume={47},
  number={6},
  pages={2225--2241},
  year={2002},
  publisher={IEEE}
}

@article{kasami1974gilbert,
  title={A Gilbert-Varshamov bound for quasi-cycle codes of rate 1/2 (Corresp.)},
  author={Kasami, Tadao},
  journal={IEEE Transactions on Information Theory},
  volume={20},
  number={5},
  pages={679--679},
  year={1974},
  publisher={IEEE}
}

@article{brehm2025linear,
  title={Linear time encodable binary code achieving GV bound with linear time encodable dual achieving GV bound},
  author={Brehm, Martijn and Resch, Nicolas},
  journal={arXiv preprint arXiv:2509.07639},
  year={2025}
}

@article{gopi2018locally,
  title={Locally testable and locally correctable codes approaching the Gilbert-Varshamov bound},
  author={Gopi, Sivakanth and Kopparty, Swastik and Oliveira, Rafael and Ron-Zewi, Noga and Saraf, Shubhangi},
  journal={IEEE Transactions on Information Theory},
  volume={64},
  number={8},
  pages={5813--5831},
  year={2018},
  publisher={IEEE}
}

@article{ouyang2014concatenated,
  title={Concatenated quantum codes can attain the quantum Gilbert--Varshamov bound},
  author={Ouyang, Yingkai},
  journal={IEEE transactions on information theory},
  volume={60},
  number={6},
  pages={3117--3122},
  year={2014},
  publisher={IEEE}
}

@article{shor1995scheme,
  title={Scheme for reducing decoherence in quantum computer memory},
  author={Shor, Peter W},
  journal={Physical review A},
  volume={52},
  number={4},
  pages={R2493},
  year={1995},
  publisher={APS}
}

@article{steane1996error,
  title={Error correcting codes in quantum theory},
  author={Steane, Andrew M},
  journal={Physical Review Letters},
  volume={77},
  number={5},
  pages={793},
  year={1996},
  publisher={APS}
}

@article{aly2007quantum,
  title={On quantum and classical BCH codes},
  author={Aly, Salah A and Klappenecker, Andreas and Sarvepalli, Pradeep Kiran},
  journal={IEEE Transactions on Information Theory},
  volume={53},
  number={3},
  pages={1183--1188},
  year={2007},
  publisher={IEEE}
}

@article{bierbrauer2000quantum,
  title={Quantum twisted codes},
  author={Bierbrauer, J{\"u}rgen and Edel, Yves},
  journal={Journal of Combinatorial Designs},
  volume={8},
  number={3},
  pages={174--188},
  year={2000},
  publisher={Wiley Online Library}
}

@article{grassl1999quantum,
  title={Quantum BCH codes},
  author={Grassl, Markus and Beth, Thomas},
  journal={arXiv preprint quant-ph/9910060},
  year={1999}
}

@article{ketkar2006nonbinary,
  title={Nonbinary stabilizer codes over finite fields},
  author={Ketkar, Avanti and Klappenecker, Andreas and Kumar, Santosh and Sarvepalli, Pradeep Kiran},
  journal={IEEE transactions on information theory},
  volume={52},
  number={11},
  pages={4892--4914},
  year={2006},
  publisher={IEEE}
}

@article{niehage2007nonbinary,
  title={Nonbinary quantum Goppa codes exceeding the quantum Gilbert-Varshamov bound},
  author={Niehage, Annika},
  journal={Quantum Information Processing},
  volume={6},
  number={3},
  pages={143--158},
  year={2007},
  publisher={Springer}
}

@article{feng2006asymptotic,
  title={Asymptotic bounds on quantum codes from algebraic geometry codes},
  author={Feng, Keqin and Ling, San and Xing, Chaoping},
  journal={IEEE transactions on information theory},
  volume={52},
  number={3},
  pages={986--991},
  year={2006},
  publisher={IEEE}
}

@article{guan2022construction,
  title={On construction of quantum codes with dual-containing quasi-cyclic codes},
  author={Guan, Chaofeng and Li, Ruihu and Lu, Liangdong and Liu, Yang and Song, Hao},
  journal={arXiv preprint arXiv:2210.08716},
  year={2022}
}

@article{knill1998resilient,
  title={Resilient quantum computation},
  author={Knill, Emanuel and Laflamme, Raymond and Zurek, Wojciech H},
  journal={Science},
  volume={279},
  number={5349},
  pages={342--345},
  year={1998},
  publisher={American Association for the Advancement of Science}
}

@article{kitaev2003fault,
  title={Fault-tolerant quantum computation by anyons},
  author={Kitaev, A Yu},
  journal={Annals of physics},
  volume={303},
  number={1},
  pages={2--30},
  year={2003},
  publisher={Elsevier}
}

@article{terhal2015quantum,
  title={Quantum error correction for quantum memories},
  author={Terhal, Barbara M},
  journal={Reviews of Modern Physics},
  volume={87},
  number={2},
  pages={307--346},
  year={2015},
  publisher={APS}
}

@article{bombin2006topological,
  title={Topological quantum distillation},
  author={Bombin, Hector and Martin-Delgado, Miguel Angel},
  journal={Physical review letters},
  volume={97},
  number={18},
  pages={180501},
  year={2006},
  publisher={APS}
}

@inproceedings{panteleev2022asymptotically,
  title={Asymptotically good quantum and locally testable classical LDPC codes},
  author={Panteleev, Pavel and Kalachev, Gleb},
  booktitle={Proceedings of the 54th annual ACM SIGACT symposium on theory of computing},
  pages={375--388},
  year={2022}
}

@inproceedings{leverrier2022quantum,
  title={Quantum tanner codes},
  author={Leverrier, Anthony and Z{\'e}mor, Gilles},
  booktitle={2022 IEEE 63rd Annual Symposium on Foundations of Computer Science (FOCS)},
  pages={872--883},
  year={2022},
  organization={IEEE}
}

@article{cao2021mds,
  title={MDS codes with Galois hulls of arbitrary dimensions and the related entanglement-assisted quantum error correction},
  author={Cao, Meng},
  journal={IEEE Transactions on Information Theory},
  volume={67},
  number={12},
  pages={7964--7984},
  year={2021},
  publisher={IEEE}
}

@article{movassagh2020constructing,
  title={Constructing quantum codes from any classical code and their embedding in ground space of local hamiltonians},
  author={Movassagh, Ramis and Ouyang, Yingkai},
  journal={arXiv preprint arXiv:2012.01453},
  year={2020}
}

\appendix

\section{Basic Form of the Quantum Gilbert-Varshamov Bound}\label{thm:basic-form-GVbound}
\begin{thm}
    Let $d=\delta n$ for some $\delta \in (0,1-\frac{1}{q^2})$. There exists a $q$-ary $[[n,n-k,d]]$ quantum code $\cC$ if 
    \[
    q^{2n-k}-1<\frac{q^{2n}}{\sum_{i=0}^{d-1}\binom{n}{i}(q^2-1)^i}.
    \]
\end{thm}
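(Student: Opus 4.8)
The plan is to prove this basic quantum Gilbert--Varshamov bound by a first--moment argument over random symplectic self--orthogonal $[2n,k]_q$ codes, which reproduces the stated volume bound with no loss in the constant. First I would reduce to a classical statement: by Lemma~\ref{lem:construct-quantum-codes} it suffices to exhibit a symplectic self--orthogonal $[2n,k]_q$ code $\cC$ whose symplectic dual $\cC^{\perp_S}$ satisfies $d_S(\cC^{\perp_S})\ge d$, since such a $\cC$ yields a $q$--ary $[[n,\,n-k,\,d']]$ quantum code with $d'=d_S(\cC^{\perp_S})\ge d$. To build $\cC$ at random I would sample $\vec{v}_1$ uniformly from $\F_q^{2n}\setminus\{\vec{0}\}$ and, having fixed $\vec{v}_1,\dots,\vec{v}_j$, sample $\vec{v}_{j+1}$ uniformly from $\langle\vec{v}_1,\dots,\vec{v}_j\rangle^{\perp_S}\setminus\langle\vec{v}_1,\dots,\vec{v}_j\rangle$ (a set of size $q^{2n-j}-q^j>0$, since $j\le k-1<n$), and put $\cC=\langle\vec{v}_1,\dots,\vec{v}_k\rangle$. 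As $\vec{v}_{j+1}\perp_S\vec{v}_i$ for $i\le j$ and every vector is symplectically self--orthogonal, $\cC$ is symplectic self--orthogonal of dimension exactly $k$ (and in fact uniform among such codes).

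The heart of the proof is the identity $\Pr[\vec{x}\in\cC^{\perp_S}]=\frac{q^{2n-k}-1}{q^{2n}-1}$ for every fixed nonzero $\vec{x}\in\F_q^{2n}$, which I would prove by a short induction on $k$. Writing $C_j=\langle\vec{v}_1,\dots,\vec{v}_j\rangle$, one conditions on $C_j\subseteq\vec{x}^{\perp_S}$ and splits according to whether $\vec{x}\in C_j$ (then $\vec{v}_{j+1}\perp_S\vec{x}$ automatically) or $\vec{x}\notin C_j$ (then the admissible $\vec{v}_{j+1}$ are those in $\langle C_j,\vec{x}\rangle^{\perp_S}\setminus C_j$, since $C_j^{\perp_S}\cap\vec{x}^{\perp_S}=\langle C_j,\vec{x}\rangle^{\perp_S}$ has dimension $2n-j-1$); the two branches recombine to give $\Pr[C_{j+1}\subseteq\vec{x}^{\perp_S}]=\frac{q^{2n-j-1}-1}{q^{2n-j}-1}\,\Pr[C_j\subseteq\vec{x}^{\perp_S}]$, which telescopes to the claimed value. (Equivalently, because every nonzero vector of $\F_q^{2n}$ is isotropic and the $\mathrm{Sp}_{2n}(\F_q)$--action on $\F_q^{2n}\setminus\{\vec{0}\}$ is transitive, this probability is $\vec{x}$--independent and equals a ratio of basis counts with the same value.) Granting this, let $X$ be the number of nonzero vectors of symplectic weight at most $d-1$ lying in $\cC^{\perp_S}$; by linearity of expectation
\[
\E[X]=\bigl(\Vol_q^S(2n,d-1)-1\bigr)\cdot\frac{q^{2n-k}-1}{q^{2n}-1}.
\]
Under the hypothesis $q^{2n-k}-1<q^{2n}/\Vol_q^S(2n,d-1)$, that is $(q^{2n-k}-1)\Vol_q^S(2n,d-1)<q^{2n}$, one gets $(\Vol_q^S(2n,d-1)-1)(q^{2n-k}-1)<q^{2n}-(q^{2n-k}-1)\le q^{2n}-1$ (using $q^{2n-k}\ge 2$), so $\E[X]<1$; hence some realization has $X=0$, i.e.\ $d_S(\cC^{\perp_S})\ge d$. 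Combined with the reduction above and the identity $\Vol_q^S(2n,d-1)=\sum_{i=0}^{d-1}\binom{n}{i}(q^2-1)^i$, this is exactly the asserted statement.

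The step I expect to be the main (if modest) obstacle is the exact evaluation $\Pr[\vec{x}\in\cC^{\perp_S}]=\frac{q^{2n-k}-1}{q^{2n}-1}$: one must carry the case--split on whether $\vec{x}$ has already been absorbed into the partial span $C_j$ and verify that the two branches recombine into the clean ratio at every step; the remaining first--moment bookkeeping is routine. A greedy Varshamov--type construction that adjoins one generator at a time is also possible, but since choosing a generator amounts to choosing a $k\times 2$ block of a symplectic parity matrix, that route loses a multiplicative factor of order $q^2$ and falls short of the stated bound, so the averaging argument is the one to use.
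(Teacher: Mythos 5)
Your proof is correct, and it reaches the bound by a genuinely different (and in fact more careful) decomposition than the paper's. The paper also runs a first-moment argument, but it sums over the $q^{2n-k}-1$ nonzero messages $\vec{m}$ and bounds $\Pr[\wt_S(\vec{c}_m)<d]$ by $\Vol_q^S(2n,d-1)/q^{2n}$, implicitly treating each nonzero codeword of the random self-orthogonal dual code as uniform over $\F_q^{2n}$ --- a fact it does not justify (and which is only true because every nonzero vector is isotropic and $\mathrm{Sp}_{2n}(\F_q)$ acts transitively on $\F_q^{2n}\setminus\{\vec{0}\}$). You instead sum over the $\Vol_q^S(2n,d-1)-1$ nonzero vectors of the symplectic ball and compute the exact containment probability $\Pr[\vec{x}\in\cC^{\perp_S}]=\frac{q^{2n-k}-1}{q^{2n}-1}$ for a uniformly random $k$-dimensional symplectic self-orthogonal $\cC$; your recursion does recombine as claimed (the branch $\vec{x}\in C_j$ contributes $\frac{q^j-1}{q^{2n}-1}$ and the branch $\vec{x}\notin C_j$ contributes $\frac{q^{2n-j}-q^j}{q^{2n}-1}\cdot\frac{q^{2n-j-1}-q^j}{q^{2n-j}-q^j}$, summing to $\frac{q^{2n-j-1}-1}{q^{2n}-1}$), and the transitivity argument you mention is an even quicker substitute. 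What your route buys is rigor: the random model is specified explicitly, the key probability is proved rather than asserted, and the slightly smaller denominator $q^{2n}-1$ together with the $-1$ in the ball count absorbs cleanly into the stated hypothesis, as you check. What the paper's route buys is that it sets up exactly the machinery (events $E_{\vec{m}}$ indexed by messages, Bonferroni refinements) that the improved bounds in the body of the paper build on, whereas your vector-counting decomposition would not extend as directly to the higher-order inclusion--exclusion terms.
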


\begin{proof}
Let $\cC^{\perp_S}$ be a linear symplectic self-orthogonal dual code of length $2n$ and dimension $2n-k$. Let $\vec{c}_m$ be the codeword in $\cC^{\perp_S}$ with the index $\vec{m}\in \F_q^{2n-k}$. As each codeword is symplectically self-orthogonal, it suffices to prove that all random vector $\vec{c}_m$ for $\vec{m}\in \F_q^{2n-k}\setminus \{\vec{0}\}$ has symplectic weight at least $d$. Let $E_{\vec{m}}$ denote the event that a random vector $\vec{c}_m$ has symplectic weight less than $d$. By the union bound, we have
\[
\Pr\left[\bigcup_{\vec{m}\in \F_q^{2n-k}\setminus\{\vec{0}\}}E_{\vec{m}}\right]\leq \sum_{\vec{m}\in \F_q^{2n-k}\setminus \{\vec{0}\}}\Pr[E_{\vec{m}}]=(q^{2n-k}-1)\cdot \frac{\sum_{i=0}^{d-1}\binom{n}{i}(q^2-1)^i}{q^{2n}}.
\]
Consequently, if
\[
(q^{2n-k}-1)\cdot \frac{\sum_{i=0}^{d-1}\binom{n}{i}(q^2-1)^i}{q^{2n}}<1,
\]
there exists $[2n,2n-k,d]$-linear symplectic self-orthogonal dual code $\cC^{\perp_S}$ over $\F_q$. By the Lemma~\ref{lem:construct-quantum-codes}, there exists a $q$-ary $[[n,n-k,d]]$ quantum code $\cC$.
\end{proof}

\section{Proof of Lemma~\ref{lem:quantum-ball}}\label{proof:l-linear-combination}
\begin{proof}
    By \cref{lem:quantum volume}, the number of codeword with weight $r$ is $\binom{n}{r}(q^2-1)^r=2^{\Theta(H_{q^2}(\frac{r}{n}))}$ where the function $H_{q^2}(x)$ is monotone increasing in the domain $x\in (0,1-\frac{1}{q^2})$. Let $\varepsilon>0$, this implies that if $\vec{v}$ is a random vector distributed uniformly at random in $B_q^S(2n,\delta n)$, the probability that $\wt_S(\vec{v})$ is less than $(\delta -\varepsilon)n$ is at most $2^{-a_{\delta}n}$ for some constant $a_\delta\in (0,1)$. We prove our claim by induction. Assume $\ell=2$ and we bound the probability that $\wt_S(\vec{v}_1+\vec{v}_2<\delta n)$. With probability at most $2^{-a_{p}n+1}$, the weight of $\vec{v}_1$ and $\vec{v}_2$ are at most $(\delta-\varepsilon)n$. It follows that
    \begin{equation}
    \label{eq:quantum two vector}
    \begin{aligned}
    \Pr[\wt_S(\vec{v}_1 + \vec{v}_2) < \delta n] 
    &\leq 2 \Pr[\wt_S(\vec{v}_1) < (\delta - \varepsilon)n] \\
    &\quad + \left( \Pr[\wt_S(\vec{v}_i) \geq (\delta - \varepsilon)n,\, i \in [2]] \right) \\
    &\quad \cdot \left( \Pr[\wt_S(\vec{v}_1 + \vec{v}_2) < \delta n 
    \mid \wt_S(\vec{v}_i) \geq (\delta - \varepsilon)n,\, i \in [2]] \right).
    \end{aligned}
    \end{equation}
    Now, we suppose that $\wt_S(\vec{v}_1)=\delta_1n,\ \wt_S(\vec{v}_2)=\delta_2n,\ (\delta-\varepsilon)\leq \delta_1,\delta_2\leq \delta$. Let $\vec{v}_i=(a_{i,1},\ldots,a_{i,n} \mid b_{i,1},\ldots,b_{i,n})$, consider the same probability when each coordinate $(a_{i,j},b_{i,j})$ of $\vec{v}_i$ is chosen to be non-zero element $(0,0)$ with probability $\delta_i/n$ independently. Thus, the expectation of the symplectic weight at the $j$-th coordinate $(a_{i,j},b_{i,j})$ of $\vec{v}_i$ is given by
    \[
    \mathbb{E}[\wt_S(\vec{v}_i)_j]=\delta_i,\ for\ 1\leq j\leq n,
    \]
    where $\wt_S((\vec{v}_i)_j)=1$ if and only if $(a_{i,j},b_{i,j})\neq (0,0)$. \\
    Therefore, the expectation of $\wt_S(\vec{v}_1+\vec{v}_2)_j$ can be computed as:
    \[
    \begin{aligned}
        \E(\wt_S(\vec{v}_1+\vec{v}_2)_j)&=\delta_1(1-\delta_2)+\delta_2(1-\delta_1)+\frac{q^2-2}{q^2-1}\delta_1\delta_2 \\
        &=\delta_1+\delta_2-\frac{q^2}{q^2-1}\delta_1\delta_2 \\
        &\geq 2(\delta-\varepsilon)-\frac{q^2}{q^2-1}(\delta-\varepsilon)^2.
    \end{aligned}
    \]
    As for any $\delta\in (0,1-\frac{1}{q^2})$, $2\delta-\frac{q^2}{q^2-1}\delta^2>\delta$, there exists some constant $\varepsilon>0$ such that
    \[
    \E(\wt_S(\vec{v}_1+\vec{v}_2)_j)\geq 2(\delta-\varepsilon)-\frac{q^2}{q^2-1}(\delta-\varepsilon)^2\geq (\delta+\varepsilon).
    \]
    Since $\mathbb{E}[\wt_S(\vec{v}_1+\vec{v}_2)]=\sum_{j=1}^{n}\mathbb{E}[\wt_S(\vec{v}_1+\vec{v}_2)_j]\geq (\delta+\varepsilon)n$, by \cref{Chernoff}, the probability that $\wt_S(\vec{v}_1+\vec{v}_2)$ is less than $\delta n$ is at most $2^{-b_{\delta}n}$ for some constant $b_\delta\in (0,1)$. We take $c_\delta=min\{a_\delta, b_\delta \}$ and it follows that $\cref{eq:quantum two vector}$ can be proved to be smaller than $2^{-c_{\delta}n}$.
    
    Now we use the induction to prove that the probability of $\wt_S(\sum_{i=1}^{\ell}\vec{v}_i)<\delta n$ is at most $2^{-h_{\delta}'n}$. Assume this holds for $\ell-1$, by induction, we can suppose that the probability of $\wt_S(\sum_{i=1}^{\ell-1}\vec{v}_i)\geq (\delta-\varepsilon)n$ is at least $1-2^{-h_{\delta}n}$. From above argument, we also know that with probability at most $2^{-c_{\delta}n}$, $\wt_S(\vec{v}_\ell)\leq (\delta-\varepsilon)n$. Then, we fix the vector $\vec{v}:=\sum_{i=1}^{\ell-1}\vec{v}_{i}$ and apply the $\ell=2$ argument to $\vec{v}+\vec{v}_\ell$ conditioning that both of the vector has weight at least $(\delta-\varepsilon)n$. Note that the expectation $\mathbb{E}[\wt_S(\vec{v})]\geq (\delta-\varepsilon)n$, following the derivation in the case of $\ell=2$, we obtain the corresponding result that the probability of $\wt_S(\vec{v}+\vec{v}_\ell<\delta n)$ is at most $2^{-c_{\delta}n}$. By union bound, we complete the induction and there exists $h_\delta'\in (0,1)$ such that $\Pr[\wt_S(\sum_{i=1}^{\ell}\vec{v}_i)<\delta n]\leq 2^{-h_{\delta}'n}$. The proof is completed.
\end{proof}

\end{document}